\documentclass[manuscript]{acmart}

\usepackage{booktabs} 

\usepackage{graphicx}
\usepackage{epsfig}
\usepackage{amssymb}
\usepackage{amsmath}
\usepackage{amsfonts}
\usepackage{array}
\usepackage{booktabs}
\usepackage{algorithm}
\usepackage{algorithmic}
\usepackage{multirow}
\usepackage{multicol}
\usepackage{subfigure}
\usepackage{color}
\usepackage{tcolorbox}
\usepackage{enumerate}
\usepackage{diagbox}


\acmJournal{TKDD}
\acmVolume{9}
\acmNumber{4}
\acmArticle{39}
\acmYear{2010}
\acmMonth{3}
\acmArticleSeq{11}


\setcopyright{usgovmixed}

\acmDOI{0000001.0000001}

\received{xxxx}
\received{xxxx}
\received[accepted]{xxxx}

\begin{document}
\title{Recurrent Meta-Structure for Robust Similarity Measure in Heterogeneous Information Networks}

\author{Yu Zhou}
\affiliation{%
  \institution{School of Computer Science and Technology, Xidian University}
  \city{Xi'an}
  \state{Shaanxi}
  \country{China}}
\email{peterjone85@hotmail.com}

\author{Jianbin Huang}
\authornote{This is the corresponding author}
\affiliation{%
  \institution{School of Computer Science and Technology, Xidian University}
  \city{Xi'an}
  \state{Shaanxi}
  \country{China}
}
\email{jbhuang@xidian.edu.cn}

\author{Heli Sun}
\affiliation{%
 \institution{Department of Computer Science and Technology, Xi'an Jiaotong University}
 \city{Xi'an}
 \state{Shaanxi}
 \country{China}}
\email{hlsun@mail.xjtu.edu.cn}

\author{Yizhou Sun}
\affiliation{%
  \institution{Deptartment of Computer Science, University of California at Los Angeles}
  \city{Los Angeles}
  \state{California}
  \country{USA}
}
\email{yzsun@cs.ucla.edu}

\begin{abstract}
  Similarity measure as a fundamental task in heterogeneous information network analysis has been applied to many areas, e.g., product recommendation, clustering and Web search.
  Most of the existing metrics depend on the meta-path or meta-structure specified by users in advance. These metrics are thus sensitive to the pre-specified meta-path or meta-structure.
  In this paper, a novel similarity measure in heterogeneous information networks, called Recurrent Meta-Structure-based Similarity (RMSS), is proposed.
  The recurrent meta-structure as a schematic structure in heterogeneous information networks provides a unified framework to integrate all of the meta-paths and meta-structures.
  Therefore, RMSS is robust to the meta-paths and meta-structures. We devise an approach to automatically constructing the recurrent meta-structure.
  In order to formalize the semantics, the recurrent meta-structure is decomposed into several recurrent meta-paths and recurrent meta-trees,
  and we then define the commuting matrices of the recurrent meta-paths and meta-trees. All of the commuting matrices of the recurrent meta-paths and meta-trees are combined according to different weights.
  Note that the weights can be determined by two kinds of weighting strategies: local weighting strategy and global weighting strategy.
  As a result, RMSS is defined by virtue of the final commuting matrix.
  Experimental evaluations show that the existing metrics are sensitive to different meta-paths or meta-structures and that the proposed RMSS outperforms the existing metrics in terms of ranking and clustering tasks.
\end{abstract}

%
%

\begin{CCSXML}
<ccs2012>
<concept>
<concept_id>10002951.10003227.10003351</concept_id>
<concept_desc>Information systems~Data mining</concept_desc>
<concept_significance>500</concept_significance>
</concept>
<concept>
<concept_id>10002950.10003624.10003633.10010917</concept_id>
<concept_desc>Mathematics of computing~Graph algorithms</concept_desc>
<concept_significance>300</concept_significance>
</concept>
</ccs2012>
\end{CCSXML}

\ccsdesc[500]{Information systems~Data mining}
\ccsdesc[300]{Mathematics of computing~Graph algorithms}

%
%

\keywords{Heterogeneous Information Network, Similarity, Schematic Structure, Meta Path, Meta Structure}


\maketitle

\renewcommand{\shortauthors}{Yu Zhou, Jianbin Huang, Heli Sun, Yizhou Sun.}

\section{Introduction}\label{sec:introduction}

As well known, networks can be used to model many real systems such as biological systems and social medium.
As a result, network analysis becomes a hot research topic in the field of data mining.
Many researchers are concerned with information networks with single-typed components, the kind of which is called homogeneous information network.
However, the real information networks usually consist of interconnected and multi-typed components.
This kind of information networks is generally called Heterogeneous Information Networks (HIN).
Mining heterogeneous information networks has attracted many attentions of the researchers.

Measuring the similarity between objects plays fundamental and essential roles in heterogeneous information network mining tasks.
Most of the existing metrics depend on user-specified meta-paths or meta-structures.
For example, PathSim \cite{SHYYW:2011} and Biased Path Constrained Random Walk (BPCRW) \cite{LC:2010a,LC:2010b}
take a meta-path specified by users as input,
and Biased Structure Constrained Subgraph Expansion (BSCSE) \cite{HZCSML:2016} takes a meta-structure specified by users as input.
We investigate these metrics in depth, and discover that they are sensitive to the pre-specified meta-paths or meta-structures in some degree.
The sensitivity requires that the users must know how to select an appropriate meta-path or meta-structure.
Obviously, it is quite difficult for a non-proficient users to make the selection.
For example, a biological information network may contain many different types of objects \cite{CDJWZ:2010,FDSCSB:2016}.
It is hard for a new user to know which meta-paths or meta-structures are appropriate.
In addition, the meta-paths can only capture biased and relatively simple semantics according to literature \cite{HZCSML:2016}.
Therefore, the authors proposed the meta-structure in order to capture more complex semantics.
In fact, the meta-structure can only capture biased semantics as well.
The meta-paths and meta-structures are essentially two kinds of schematic structures.

In this paper, we are concerned with the robust semantic-rich similarity between objects in heterogeneous information networks.
We are inspired by the construction of the subtree pattern proposed in \cite{NPEKK:2011}. In essence, the subtree pattern is a quasi spanning tree of a graph.
The difference between the traditional spanning tree and the subtree pattern lies in that nodes can be re-visited in the process of traversing the graphs.
That means that we can construct a schematic structure by repetitively visiting the object types in the process of traversing the network schema of the HIN.
Obviously, this schematic structure, called  Recurrent Meta Structure (RecurMS), can be constructed automatically.
In addition, it can capture rich semantics because it is composed of many recurrent meta-paths and recurrent meta-trees.

Both the meta-path and meta-structure are essentially two kinds of composite relations because they are composed of object types with different layer labels.
The commuting matrices of the meta-path and meta-structure are employed to extract the semantics encapsulated in them.
In essence, the proposed RecurMS has the same property as the meta-path and meta-structure because all of them have hierarchical structures.
Therefore, the commuting matrix can be employed here to extract the semantics encapsulated in the RecurMS.
The structure of RecurMS has such strong restrictions on the object types that the similarity only between the same objects is nonzero and between the different objects is zero.
That is, the object types are coupled tightly.
To decouple the object types, we decompose the proposed schematic structure into different recurrent meta-paths and recurrent meta-trees,
and then define the commuting matrices of the recurrent meta-paths and meta-trees as similar to the ones of the meta-paths and meta-structures.
As a result, the Recurrent Meta-Structure-based Similarity (RMSS) is defined as the weighted summation of all these commuting matrices.
The proposed RMSS is robust to different schematic structures, i.e., meta-paths or meta-structures,
because its structure integrates all the possible meta-paths and meta-structures.
To evaluate the importance of different recurrent meta-paths and meta-structures, two kinds of weighting strategies, local weighting strategy and global weighting strategy, are proposed.
The weighting strategies consider the sparsity and strength of different recurrent meta-paths and meta-trees in the HIN.
The experimental evaluations on three real datasets reveals that the existing metrics are sensitive to different meta-paths or meta-structures,
and that the proposed RMSS outperforms the existing metrics in terms of ranking and clustering tasks.

The main contributions are summarized as follows.
1) We propose the recurrent meta-structure which combines all the meta-paths and meta-structures. The RecurMS can be constructed automatically.
In order to decouple the object types, the RecurMS is decomposed into several recurrent meta-paths and meta-trees;
2) We define the commuting matrices of the recurrent meta-paths and meta-trees, and propose two kinds of weighting strategies to determine the weights of different recurrent meta-paths and meta-structures.
The proposed robust RMSS is defined by the weighted summation of all these commuting matrices.
3) The experimental evaluations reveal the proposed RMSS outperforms the baselines in terms of ranking and clustering tasks and that the existing metrics are sensitive to different meta-paths and meta-structures.

The rest of the paper is organized as follows. Section \ref{sec:relatedwork} introduces related works.
Section \ref{sec:preliminaries} provides some preliminaries on HINs.
Section \ref{sec:DeepMSDetectDecomp} provides an approach to decomposing the recurrent meta-structure into several recurrent meta-paths and recurrent meta-trees.
Section \ref{sec:deepmetastructure} introduces the definition of RMSS.
The experimental evaluations are introduced in section \ref{sec:expriment}. The conclusion is introduced in section \ref{sec:conclusion}.

\section{Related Work}\label{sec:relatedwork}

The similarity measure plays fundamental roles in the field of network analysis, and can be applied to many areas, e.g., clustering, recommendation, Web search etc.
At the beginning, only the feature based similarity measures were proposed, e.g., Cosine similarity, Jaccard coefficient, Euclidean distance and Minkowski distance \cite{HKP:2012}.
However, the feature based similarity measures ignored the link information in networks.
Afterwards, researchers realized the importance of the links in measuring the similarities between vertices, and proposed the link-based similarity measures \cite{JW:2002,JW:2003,SLZSY:2017}.
Article \cite{JW:2002} proposed a general similarity measure $SimRank$ combining the link information. The $SimRank$ argued that two similar objects must relate to similar objects.
Article \cite{ASCOSPLUSPLUS:2015} discovered that the SimRank in homogeneous networks and its families failed to capture similar node pairs in certain conditions.
Therefore, the authors proposed new similarity measures ASCOS and ASCOS++ to address the above problem.
Article \cite{JW:2003} evaluated the similarities of objects by a random walk model with restart.
In article \cite{LinkPred:2017}, the authors summarized the off-the-shelf works on the link prediction including many state-of-the-art similarity measures in homogeneous information networks.
Article \cite{WSSHSWZ:2016} proposed a socialized word embedding algorithm integrating user's personal characteristics and user's social relationship on social media.
Literature \cite{RelationalRetrievalPCRW:2010} proposed a novel learnable proximity measure which is defined by a weighted combination of simple "path experts" following a particular sequence of labeled edges.

This paper is concerned with the robust and semantic-rich similarity measure in heterogeneous information networks.
To the best of our knowledge, Sun et al. \cite{SHZYCW:2009} proposed the bi-type information network, and integrated clustering and ranking for analyzing it.
In the article \cite{SYH:2009}, She extended the bi-type information network to the heterogeneous information network with star network schema and studied ranking-based clustering on it.
The literatures \cite{SH:2012,SLZSY:2017} gives a comprehensive summarization of research topics on HINs including similarity measure, clustering, classification,
link prediction, ranking, recommendation, information fusion and other applications. Measuring the similarities between objects is a fundamental problem in HINs.
The similarity measures in HINs must organically integrate the rich semantics as well as the structural information.
This is the prominent difference between the similarity measures in HINs and the ones in the homogeneous information networks.
Below, we respectively introduce the similarity and relevance measures in HINs.

\textbf{(Similarity Measure in HINs)} Sun \cite{SHYYW:2011} employed the commuting matrix of a meta-path to define the meta-path-based similarity PathSim in HINs.
Literature \cite{PathSimExtHIN:2014} revisited the definition of PathSim and overcame its drawback, i.e., omiting some supportive information.
Lao and Cohen \cite{LC:2010a,LC:2010b} proposed a Path Constrained Random Walk (PCRW) model to evaluate the entity similarity in labeled directed graphs.
This model can be applied to measuring the similarity between objects in HINs.
Meng eta la. \cite{RelMeasureLSHIN:2014} proposed a novel similarity measure AvgSim which provided a unified framework to measure the similarity of same or different-typed object pairs.
Usman et al. \cite{UsmanOseledets:2015} employed the tensor techniques to measure the similarity between objects in HINs.
Wang et al. \cite{InfluenceSimHIN:2012} merged two different topics, influence maximization and similarity measure, together to reinforce each other for better and more meaningful results.
Yu et al. \cite{UserGuidedEntitySimSel:2012} employed a meta-path-based ranking model ensemble to represent semantic meanings for similarity queries, and exploited user-guidance to understand users query.
Xiong et al. \cite{XZY:2015} studied the problem of obtaining the $\text{top-}k$ similar object pairs based on user-specified join paths.
Usman et al. \cite{UsmanOseledets:2015} employed the tensor techniques to measure the similarity between objects in HINs.
Literature \cite{TopkSimSearchXstarSchema:2015} proposed a structural-based similarity measure NetSim to efficiently compute similarity between centers in HINs with x-star network schema.
Wang et al. \cite{DistantMetaPathSimTextHIN:2017} proposed a distant meta-path similarity, which can capture semantics between two distant objects, to provide more meaningful entity proximity.
Zhou et al. proposed a semantic-rich stratified-meta-structure-based similarity measure SMSS by integrating all of the commuting matrices of the meta-paths and meta-structures in HINs.
The stratified meta-structure can be constructed automatically, and therefore SMSS does not depend on any user-specified meta-paths or meta-structures.
Zhang et al. \cite{HeteRank:2018} proposed a general similarity measure HeteRank, which integrates the multi-relationships between objects for finding underlying similar objects.

\textbf{(Relevance Measure in HINs)} Shi et al. \cite{SKHYW:2014} extended the similarity measure in HINs to the relevance measure which can be used to evaluate the relatedness of two object with different types.
For an user-specified meta-path, His method $HeteSim$ is based on the pairwise random walk from its two endpoints to its center.
Gupta et al. \cite{NewRelMeasureHIN:2015} proposed a new meta-path-based relevance measure, which is semi-metric and incorporates the path semantics by following the user-specified meta-path, in HINs.
Bu et al. \cite{MetaPathSelUserPreferRelSearch:2014} proposed a two phase process to find the top-k relevance search in HINs.
The first phase aimed to obtain the initial relevance score based on the pair-wise path-constrained random walk, and the second phase took user preference into consideration to combine all the relevance matrices.
Xiong et al. \cite{Drug-TargetInteractHBN:2014} proposed an optimization algorithm LSH-HeteSim to capture the drug-target interaction in heterogeneous biological networks.
Literature \cite{LearnRelApplication:2011} proposed a novel approach to modeling user interest from heterogeneous data sources with distinct but unknown importance,
which seeks a scalable relevance model of user interest. Zhu et al. \cite{RelevanceSearchSignedHIN:2015} proposed a relevance search measure SignSim based on signed meta-path factorization in Signed HINs.

\section{Preliminaries}\label{sec:preliminaries}
In this section, we introduce the definition of HINs and some important concepts,
e.g., network schema, meta-paths and meta-structures.
the network schema of a HIN is essentially its template guiding the generation of the HIN.
The meta-paths and meta-structures are two kinds of schematic structures. They can capture semantics encapsulated in the HINs.

\subsection{The HIN Model}\label{subsec:HINModel}

\begin{definition}\label{def_hin}
\textbf{(Heterogeneous Information Network)}
An information network \cite{SWLYW:2014} is a directed graph $G=(V,E,\mathcal{A},\mathcal{R})$ where $V$ is a set of objects and $E$ is a set of links. $\mathcal{A}$ and $\mathcal{R}$ respectively
denote the set of object types and link types. $G$ is called a heterogeneous information network (HIN) if $|\mathcal{A}|>1$ or $|\mathcal{R}|>1$.
Otherwise, it is called a homogeneous information network.
\end{definition}

\begin{figure}[htb]
  \centering
  \includegraphics[width=0.6\textwidth]{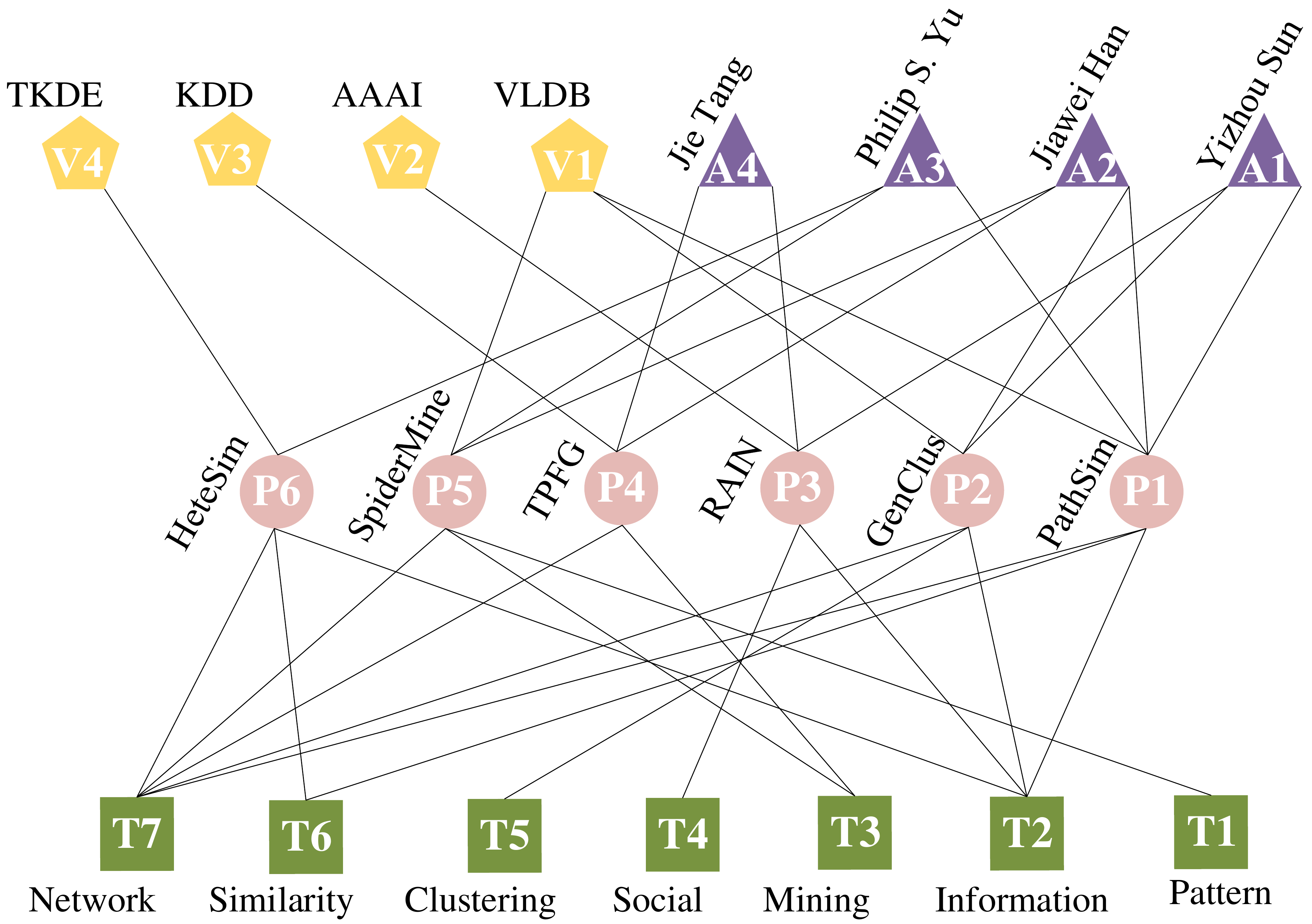}
  \caption{An Illustrative Bibliographic Information Network with actual papers, authors, terms and venues. The triangles, circles, squares, and pentagons respectively stand for authors, papers, terms and venues.}
  \label{example_hins}
\end{figure}

Heterogeneous information networks, which is defined in the definition \ref{def_hin}, consist of multi-typed objects and their interconnected relations.
For any object $v\in V$, it belongs to an object type $\phi(v)\in\mathcal{A}$. For any link $e\in E$, it belongs to a link type $\psi(e)\in\mathcal{R}$.
In essence, $\psi(e)$ represents a relation from its source object type to its target object type.
If two links belong to the same link type, they share the same starting object type as well as the ending object type.

Fig. \ref{example_hins} shows an illustrative bibliographic information network with four actual object types, i.e. \textit{Author} ($A$), \textit{Paper} ($P$), \textit{Venue} ($V$) and \textit{Term} ($T$).
The type \textit{Author} contains four instances: Yizhou Sun, Jiawei Han, Philip S. Yu, and Jie Tang.
The type \textit{Venue} contains four instances: VLDB, AAAI, KDD, TKDE. The type \textit{Paper} contains
six instances: PathSim \cite{SHYYW:2011}, GenClus \cite{SAH:2012}, RAIN \cite{RAIN:2015}, TPFG \cite{TPFG:2010}, SpiderMine \cite{SpiderMine:2011} and HeteSim \cite{SKHYW:2014}.
The type \textit{Term} constains six instances: Pattern, Information, Mining, Social, Clustering, Similarity and Network.
Each paper published at a venue must have its authors and its related terms. Hence, they contain three types of links: $P\leftrightarrow A$, $P\leftrightarrow V$ and $P\leftrightarrow T$.

\begin{figure}[htb]
  \centering
  \includegraphics[width=0.6\textwidth]{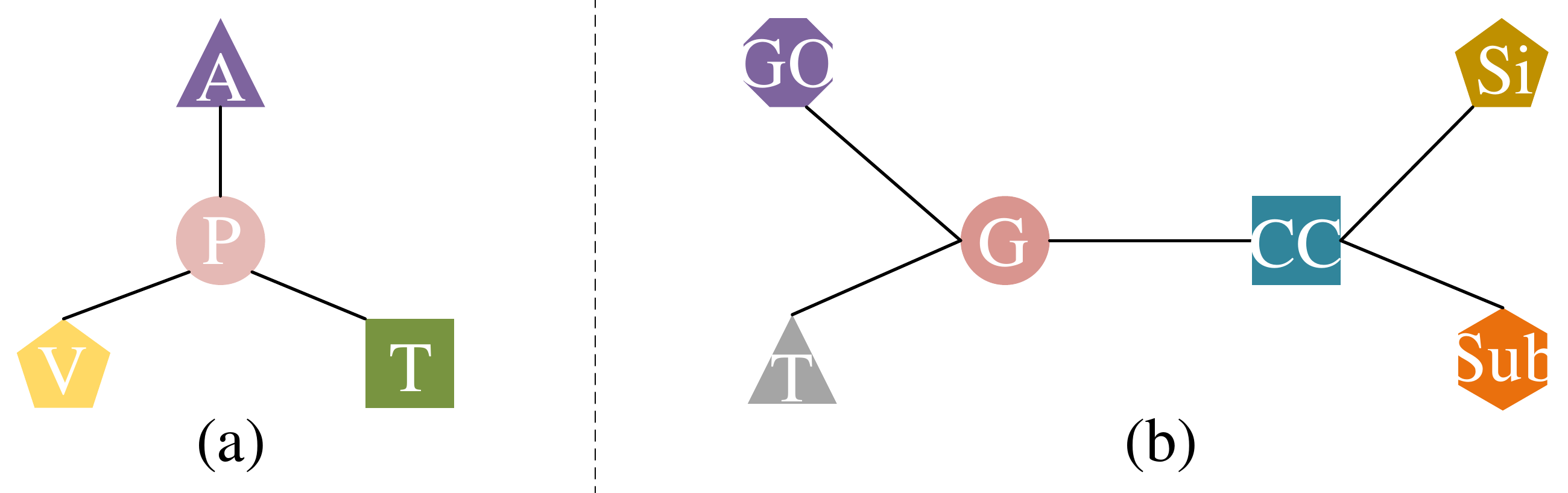}
  \caption{(a) Bibliographic network schema. (b) Biological network schema.}
  \label{example_network_schema}
\end{figure}

\begin{definition}\label{def_networkSchema}
\textbf{(Network Schema)}
The Network schema $\Theta_G=(\mathcal{A},\mathcal{R})$ \cite{SWLYW:2014} of $G$ is a directed graph consisting of the object types in $\mathcal{A}$
and the link types in $\mathcal{R}$.
\end{definition}

The network schema, which is defined in the definition \ref{def_networkSchema}, provides a meta-level description for the HIN.
The link types in $\mathcal{R}$ are essentially the relations from source object types to target object types.
Fig. \ref{example_network_schema}(a) shows the network schema for the HIN in Fig. \ref{example_hins}. Many biological networks can be modeled as HINs as well.
In this paper, we use a biological information network with six object types \textit{Gene} ($G$), \textit{Tissue} ($T$), \textit{GeneOntology} ($GO$),
\textit{ChemicalCompound} ($CC$), \textit{Substructure} ($Sub$) and \textit{SideEffect} ($Si$) as an example. It contains five link types
$GO\leftrightarrow G$, $T\leftrightarrow G$, $G\leftrightarrow CC$, $CC \leftrightarrow Si$, $CC\leftrightarrow Sub$.
Its network schema is shown in \ref{example_network_schema}(b).

\subsection{Meta Paths and Meta Structures}\label{subsec:metapath_metastructure}

There are rich semantics in HIN $G$. These semantics can be captured by meta-paths, meta-structures
or even more complicated schematic structures in $\Theta_G$.

\begin{definition}\label{def_metapath}
\textbf{(Meta-Path)}
The meta-path \cite{SHYYW:2011} is an alternate sequence of object types and link types.
It can be denoted by $\mathcal{P}=T_1\xrightarrow{R_1}T_2\xrightarrow{R_2}\cdots\xrightarrow{R_{l-2}}T_{l-1}\xrightarrow{R_{l-1}}T_l$,
where $T_i\in\mathcal{A}, i=1,\cdots,l$ and $R_j\in\mathcal{R}, j=1,\cdots,l-1$.
\end{definition}

In definition \ref{def_metapath}, $R_j$ is a link type starting from  $T_j$ to $T_{j+1}, j=1,\cdots,l-1$.
In essence, $R_j$ is a relation from $T_j$ to $T_{j+1}$.
The meta-path is essentially a composite relation $R_1\circ R_2\circ\cdots\circ R_{l-1}$.
That is, the meta-path can capture rich semantics contained in the HINs.
Throughout the paper, the meta-path $\mathcal{P}$ is compactly denoted as $(T_1,T_2,\cdots,T_{l-1},T_l)$ unless stated otherwise.

According to article \cite{SHYYW:2011}, there are some useful concepts related to $\mathcal{P}$.
the length of $\mathcal{P}$ is equal to the number of link types, i.e. $l-1$.
A path $P=(o_1,o_2,\cdots,o_l)$ in the HIN $G$ is an instance of $\mathcal{P}$ if $\phi(o_i)=T_i$ and $\psi(o_j,o_{j+1})=R_j$, where
$i=1,2,\cdots,l$ and $j=1,2,\cdots,l-1$. In general, $P$ is called a path instance following $\mathcal{P}$.
A meta-path $\mathcal{P}'=T_l\xrightarrow{R_{l-1}^{-1}}T_{l-1}\xrightarrow{R_{l-2}^{-1}}\cdots\xrightarrow{R_2^{-1}}T_2\xrightarrow{R_1^{-1}}T_1$
is called the reverse meta-path of $\mathcal{P}$, where $R^{-1}_i$ denotes the reverse relation of $R_i$ from $A_{i+1}$ to $A_i$.
The reverse meta-path of $\mathcal{P}$ is denoted by $\mathcal{P}^{-1}$. A meta-path $\mathcal{P}$ is symmetric if $\mathcal{P}=\mathcal{P}^{-1}$.
For the meta-path $\mathcal{P}$, let $W_{T_iT_{i+1}}$ denote the relation matrix of the relation $R_i$, where $i=1,2,\cdots,l-1$.
Its entry $W_{T_iT_{i+1}}(s,t)=1$ if there is an edge from the $s-\text{th}$ object in $T_i$ to the $t-\text{th}$ object in $T_{i+1}$, otherwise it is equal to 0.
The commuting matrix $\mathcal{M}_{\mathcal{P}}$ of the meta-path $\mathcal{P}$ is defined in the definition \ref{def_commutingMatrixOfMetaPath}.
The commuting matrix of the $\mathcal{P}^{-1}$ is equal to $\mathcal{M}^T_{\mathcal{P}}$.

\begin{definition}\label{def_commutingMatrixOfMetaPath}
\textbf{(Commuting Matrix of the Meta-Path)}
The commuting matrix $\mathcal{M}_{\mathcal{P}}$ of the meta-path $\mathcal{P}=(T_1,T_2,\cdots,T_{l-1},T_l)$ is defined as
\begin{displaymath}
\mathcal{M}_{\mathcal{P}}=W_{T_1T_2}\times W_{T_2T_3}\times\cdots\times W_{T_{l-1}T_l},
\end{displaymath}
where $W_{T_iT_{i+1}}$ denotes the relation matrix from $T_i$ to $T_{i+1}$.
\end{definition}

\begin{figure}[htb]
  \centering
  \includegraphics[width=0.6\textwidth]{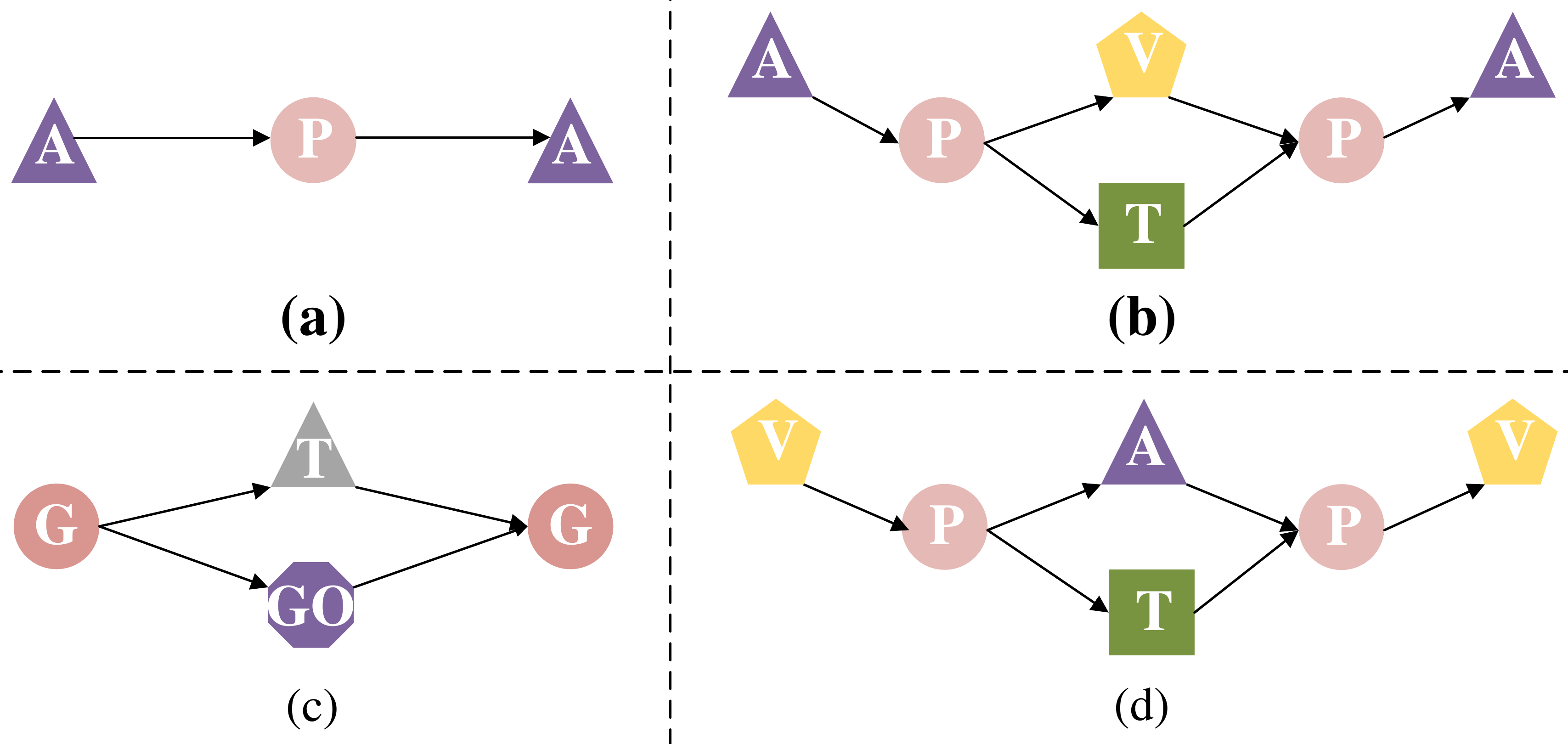}
  \caption{Some Meta Paths and meta-structures on the network schema shown in Fig. \ref{example_network_schema}(a,b).}
  \label{example_meta_path}
\end{figure}

Different meta-paths carry different semantics.
$(A,P,A)$ shown in Fig. \ref{example_meta_path}(a) expresses the information ``Two authors cooperate on a paper''.
However, literature \cite{HZCSML:2016} pointed out meta-paths can only capture relatively simple and biased semantics.
For example, $(A,P,V,P,A)$ expresses the information ``Two authors write a paper published on the same venue'',
but neglects the one ``Two authors write a paper containing the same term''. To overcome this issue, \cite{HZCSML:2016} proposed the meta-structure.

\begin{definition}\label{def_metaStructure}
\textbf{(Meta-Structure)}
The meta-structure \cite{HZCSML:2016} $\mathcal{S}=(\mathcal{V}_{\mathcal{S}},\mathcal{E}_{\mathcal{S}},T_s,T_t)$ is a directed acyclic graph
with a single source object type $T_s$ and a single target object type $T_t$. $\mathcal{V}_{\mathcal{S}}$ is a set of object types, and $\mathcal{E}_{\mathcal{S}}$ is a set of link types.
\end{definition}

The meta-structure, which is defined in the definition \ref{def_metaStructure}, can capture complex semantics.
Fig. \ref{example_meta_path}(b,c,d) shows three kinds of meta-structures.
For ease of presentation, these meta-structures are denoted as $(A,P,(V,T),P,A)$, $(G,(GO,T),G)$ and $(V,P,(A,T),P,V)$ respectively.
The meta-structure shown in Fig. \ref{example_meta_path}(b) expresses the information ``Two authors write their papers both containing the same terms and in the sam venue'',
but ignores the information ``Two authors cooperate on a paper.'' That is, the meta-structure can only capture biased semantics as well.

Given a meta-structure $\mathcal{S}$ with height $h_0$, its object types are sorted in the topological order.
For $h=0,1,\cdots,h_0-1$, let $L_h$ denote the set of object types on the layer $h$, and
$CP_{L_h}$ denote the Cartesian product of the set of objects belonging to different types in $L_h$.
The relation matrix $W_{L_hL_{h+1}}$ from $CP_{L_h}$ to $CP_{L_{h+1}}$ is defined as:
the entry $(s,t)$ of $W_{L_hL_{h+1}}$ is equal to 1 if the $s$-th element $CP_{L_h}(s)$ of $CP_{L_h}$ is adjacent to the $t$-th one $CP_{L_{h+1}}(t)$ of $CP_{L_{h+1}}$ in $G$,
otherwise it is equal to 0. $CP_{L_h}(s)$ and $CP_{L_{h+1}}(t)$ are adjacent if and only if for any $u\in CP_{L_h}(s)$ and $v\in CP_{L_{h+1}}(t)$,
$u$ and $v$ are adjacent in $G$, and $\phi(u)$ and $\phi(v)$ are adjacent in $\Theta_G$.
The commuting matrix $\mathcal{M}_{\mathcal{S}}$ of the meta-structure $\mathcal{S}$ is defined in the definition \ref{def_commutingMatrixOfMetaStructure}.
Each entry in $\mathcal{M}_{\mathcal{S}}$ represents the number of instances following $\mathcal{S}$.
The commuting matrix of its reverse is equal to $\mathcal{M}_{\mathcal{S}}^T$.

\begin{definition}\label{def_commutingMatrixOfMetaStructure}
\textbf{(Commuting Matrix of the Meta-Structure)}
The commuting matrix of the meta-structure $\mathcal{S}$ is defined as
\begin{displaymath}
\mathcal{M}_{\mathcal{S}}=\prod_{i=0}^{h_0-1}W_{L_iL_{i+1}},
\end{displaymath}
where $W_{L_iL_{i+1}}$ denotes the relation matrix from $C_{L_i}$ to $C_{L_{i+1}}$.
\end{definition}

Both meta-paths and meta-structures need to be specified by users.
In the bibliographical information networks, it is comparatively easy for users to specify meta-paths or meta-structures.
However, specifying meta-paths or meta-structures becomes very difficult in the biological information networks,
because in reality it contains many object types (Gene, Gene Ontology, Tissue, Chemical Compound,
Chemical Ontology, Side Effect, Substructure, Pathway, Disease and Gene Family) and many relations. In Fig. \ref{example_network_schema}(b), we give a biological network schema only containing
six object types and five link types.

In this paper, we aim to define a robust semantic-rich similarity measure in HINs
Formally, the problem takes a HIN, a source object as input, and then outputs a vector whose entries denote the similarity between the source object to the target object.

\section{Recurrent Meta Structure Construction and Decomposition}\label{sec:DeepMSDetectDecomp}

In this section, we introduce the architecture of the recurrent meta-structure and an approach to decomposing the recurrent meta-structure into several recurrent meta-paths and recurrent meta-trees.

\subsection{Recurrent Meta Structure Construction}\label{subsec:metastructure}

Before proceeding, we introduce an important concept, an augmented spanning tree of the network schema $\Theta_G$, see the definition \ref{def_AST}.
It is used in the processing of constructing and decomposing the recurrent meta-structure.

\begin{definition}\label{def_AST}
\textbf{(Augmented Spanning Tree)}
An augmented spanning tree $AST_{\Theta_G}=(\mathcal{A}_{AST},\mathcal{R}_{AST})$ of $\Theta_G$ is a tree rooted at the source object type and containing
all the link types in $\Theta_G$. $\mathcal{A}_{AST}$ denotes the set of object types in $AST_{\Theta_G}$, and $\mathcal{R}_{AST}$ denotes the set of link types in $AST_{\Theta_G}$.
Note that $\mathcal{A}_{AST}$ contains the object types in $\Theta_G$ and some of their duplicates, and $\mathcal{R}_{AST}$ contains the links types consisting of two object types in $\mathcal{A}_{AST}$.
\end{definition}

Now, we introduce the construction rule of the augmented spanning tree of $\Theta_G$.
If the network schema is a tree, its augmented spanning tree is equal to the network schema itself.
If the network schema is not a tree, its augmented spanning tree is constructed based on its spanning tree as follows.
The spanning tree of the network schema can be constructed using Breadth-First Search (BFS) starting from the source object type.
We then traverse the spanning tree from top to bottom and from left to right.
For the current object type in the process of traversing, if an edge adjacent to it in the network schema is not contained in the current spanning tree,
we duplicate the object type adjacent to it and add an edge from it to the copied object type in the current spanning tree.

\begin{figure}[htb]
  \centering
  \includegraphics[width=0.5\textwidth]{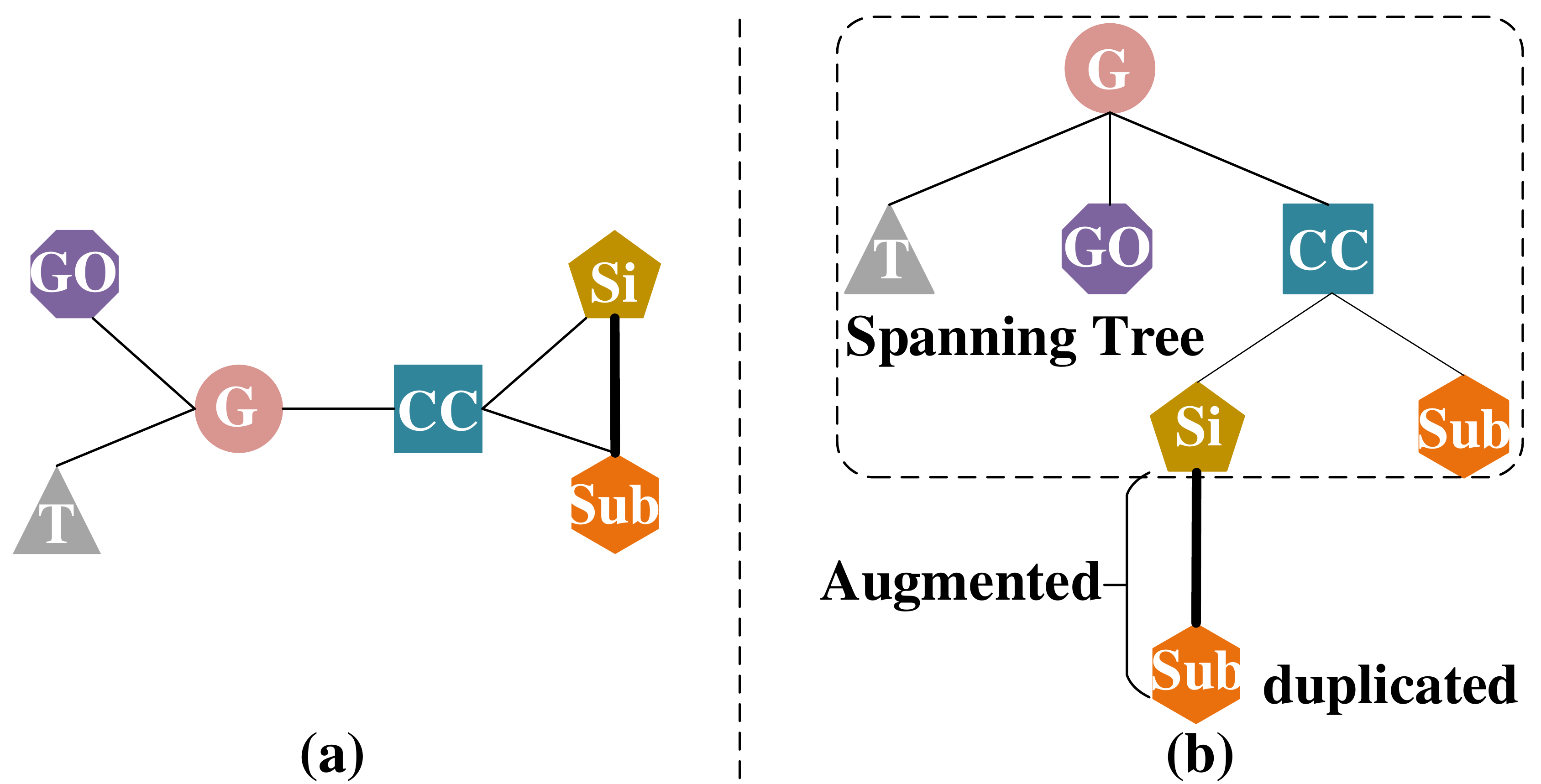}
  \caption{Constructing the augmented spanning tree when the network schema is not a tree.}
  \label{augmented_spanning_tree}
\end{figure}

We exemplify the construction of the augmented spanning tree when the network schema is not a tree. Suppose an edge $\left(Si,Sub\right)$ is added to the network schema shown in Fig. \ref{example_network_schema}(b).
As a result, we get a new network schema shown in Fig. \ref{augmented_spanning_tree}(a). Next, we show how to construct the augmented spanning tree for this network schema, see Fig. \ref{augmented_spanning_tree}(b).
Its spanning tree is enclosed by the dashed line frame. When we reach the node $Si$ in the process of traversing, the edge $(Si,Sub)$ incidental to $Si$ is not contained in the spanning tree.
So, we make a copy of the node $Sub$ and add an edge from $Si$ to the copied $Sub$.

\begin{lemma}\label{AST_NetworkSchema_edge}
Given a HIN $G$, its network schema is denoted by $\Theta_G=(\mathcal{A},\mathcal{R})$. The augmented spanning tree of $\Theta_G$ is denoted by $AST_{\Theta_G}=(\mathcal{A}_{AST},\mathcal{R}_{AST})$.
If one object type and its duplicate are not distinguished explicitly in $\mathcal{A}_{AST}$, we have $\mathcal{A}=\mathcal{A}_{AST}$ and $\mathcal{R}=\mathcal{R}_{AST}$.
\end{lemma}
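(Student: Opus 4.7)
The plan is to prove both set equalities by case analysis on whether $\Theta_G$ is a tree. If $\Theta_G$ is a tree, then $AST_{\Theta_G}=\Theta_G$ by the construction rule, so $\mathcal{A}=\mathcal{A}_{AST}$ and $\mathcal{R}=\mathcal{R}_{AST}$ hold immediately. The substance of the lemma is in the non-tree case, where I would argue the two inclusions $\mathcal{A}\subseteq\mathcal{A}_{AST}$ and $\mathcal{A}_{AST}\subseteq\mathcal{A}$ (and similarly for the link-type sets) separately.

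For the object types, the inclusion $\mathcal{A}\subseteq\mathcal{A}_{AST}$ follows because the BFS spanning tree rooted at the source object type already visits every vertex of $\Theta_G$ (using connectedness of $\Theta_G$ from the source), and augmentation only adds additional nodes. The reverse inclusion $\mathcal{A}_{AST}\subseteq\mathcal{A}$ is immediate from the construction: every node added during augmentation is explicitly declared to be a duplicate of some existing vertex of $\Theta_G$, so under the convention that a duplicate is identified with its original, $\mathcal{A}_{AST}$ collapses onto $\mathcal{A}$.

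For the link types, I would argue edge-by-edge. For $\mathcal{R}_{AST}\subseteq\mathcal{R}$: each spanning-tree edge of $AST_{\Theta_G}$ already belongs to $\mathcal{R}$, and each augmentation edge goes from some node $T$ to a duplicate $T'$ of a vertex $T''$ adjacent to $T$ in $\Theta_G$; under identification of $T'$ with $T''$ this becomes the link $(T,T'')\in\mathcal{R}$. For $\mathcal{R}\subseteq\mathcal{R}_{AST}$: pick any $e=(T_1,T_2)\in\mathcal{R}$; if $e$ lies in the spanning tree we are done, otherwise, when the top-to-bottom, left-to-right traversal first reaches an endpoint of $e$, the construction rule duplicates the other endpoint and inserts an edge to that duplicate, which equals $e$ after identification.

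The main obstacle, more of a bookkeeping issue than a genuine difficulty, will be formalizing the identification convention for duplicates and verifying that the prescribed traversal really processes every non-tree edge of $\Theta_G$ at least once. Concretely, one must check that for each non-tree edge the traversal visits at least one of its endpoints at a moment when the edge is still "missing" from the current augmented structure, so the augmentation rule actually fires. This can be carried out by induction on the traversal order, using the fact that BFS enumerates all vertices of $\Theta_G$ and the rule inspects every incident non-tree link at each visited node.
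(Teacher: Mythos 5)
Your proposal is correct and follows essentially the same route as the paper: the paper's own proof is a one-line appeal to the construction rule together with the convention of identifying duplicates with their originals, and your double-inclusion argument (plus the check that the traversal fires on every non-tree edge) is simply a careful expansion of that same idea. Nothing in your argument diverges from or conflicts with the paper's reasoning; it just supplies the bookkeeping the paper declares ``obvious.''
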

\begin{proof}
According to the construction rule of $AST_{\Theta_G}$, obviously $\mathcal{A}=\mathcal{A}_{AST}$ and $\mathcal{R}=\mathcal{R}_{AST}$
because one object type and its duplicate are not distinguished explicitly in $\mathcal{A}_{AST}$.
\end{proof}

According to lemma \ref{AST_NetworkSchema_edge}, the augmented spanning tree reformulates the network schema if the object types and their duplicates are thought of as the same elements.
That is, a link type $R_1$ in $AST_{\Theta_G}$ is equal to one $R_2$ in $\Theta_G$
if and only if they share the same endpoints or one endpoint of $R_1$ is a copy of one endpoint of $R_2$.
Below, we introduce the definition of the recurrent meta-structure (RecurMS, see the definition \ref{def_recurMS}), and describe the construction rule of the recurrent meta-structure
based on the augmented spanning tree of the network schema.

\begin{definition}\label{def_recurMS}
\textbf{(Recurrent Meta Structure)}
A recurrent meta-structure is essentially a hierarchical graph consisting of object types with different layer labels.
Formally, it is denoted as $\mathcal{D}_G=(L_{0:\infty},\mathcal{R}_{\mathcal{D}_G})$, where $L_i, i=0,\cdots,\infty$ denotes the set of object types on the $i-\text{th}$ layer and
$\mathcal{R_{\mathcal{D}_G}}$ denotes the set of link types in RecurMS.
\end{definition}

RecurMS has two prominent advantages: (1) being automatically constructed by repetitively visiting object types in the process of traversing network schema;
(2) combining all the meta-paths and meta-structures.
Given a HIN $G$, we first extract its network schema $\Theta_G$, and then select a source object type and a target object type.
In this paper, we only consider the scenario that the source object type is the same as the target one.
The construction rule of the RecurMS $\mathcal{D}_G$ of $G$ is described as follows.
The source object type is placed on the 0-th layer.
The object types on the layer $l=1,2,\cdots,+\infty$ are composed of the neighbors of the object types on the layer $l-1$ on the network schema $\Theta_G$.
The adjacent object types are linked by an arrow pointing from the $(l-1)$-th layer down to the $l$-th layer.
Repeating the above process, we obtain the RecurMS $\mathcal{D}_G$.
It is noteworthy that an object type may appear in adjacent layers of the RecurMS if there exist circles (or self-loops) in the network schema.
At this time, one of them can be viewed as a copy of another one. 

\begin{figure}[htb]
  \centering
  \includegraphics[width=0.6\textwidth]{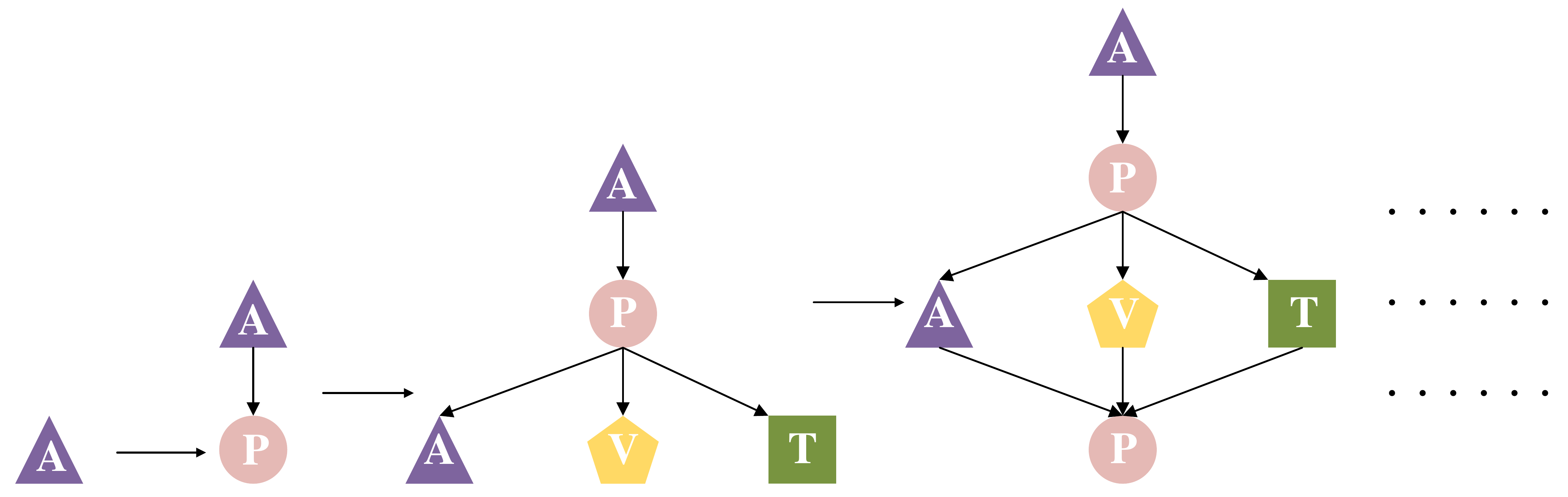}
  \caption{Constructing the RecurMS of the illustrative bibliographic information network.}
  \label{construction_DeepMS}
\end{figure}


Fig. \ref{dblp_decomposition}(a) shows the RecurMS of the network schema shown in Fig. \ref{example_network_schema}(a).
As shown in Fig. \ref{construction_DeepMS}, it can be constructed as follows.
$A$ is both the source and target object type. Firstly, $A$ is placed on the 0-th layer, see Fig. \ref{construction_DeepMS}(a).
$P$ is placed on the 1-st layer, because $P$ is the only neighbor of $A$ in the network schema shown in Fig. \ref{example_network_schema}(a), see Fig. \ref{construction_DeepMS}(b).
$A$, $V$ and $T$ are placed on the 3-rd layer, because they are the neighbors of $P$, see Fig. \ref{construction_DeepMS}(c).
Similarly, $P$ is again placed on the 4-th layer, because it is the neighbor of $A$, $V$ and $T$, see Fig. \ref{construction_DeepMS}(d). At this time, $P$ is visited again.
Repeating the above procedure, we obtain the RecurMS shown in Fig. \ref{dblp_decomposition}(a).
Fig. \ref{slap_decomposition}(a) shows the RecurMS of the network schema shown in Fig. \ref{example_network_schema}(b). Gene is both the source and target object type.
It is constructed as similarly as the one of the bibliographic network schema.

\begin{figure}[htb]
  \centering
  \includegraphics[width=0.95\textwidth]{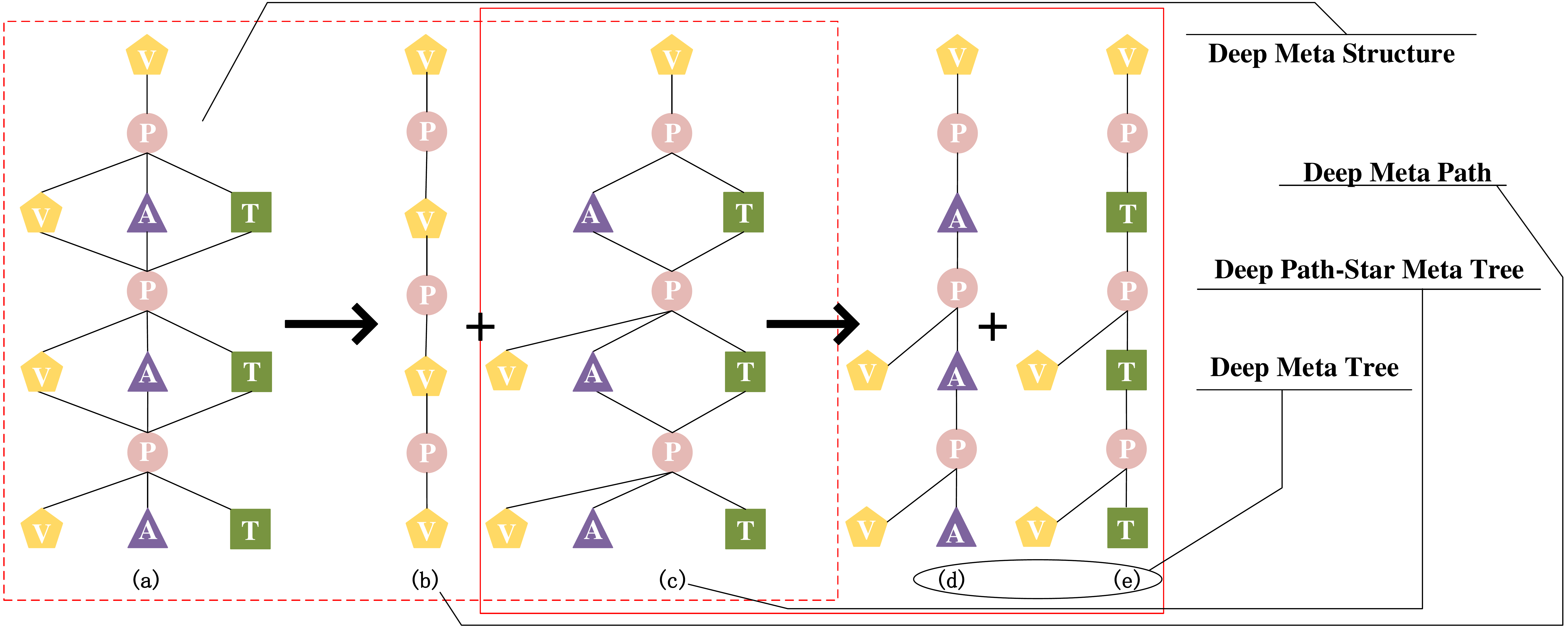}
  \caption{Decomposing the RecurMS of the bibliographic network schema.}
  \label{dblp_decomposition}
\end{figure}

\begin{figure}[htb]
  \centering
  \includegraphics[width=0.95\textwidth]{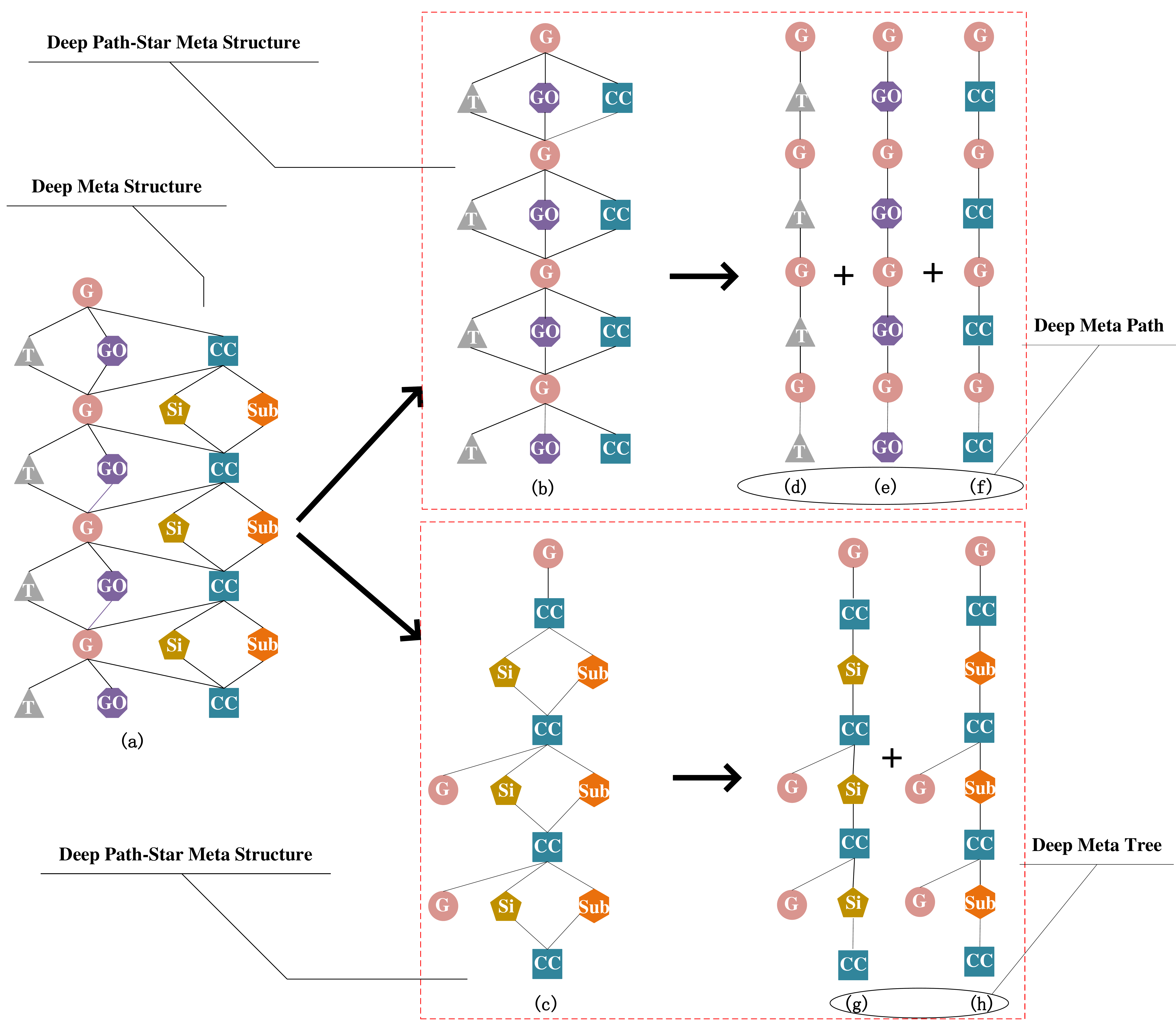}
  \caption{Decomposing the RecurMS of the biological network schema.}
  \label{slap_decomposition}
\end{figure}

According to definition \ref{def_recurMS}, the recurrent meta-structure consists of the object types with different layer labels and their relations in the network schema. Each layer is a set of object types.
Below, we give some properties of $\mathcal{D}_G$ in lemma \ref{RecurMS_Properties}.
According to these properties, $\mathcal{D}_G$ contains rich semantics.

\begin{lemma}\label{RecurMS_Properties}
Assume $h_1$ denotes the height of the augmented spanning tree $AST_{\Theta_G}$. Without loss of generality, $h_1\geq 2$.
$L_i, i=0,\cdots,\infty$ denotes the set of object types on the $i-\text{th}$ layer of $\mathcal{D}_G$.
$\mathcal{D}_G=(L_{0:\infty},\mathcal{R}_{\mathcal{D}_G})$ has the following properties.
\begin{enumerate}
  \item $L_0=\left\{T_s\right\}$, where $T_s$ is the source object type;
  \item $L_i\subseteq L_{i+2}, i=0,1,\cdots,h_1-2$ and $L_{j-1}=L_{j+1}, j=h_1,h_1+1,\cdots,\infty$;
  \item $\mathcal{D}_G$ contains all the meta-paths and the meta-structures.
\end{enumerate}
\end{lemma}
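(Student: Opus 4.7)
The plan is to address each of the three claims in turn, treating (1) as immediate from the construction rule and reducing (2) and (3) to statements about walks in $\Theta_G$. Property (1) is exactly the first sentence of the construction recipe: $T_s$ is placed alone on layer $0$, so $L_0=\{T_s\}$ with nothing to prove.

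For property (2), I would first recast the construction algebraically: writing $N_{\Theta_G}(\cdot)$ for the neighbourhood operator in $\Theta_G$, the recipe gives $L_{l+1}=N_{\Theta_G}(L_l)$ for all $l\geq 0$, so $L_l$ is exactly the collection of object types reachable from $T_s$ by a walk of length $l$ in $\Theta_G$. The inclusion $L_i\subseteq L_{i+2}$ for $0\leq i\leq h_1-2$ then follows from the fact that every $T\in L_i$ has at least one neighbour $T'$ in $\Theta_G$ (since $h_1\geq 2$ forces the schema to contain edges and to be connected as reached from $T_s$); that $T'$ lives in $L_{i+1}$, so $T\in N_{\Theta_G}(T')\subseteq L_{i+2}$. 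For the stabilisation clause $L_{j-1}=L_{j+1}$ once $j\geq h_1$, I would argue that because $AST_{\Theta_G}$ has height $h_1$, every object type that can ever appear in some $L_l$ has already appeared in $L_0\cup L_1\cup\cdots\cup L_{h_1}$. Combined with the monotonicity $L_i\subseteq L_{i+2}$, the parity-stratified sequences $L_0\subseteq L_2\subseteq\cdots$ and $L_1\subseteq L_3\subseteq\cdots$ live in the finite set $\mathcal{A}$ and are forced to be constant beyond index $h_1$.

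For property (3), I interpret ``contains'' as ``embeds, preserving object types and link types, as a subgraph of $\mathcal{D}_G$''. Given a meta-path $\mathcal{P}=(T_1,\ldots,T_l)$ rooted at $T_1=T_s$, a one-line induction on $i$ shows $T_i\in L_{i-1}$: the step uses the adjacency $T_i\in N_{\Theta_G}(T_{i-1})$ together with the inductive hypothesis $T_{i-1}\in L_{i-2}$ and the identity $L_{i-1}=N_{\Theta_G}(L_{i-2})$. The required arrows between adjacent layers are present because the construction links every pair of types adjacent in $\Theta_G$ across consecutive layers. For a meta-structure $\mathcal{S}$ with source $T_s$, the same idea extends by induction on the topological depth inside $\mathcal{S}$: any type at depth $d$ has all its DAG-predecessors embedded into $L_{d-1}$ by the hypothesis, so it is a $\Theta_G$-neighbour of each and therefore lies in $L_d=N_{\Theta_G}(L_{d-1})$, with the inter-layer arrows in $\mathcal{D}_G$ supplying the covering links.

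The main obstacle is the stabilisation half of (2): the monotonicity argument requires some bookkeeping about the parity of walks, and the precise link between the height of $AST_{\Theta_G}$ and the index at which $L_l$ becomes eventually periodic is not transparent when $\Theta_G$ contains odd cycles, since then the clean bipartite-like separation between even and odd layers breaks down. I would handle that by a small case split on whether the BFS from $T_s$ induces a bipartition of $\Theta_G$; the cleanest route is to prove that the layer sequence is eventually periodic with period dividing $2$ and that the onset of periodicity is bounded above by $h_1$, exploiting the fact that by step $h_1$ the BFS has exhausted every type reachable under both parities.
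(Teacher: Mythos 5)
Your overall route is the same as the paper's: property (1) is read off the construction, property (2) is split into the monotonicity $L_i\subseteq L_{i+2}$ plus a stabilisation claim, and property (3) is an induction placing the $k$-th object type of a meta-path (or the depth-$d$ types of a meta-structure) into $L_k$ (resp.\ $L_d$). The paper's proof is essentially a one-line assertion at each step ("the object types in $L_i$ must be added to $L_{i+2}$ according to the construction rule", "obviously $L_{j-1}=L_{j+1}$ \dots it is impossible for $L_{j+1}$ to contain some new object types because its layer label is larger than $h_1$"), so your reformulation $L_{l+1}=N_{\Theta_G}(L_l)$ and the walk-of-length-$l$ interpretation is a genuinely more careful rendering of the same argument, not a different one. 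Parts (1), the inclusion half of (2), and part (3) are fine as you have them.

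The gap is exactly where you suspected, but your proposed repair does not close it. The claim that "by step $h_1$ the BFS has exhausted every type reachable under both parities" is false when $\Theta_G$ contains an odd cycle. Take $\Theta_G$ to be a $5$-cycle $T_s\!-\!A\!-\!B\!-\!C\!-\!D\!-\!T_s$: the augmented spanning tree has height $h_1=3$ (the duplicate of $C$ hangs off $B$ at depth $3$), yet $L_2=\{T_s,B,C\}$ while $L_4=\{T_s,A,B,C,D\}$, so $L_{j-1}=L_{j+1}$ already fails at $j=h_1$. BFS finds every type by depth $h_1$, but a type first entered at odd distance need not enter the even-parity chain until a walk around the odd cycle returns to it, and that can take strictly longer than $h_1$ steps. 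So no amount of bookkeeping will prove the stabilisation bound at $h_1$ in the non-bipartite case; the statement itself is only safe there with a larger onset index. Your distance-plus-parity argument does close the claim cleanly when $\Theta_G$ is a tree or, more generally, bipartite --- which covers both network schemas the paper actually uses --- because then $L_j$ is exactly the set of types at distance at most $j$ and of the right parity, and no new types of either parity can appear beyond layer $h_1$. The paper's own proof silently conflates "every type has appeared in some layer by $h_1$" with "every type has appeared in each parity class by $h_1$", so it carries the same unacknowledged restriction; if you make the bipartite (or tree-schema) hypothesis explicit, your write-up is complete and is tighter than the published one.
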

\begin{proof}
(1) According to the construction rule of $\mathcal{D}_G$, obviously $L_0=\left\{T_s\right\}$.

(2) When $i=0,1,\cdots,\infty$, the object types in $L_i$ must be added to $L_{i+2}$ according to the construction rule of $\mathcal{D}_G$.
In addition, there are some new object types in $L_{i+2}$, e.g. some children of the object types in $L_{i+1}$ in $AST_{\Theta_G}$.
Therefore, $L_i\subseteq L_{i+2}, i=0,\cdots,h_1-2$.
When $j=h_1,h_1+1,\cdots,\infty$, obviously $L_{j-1}=L_{j+1}$.
At this time, it is impossible for $L_{j+1}$ to contain some new object types because its layer label is larger than $h_1$.
Thus, $L_{j+1}\subseteq L_{j-1}$.

(3) Any meta-path $\mathcal{P}$ can be compactly denoted by $\left(T_s,T_{1,i_1},\cdots,T_{l-1,i_{l-1}},T_s\right)$ without loss of generality.
According to the construction rule of $\mathcal{D}_G$, we have $T_s\in L_0$, $T_{k,i_k}\in L_k, k=1,2,\cdots,l-1$ and $T_s\in L_{l}$.
Therefore, $\mathcal{P}$ must be in $\mathcal{D}_G$. For meta-structure, we can take same measures to prove.
\end{proof}

\subsection{Recurrent Meta Structure Decomposition}\label{subsec:decompositionOfDeepMS}

\begin{figure}[htb]
  \centering
  \includegraphics[width=0.8\textwidth]{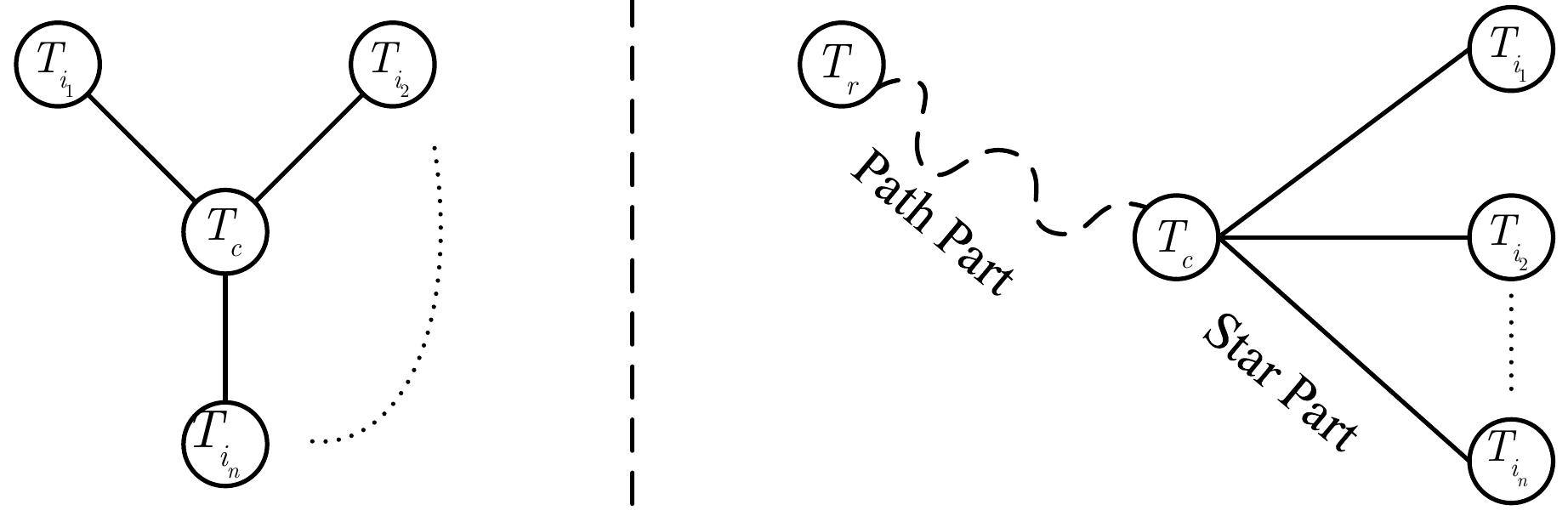}
  \caption{A star whose center is $T_c$ on the left hand, a path-star tree on the right hand.}
  \label{star_path_star_tree}
\end{figure}

This section provides some important concepts including star, path-star tree, recurrent path-star meta-structure, recurrent meta-path and recurrent meta-tree.
The star, which is defined in the definition \ref{def_star}, is a special tree consisting of a center and its neighbors.
The star $\left(T_c,\left(T_{i_1},\cdots,T_{i_n}\right)\right)$ is illustrated with Fig. \ref{star_path_star_tree}(a). Its center is $T_c$ and the neighbors of $T_c$ is $T_{i_1},T_{i_2},\cdots,T_{i_n}$.
The path-star tree, which is defined in the definition \ref{def_pathStarTree}, consists of a path and a star.
The path-star tree $\left(T_r,\cdots,T_p,\left(T_{i_1},\cdots,T_{i_n}\right)\right)$ is illustrated with Fig. \ref{star_path_star_tree}(b).
Its path part is from $T_r$ to $T_c$, its star part consists of the center $T_c$ and its neighbors $T_{i_1},\cdots,T_{i_n}$.
Throughout the paper, an infinite sequence $\left(a,x_0,\cdots,x_l,b,c,b,c,\cdots\right)$ is compactly denoted as
\begin{displaymath}
\left(a,x_0,\cdots,x_l,\overbrace{b,c}^{\infty}\right).
\end{displaymath}
The recurrent path-star meta-structure is defined in the definition \ref{def_recurPathStarMetaStruc},
and the recurrent meta-path and meta-tree are defined in the definitions \ref{def_recurrentMetaPath} and \ref{def_recurrentMetaTree} respectively.

\begin{definition}\label{def_star}
\textbf{(Star)}
A star, compactly denoted as $\left(T_c,\left(T_{i_1},\cdots,T_{i_n}\right)\right)$, is a tree consisting of a center $T_c$ and its neighbors $T_{i_1},\cdots,T_{i_n}$.
\end{definition}

\begin{definition}\label{def_pathStarTree}
\textbf{(Path-Star Tree)}
A path-star tree, compactly denoted as $\left(T_r,\cdots,T_p,\left(T_{i_1},\cdots,T_{i_n}\right)\right)$ is a rooted one consisting of a path and a star.
In specific, the path, compactly denoted as $\left(T_r,\cdots,T_p\right)$, is from the pivotal vertex $T_p$ to the root $T_r$,
and the star $T_p,\left(T_{i_1},\cdots,T_{i_n}\right)$ is composed of the pivotal vertex $T_p$ and its children $T_{i_1},\cdots,T_{i_n}$.
\end{definition}

\begin{definition}\label{def_recurPathStarMetaStruc}
\textbf{(Recurrent Path-Star Meta-Structure)}
A recurrent path-star meta-structure, compactly denoted as
\begin{displaymath}
\left(T_r,\cdots,\overbrace{T_p,\left(T_{i_1},\cdots,T_{i_n}\right)}^{\infty}\right),
\end{displaymath}
is a hierarchical structure consisting of a path-star tree and its duplicates.
It can be constructed by repetitively duplicating the star part of the path-star tree.
Note that each pivotal vertex except the first one is also connected to the root along the path $\left(T_r,\cdots,T_p\right)$.
\end{definition}

\begin{definition}\label{def_recurrentMetaPath}
\textbf{(Recurrent Meta-Path)}
The recurrent path-star meta-structure is called a recurrent meta-path if the path-star tree is a single edge.
It can be compactly denoted as
\begin{displaymath}
\left(\overbrace{T_r,T_{i_j}}^{\infty}\right)
\end{displaymath}
\end{definition}

\begin{definition}\label{def_recurrentMetaTree}
\textbf{(Recurrent Meta-Tree)}
A recurrent meta-tree is a hierarchical structure consisting of a path from the pivotal vertex to the root and one of children of the pivotal vertex.
It can be compactly denoted as
\begin{displaymath}
\left(T_r,\cdots,\overbrace{T_p,T_{i_j}}^{\infty}\right).
\end{displaymath}
Note that each pivotal vertex except the first one is also connected to the root along the path $\left(T_r,\cdots,T_p\right)$
\end{definition}

The object types with different layer labels are tightly coupled in the RecurMS. To decouple them, we should decompose the RecurMS.
After obtaining the augmented spanning tree $AST_{\Theta_G}$, we traverse its internal nodes from top to bottom and from left to right.
Each current object type is treated as a pivot like a bridge connecting two different components: (1) the path form the root (i.e. the source object type) to the pivot; (2) the star consisting of
the pivot as the center and its children. We obtain a path-star tree according to definition \ref{def_pathStarTree}.
Then, we augment all these path-star trees by repetitively duplicating the star part consisting of the pivotal object types and their children.
For each duplicated pivotal object type, it is connected to the target object type by the path part of the path-star tree.
Finally, we obtain several recurrent path-star meta-structures of the RecurMS. In essence, the RecurMS can be viewed as the combination of these substructures.
If the path-star tree is a single edge, the recurrent path-star meta-structure generated by it is specially called the recurrent meta-path.

Now, we formally describe the procedure of decomposing the RecurMS into several recurrent path-star meta-structures.
As stated previously, the RecurMS can be denoted as $\mathcal{D}_G=(L_{0:\infty},\mathcal{R}_{\mathcal{D}_G})$.
Without loss of generality, let $L_i=\left\{T_{i,1},T_{i,2},\cdots,T_{i,n_i}\right\}$. According to lemma \ref{RecurMS_Properties},
$L_0=\{T_s\}$, i.e. $n_0=1$, $T_{0,1}=T_s$. Assume $\mathcal{A}^{in}_{\Theta_G}$ denotes the set of internal nodes of the augmented spanning tree $AST_{\Theta_G}$,
whose elements are listed in the order from the top to the bottom and from the left to the right. Obviously, the source object type $T_s$ is firstly selected as the pivot.
As a result, we obtain a star consisting of the source object type $T_s$ and its children $T_{1,1},\cdots,T_{1,n_1}$.
At this time, we augment this star by repetitively duplicating $T_s$ and its children $T_{1,1},\cdots,T_{1,n_1}$.
As a result, the recurrent path-star meta-structure with $T_s$ as the pivot can be compactly denoted as
\begin{equation}\label{RecurPathStarMetaStrucSource}
\left(\overbrace{T_s,\left(T_{1,1},\cdots,T_{1,n_1}\right)}^{\infty}\right).
\end{equation}
For the pivot $T_{j,k}\in\mathcal{A}^{in}_{\Theta_G}$ and $T_{j,k}\neq T_s$, where $k\leq n_j$,
we should firstly calculate the path from $T_{j,k}$ to the root $T_s$, denoted by
$\mathcal{P}_{T_{j,k},T_s}=\left(T_{j,k}, T_{j-1,i_{j-1}}, \cdots, T_{1,i_1}, T_s\right)$.
As a result, we obtain the recurrent path-star meta-structure compactly denoted by
\begin{equation}\label{RecurPathStarMetaStrucNonsource}
\left(T_s,T_{1,i_1},\cdots,\overbrace{T_{j,k},\left(T_{j+1,1},\cdots,T_{j+1,n_{j+1}}\right)}^{\infty}\right)
\end{equation}
Note that all the pivots except the first one in Formula \ref{RecurPathStarMetaStrucNonsource}
are also linked to the path $\mathcal{P}_{T_{j,k},T_s}$.

Here, we respectively take the bibliographic network schema and the biological network schema, shown in Fig. \ref{example_network_schema}(a,b),
as examples to present how to generate the path-star trees. For the bibliographic network schema, $V$ is selected as the source object type.
Its augmented spanning tree rooted at $V$ is equal to the network schema itself because the bibliographic network schema is a tree.
For the biological network schema, $G$ is selected as the source object type.
Its augmented spanning tree rooted at $G$ is equal to the network schema itself because the biological network schema is a tree.
After obtaining their augmented spanning trees, we traverse its internal nodes from top to bottom and from left to right.
For the bibliographic network schema, its internal nodes are $V$ and $P$. When $V$ is treated as the pivot, its path from the root ($V$ itself) to $V$ is empty, and the star consists of
$V$ as its center and $P$.
When $P$ is treated as the pivot, the path from the root $V$ to $P$ is the edge $(V,P)$, and the star consists of $P$ and its children $A$ and $T$.
Their path-star trees are shown in Fig. \ref{PathChildrenTree}(a,b).
For the biological network schema, its internal nodes are $G$ and $CC$.
Their path-star trees are shown in Fig. \ref{PathChildrenTree}(c,d). They can be constructed as similarly as the bibliographic network schema.

\begin{figure}[htb]
  \centering
  \includegraphics[width=0.75\textwidth]{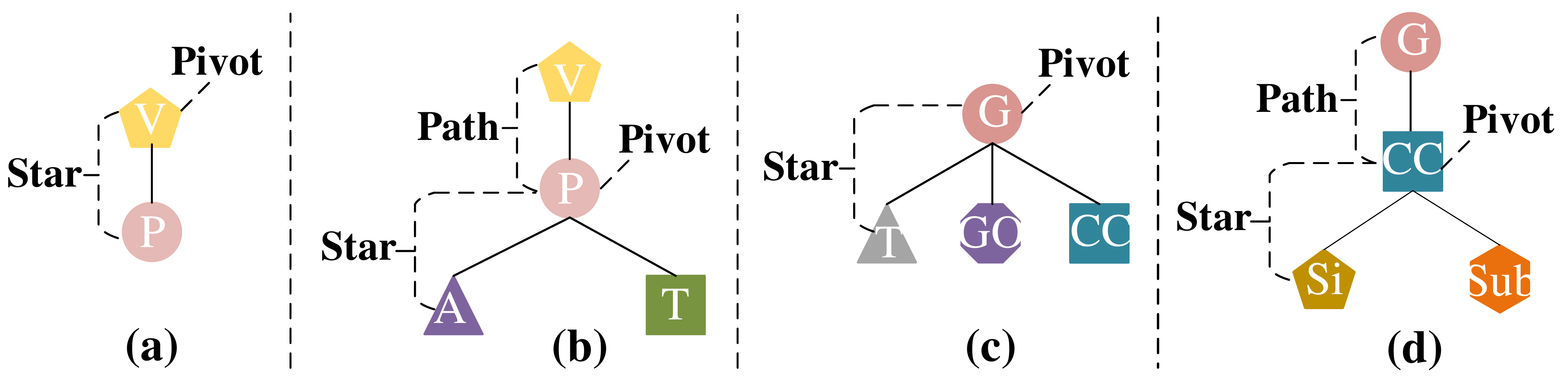}
  \caption{Path-star trees of the bibliographic and biological network schema.}
  \label{PathChildrenTree}
\end{figure}

For the bibliographic network schema, its recurrent path-star meta-structure can be congstructed as follows. The object types $V$ and $P$ are respectively treated as the pivots.
If $V$ is the pivot, it has only one child $P$. Its path-star tree is a single edge, see Fig. \ref{PathChildrenTree}(a).
We repetitively duplicate its star part, and finally obtain a recurrent meta-path shown in Fig. \ref{dblp_decomposition}(b).
It is noteworthy that the path part of the path-star tree is null at this time because the pivot $V$ is the source object type.
If $P$ is the pivot, it has two children $A$ and $T$. Its path-star tree is a tree, see Fig. \ref{PathChildrenTree}(b).
We repetitively duplicate its star part, and the pivot $P$ is linked to the target object type $V$ by the path part of the path-star tree.
Finally, we obtain a recurrent path-star meta-structure, see Fig. \ref{dblp_decomposition}(c).
Obviously, the RecurMS shown in Fig. \ref{dblp_decomposition}(a) can be decomposed into the recurrent meta-path
(see Fig. \ref{dblp_decomposition}(b)) and the recurrent path-star meta-structure (see Fig. \ref{dblp_decomposition}(c)).

For the biological network schema, its recurrent path-star meta-structure can be constructed as follows. The object types $G$ and $CC$ are respectively treated as the pivot.
If $G$ is the pivot, it has three children $T$, $GO$ and $CC$. Its path-star tree is shown in Fig. \ref{PathChildrenTree}(c).
We repetitively duplicate its star part, and finally obtain a recurrent path-star meta-structure shown in Fig. \ref{slap_decomposition}(b).
At this time, the path part of the path-star tree is null because the pivot $G$ is the source object type.
If $CC$ is the pivot, it has two children $Si$ and $Sub$. Its path-star tree is shown in Fig. \ref{PathChildrenTree}(d).
We repetitively duplicate it star part, and the pivot $CC$ is linked to each target object type $G$ by the path part of the path-star tree.
Finally, we obtain a recurrent path-star meta-structure shown in Fig. \ref{slap_decomposition}(c).
Obviously, the RecurMS shown in Fig. \ref{slap_decomposition}(a) can be decomposed into two recurrent path-star meta-structures, respectively shown in Fig. \ref{slap_decomposition}(b,c).

After obtaining the recurrent path-star meta-structures, we employ the commuting matrices of meta-paths or the meta-structures to extract semantics in them.
For recurrent meta-paths, it is comparatively easy to do this. For recurrent path-star meta-structures (not a path), the size of the commuting matrices may be very large
because the Cartesian product may yield a very large set.
At this time, we further decompose the recurrent path-star meta-structures into several simpler substructures respectively called recurrent meta-trees or recurrent meta-path.
The decomposition rule is to respectively consider each child of the pivotal object type.

For the recurrent path-star meta-structure shown in Formula \ref{RecurPathStarMetaStrucSource}, it can be decomposed into several recurrent meta-paths as follows.
\begin{equation}\label{DecompRecurPathStarMetaStrucSource}
\left(\overbrace{T_s,T_{1,1}}^{\infty}\right), \cdots,
\left(\overbrace{T_s,T_{1,n_1}}^{\infty}\right)
\end{equation}
For the recurrent path-star meta-structure shown in Formula \ref{RecurPathStarMetaStrucNonsource}, it can be decomposed into several recurrent meta-trees as follows.
\begin{equation}\label{DecompRecurPathStarMetaStrucNonsource}
\left(T_s,T_{1,i_1},\cdots,\overbrace{T_{j,k},T_{j+1,1}}^{\infty}\right), \cdots,
\left(T_s,T_{1,i_1},\cdots,\overbrace{T_{j,k},T_{j+1,n_{j+1}}}^{\infty}\right)
\end{equation}
Note that all the pivots except the first one in Formula \ref{DecompRecurPathStarMetaStrucNonsource} are also linked to the path $\mathcal{P}_{T_{j,k},T_s}$.

For example, the recurrent path-star meta-structure shown in Fig. \ref{dblp_decomposition}(c) can be decomposed into two recurrent meta-trees, see Fig. \ref{dblp_decomposition}(d,e).
The Fig. \ref{dblp_decomposition}(d) only consider the object type $A$ and the Fig \ref{dblp_decomposition}(e) only consider the object type $T$.
Similarly, the recurrent path-star meta-structure shown in Fig. \ref{slap_decomposition}(b) is decomposed into three recurrent meta-paths, see Fig. \ref{slap_decomposition}(d,e,f).
The recurrent path-star meta-structure shown in Fig. \ref{slap_decomposition}(c) is decomposed into two recurrent meta-trees, see Fig. \ref{slap_decomposition}(g,h).

The deep meta-paths and deep meta-trees shown in Formulas \ref{DecompRecurPathStarMetaStrucSource} and \ref{DecompRecurPathStarMetaStrucNonsource} is an infinite sequence of object types.
In essence, both deep meta-paths and deep meta-trees consists of a finite number of ingredients.
In specific, deep meta-paths consist of the source object type $T_s$ and one of its children $T_c$, and deep meta-trees consist of the path from the pivot $T_p$ up to $T_s$ and one of the children $T_c$ of $T_p$.
Algorithm \ref{DecompRMS} presents the pseudo-code of decomposing the recurrent meta-structure into deep meta-paths or deep meta-trees.
In algorithm \ref{DecompRMS}, recurrent meta-paths such as $\left(T_s,T_c,T_s,T_c,\cdots\right)$ is succinctly denoted as $\left(T_s,T_c\right)$,
and recurrent meta-tree such as $\left(T_s,T_{1,i_1},\cdots,T_{j,i_j},T_p,T_c,T_p,T_c,\cdots\right)$ is succinctly
denoted as $\left(T_s,T_{1,i_1},\cdots,T_{j,i_j},T_p,T_c\right)$. Line 2 employ BFS to construct a spanning tree of $\Theta_G$ rooted as $T_s$.
Lines 3-7 yields the augmented spanning tree $AST_{\Theta_G}$ of $\Theta_G$.
Lines 8-15 traverse the nodes of $AST_{\Theta_G}$ from top to bottom and from left to right, and yields deep meta-paths and deep meta-trees.
The time complexity of algorithm \ref{DecompRMS} is $O(|\mathcal{R}_{AST}|)$.

\begin{algorithm}\footnotesize
\caption{DecompRMS: Decomposing Recurrent Meta Structure }
\begin{algorithmic}[1]
\label{DecompRMS}
\renewcommand{\algorithmicrequire}{\textsc{Input:}}
\renewcommand{\algorithmicensure}{\textsc{Output:}}
\REQUIRE{Network Schema $\Theta_G$, Source Object Type $T_s$.}
\ENSURE{A list of deep meta-paths or deep meta-trees $DSL$}
\STATE $DSL\leftarrow\emptyset$, $AST_{\Theta_G}\leftarrow BFS(\Theta_G,T_s)$;
\FOR{$u\in AST_{\Theta_G}$}
    \FOR{$(u,v)\in\Theta_G$ and $(u,v)\notin AST_{\Theta_G}$}
        \STATE $v\_copy\leftarrow v$;
        \STATE $AST_{\Theta_G}\leftarrow AST_{\Theta_G}\cup\left\{(u,v\_copy)\right\}$ to $AST_{\Theta_G}$;
    \ENDFOR
\ENDFOR
\FOR{$T_p\in AST_{\Theta_G}$}
    \IF{$T_p=T_s$}
        \FOR{ each child $T_c$ of $T_s$}
            \STATE $DSL\leftarrow DSL\cup\left\{(T_s,T_c)\right\}$ to $DSL$;
        \ENDFOR
    \ELSE
        \STATE Construct the path $(T_s,T_{1,i_1},\cdots,T_{j,i_j},T_p)$ from $T_p$ up to $T_s$;
        \FOR{each child $T_c$ of $T_p$}
            \STATE $DSL\leftarrow DSL\cup\left\{(T_s,T_{1,i_1},\cdots,T_{j,i_j},T_p,T_c)\right\}$;
        \ENDFOR
    \ENDIF
\ENDFOR
\RETURN $DSL$;
\end{algorithmic}
\end{algorithm}

\section{Recurrent Meta Structure Based Similarity}\label{sec:deepmetastructure}

This section defines the proposed semantic-rich similarity measure RMSS and presents the pseudo-code of the algorithm for computing the similarity matrix.
RMSS does not depend on any pre-specified schematic structures, and therefore is robust to the schematic structures.
Throughout the paper, $\overline{X}$, which is defined in the definition \ref{def_normalizedVersion}, represents the normalized version of a matrix $X$.

\begin{definition}\label{def_normalizedVersion}
\textbf{(Normalized Matrix)}
The normalization of a matrix $X$ is defined as
\begin{displaymath}
\overline{X}=U_X^{-1}\cdot X,
\end{displaymath}
where $U_X$ is a diagonal matrix whose nonzero entries are equal to the row sum of $X$.
\end{definition}

\subsection{Similarity Measure}\label{subsec:similaritymeasure}

In this section, we first define commuting matrices of recurrent meta-paths and recurrent meta-trees, and then propose two kinds of strategies to determine the weights of these schematic structures.

For the recurrent meta-path $RMP_s$ shown in Formula \ref{DecompRecurPathStarMetaStrucSource}, e.g. Fig. \ref{dblp_decomposition}(b) and Fig. \ref{slap_decomposition}(d,e,f),
they can be collectively denoted as
\begin{equation}\label{DMP_source}
RMP_s=\left(\overbrace{T_s,T_c}^{\infty}\right).
\end{equation}
The substructure $(T_c,T_s,T_c)$ recurs $i=0,1,\cdots$ times in $RMP_s$.
In essence, $RMP_s$ can be decomposed into an infinite number of meta-paths such as
\begin{displaymath}
\begin{array}{c}
RMP^0_s=\left(T_s,T_c,T_s\right), \\
RMP^1_s=\left(T_s,T_c,T_s,T_c,T_s\right), \\
\cdots, \\
RMP^t_s=\left(T_s,\overbrace{T_c,T_s,T_c}^{t},T_s\right), \\
\cdots. \\
\end{array}
\end{displaymath}
The substructure $\left(T_c,T_s,T_c\right)$ recurs $t$ times in the meta path $RMP^t_s$, $t=0,1,\cdots$.
Assume $W_{T_sT_c}$ denotes the relation matrix from $T_s$ to $T_c$ and $\overline{W_{T_sT_c}}$ is its normalized version.
The commuting matrix $\mathcal{M}_{RMP_s}$ of $RMP_s$ is defined as the summation of the commuting matrices of $RMP^t_s, t=0,1,\cdots$, see Formula \ref{CommutingMatrix_source}.
\begin{equation}\label{CommutingMatrix_source}
\begin{array}{rl}
\mathcal{M}_{RMP_s} & =\displaystyle\sum^{\infty}_{t=0}\mathcal{M}_{RMP^t_s} \\
 & =W_{T_sT_c}\times\left[\displaystyle\sum^{\infty}_{i=0}W^T_{T_sT_c}\cdot W_{T_sTc}\right]\times W^T_{T_sTc}
\end{array}
\end{equation}
In order to ensure that the matrix series converges, all of the commuting matrices in \ref{CommutingMatrix_source} are normalized according to Formula \ref{def_normalizedVersion}
and a decaying parameter $\lambda$ is used here. The Perron-Frobenius theorem is used here \cite{Roger:MatrixAnalysis}.
The normalized version of $\mathcal{M}_{RMP_s}$ is defined in Formula \ref{normalized_CommutingMatrix_source}.
\begin{equation}\label{normalized_CommutingMatrix_source}
\begin{array}{rl}
\overline{\mathcal{M}_{RMP_s}} & =\overline{W_{T_sT_c}}\times
\left[\displaystyle\sum_{i=0}^{\infty}\left(\lambda\cdot\overline{W^T_{T_sT_c}\cdot W_{T_sT_c}}\right)^i\right]\times
\overline{W^T_{T_sT_c}} \\
 & =\overline{W_{T_sT_c}}\times
\left(\mathbb{I}-\lambda\cdot\overline{W^T_{T_sT_c}\cdot W_{T_sT_c}}\right)^{-1}\times
\overline{W^T_{T_sT_c}}. \\
\end{array}
\end{equation}
where $\lambda\in(0,1)$ is called decaying parameter and $\mathbb{I}$ is the identity matrix with the same size as
$\overline{W^T_{T_sT_c}\cdot W_{T_sT_c}}$.
Note that $\overline{\mathcal{M}_{RMP_s}}$ may be diagonal. At this time, $RMP_s$ should be removed from all the
recurrent meta-paths because it can not provide any useful information for the similarities between source objects.

For the recurrent meta-tree $RMT_p$ shown in Formula \ref{DecompRecurPathStarMetaStrucNonsource}, e.g. Fig. \ref{dblp_decomposition}(d,g) and Fig. \ref{slap_decomposition}(g,h),
they can be collectively denoted as
\begin{equation}\label{DMP_notSource}
RMT_p=\left(T_s,T_{1,i_1},\cdots,T_{j,i_j},\overbrace{T_p,T_c}^{\infty}\right).
\end{equation}
Note that for each $T_p$ in $RMT_p$, its right side is also linked to the path $\left(T_p,T_{j,i_j},\cdots,T_{1,i_1},T_s\right)$,
see Fig. \ref{dblp_decomposition}(d,e) and Fig. \ref{slap_decomposition}(g,h).
In essence, $RMT_p$ can be decomposed into an infinite number of meta-paths such as
\begin{displaymath}
\begin{array}{c}
RMT^0_p=\left(T_s,T_{1,i_1},\cdots,T_{j,i_j},T_p,T_{j,i_j},\cdots,T_{1,i_1},T_s\right), \\
RMT^1_p=\left(T_s,T_{1,i_1},\cdots,T_{j,i_j},T_p,T_c,T_p,T_{j,i_j},\cdots,T_{1,i_1},T_s\right), \\
\cdots, \\
RMT^t_p=\left(T_s,T_{1,i_1},\cdots,T_{j,i_j},\overbrace{T_p,T_c,T_p}^{t},T_{j,i_j},\cdots,T_{1,i_1},T_s\right), \\
\cdots.
\end{array}
\end{displaymath}
The substructure $(T_p,T_c,T_p)$ recurs $t$ times in the meta-path $RMT^t_p, t=0,1,\cdots$.
Therefore, the commuting matrix of $RMT_p$ is defined as the summation of the commuting matrices of $RMT^t_p, t=0,1,\cdots$, see Formula \ref{CommutingMatrix_notSource}.
\begin{equation}\label{CommutingMatrix_notSource}
\begin{array}{rl}
\mathcal{M}_{RMT_p} &
=F_l\times\left[\displaystyle\sum_{i=0}^{\infty}\left(\lambda\cdot W_{T_pT_c}\cdot W_{T_pT_c}^T\right)^i
\right]\times F_r \\
& =F_l\times\left(\mathbf{I}-\lambda W_{T_pT_c}\cdot W^T_{T_pT_c}\right)^{-1}\times F_r,
\end{array}
\end{equation}
where
\begin{displaymath}
F_l=W_{T_sT_{1,i_1}}\cdot\left(\prod^{n-1}_{k=1}W_{T_{k,i_k}T_{k+1,i_{k+1}}}\right)\cdot W_{T_{j,i_j}T_p},
\end{displaymath}
and
\begin{displaymath}
F_r=W_{T_{j,i_j}T_p}^T\cdot\left(\prod^{n-1}_{k=1}W_{T_{n-k,i_{n-k}}T_{n-k+1,i_{n-k+1}}^T}\right)\cdot W_{T_sT_{1,i_1}}^T.
\end{displaymath}
In order to ensure that the matrix series converges, all of the commuting matrices in Formula \ref{CommutingMatrix_notSource} are normalized according to Formula \ref{def_normalizedVersion}
and the decaying parameter $\lambda$ is used as well. The Perron-Frobenius theorem is used here \cite{Roger:MatrixAnalysis}.
The normalized version of $\mathcal{M}_{RMT_p}$ is defined in Formula \ref{normalized_CommutingMatrix_notSource}.
\begin{equation}\label{normalized_CommutingMatrix_notSource}
\begin{array}{rl}
\overline{\mathcal{M}_{RMT_p}} &
=\overline{F_l}\times\left[\displaystyle\sum_{i=0}^{\infty}\left(\lambda\cdot\overline{W_{T_pT_c}\cdot W_{T_pT_c}^T}\right)^i
\right]\times\overline{F_r} \\
& =\overline{F_l}\times\left(\mathbf{I}-\lambda\overline{W_{T_pT_c}\cdot W^T_{T_pT_c}}\right)^{-1}\times\overline{F_r},
\end{array}
\end{equation}
where
\begin{displaymath}
\overline{F_l}=\overline{W_{T_sT_{1,i_1}}}\cdot\left(\prod^{n-1}_{k=1}\overline{W_{T_{k,i_k}T_{k+1,i_{k+1}}}}\right)\cdot\overline{W_{T_{j,i_j}T_p}},
\end{displaymath}
and
\begin{displaymath}
\overline{F_r}=\overline{W_{T_{j,i_j}T_p}^T}\cdot\left(\prod^{n-1}_{k=1}\overline{W_{T_{n-k,i_{n-k}}T_{n-k+1,i_{n-k+1}}}^T}\right)\cdot\overline{W_{T_sT_{1,i_1}}^T}.
\end{displaymath}

Both the recurrent meta-paths and recurrent meta-trees only consider the structure of the network schema $\Theta_G$, but ignore the structure of the HIN $G$.
In fact, they play different roles in the HIN due to the sparsity and strength of their instances, i.e. the sparsity and strength of the entries of their commuting matrices.
Therefore, we should combine the commuting matrices of different recurrent meta-paths according to different weights.
Below, we introduce two kinds of strategies, global weighting strategy and local weighting strategy, to determine these weights.

The global weighting strategy is to determine the weight of a recurrent meta-path or recurrent meta-tree by the strength of its commuting matrix, i.e. the sum of all the entries of the commuting matrix.
The local weighting strategy is to determine the weight of a recurrent meta-path or recurrent meta-tree by the sparsity of its instances.
Take the recurrent meta-tree $RMT_p$ shown in the Formula \ref{DMP_notSource} as an example.
We traverse the objects belonging to $T_p$ for $N$ times, and then randomly sample an object from their neighbors.
The drawn object must belong to $T_{c_1},\cdots T_{c_{m-1}}$ or $T_{c_m}$.
Let $Num_{T_{c_i}}$ denote the number of the drawn objects belonging to $T_{c_i}$. The frequency from $T_p$ to $T_{c_i}$ is equal to $\omega_{T_pT_{c_i}}=\frac{Num_{T_{c_i}}}{N}$.
As a result, the weight of the recurrent meta-tree $RMT_p$ is equal to
\begin{equation}\label{weights}
\omega_{RMT_p}=\omega_{T_sT_1}\left(\prod^{n-1}_{i=1}\omega_{T_iT_{i+1}}\right)\omega_{T_nT_p}\omega_{T_pT_c}.
\end{equation}

The proposed similarity measure RMSS is defined as,
\begin{equation}\label{DMSS_def}
RMSS(o_s,o_t)=\frac{\mathcal{U}(o_s,o_t)}{\mathcal{U}(o_s,o_s)},
\end{equation}
where
\begin{displaymath}
\mathcal{U}=\sum_{T_p\in\mathcal{I}}\omega_{RMX_p}\overline{\mathcal{M}_{RMX_p}}.
\end{displaymath}
Note that $RMX_p=RMP_s$ when $T_p=T_s$ and $RMX_p=RMT_p$ when $T_p\neq T_s$.
Obviously, the proposed RMSS is asymmetric. In reality, a lot of similarities are asymmetric (i.e. directed).
Take the bibliographic information network as example. Yizhou Sun is similar to Jiawei Han because she is one of Professor Han's students.
However, Professor Han is relatively less similar to Yizhou Sun because Han has so many students that his similarities are allocated to all the students.

\textbf{NOTE.} According to Formula \ref{DMSS_def}, we only need to compute $\omega_{RMX_p}$ and $\overline{\mathcal{M}_{RMX_p}}$ for each recurrent meta-path or recurrent meta-tree.
This means we can employ the distributed computing techniques to speed up the computation. All the recurrent meta-paths and recurrent meta-trees can be separately assigned to different computation nodes.
For each recurrent meta-path or recurrent meta-tree, we can employ Graphics Processing Unit (\textit{GPU}) to speed up the matrix operations.

\subsection{Algorithm Description}\label{subsec:algorithmdescription}

\begin{algorithm}\footnotesize
\caption{Computing RMSS}
\begin{algorithmic}[1]
\label{DMSSALG}
\renewcommand{\algorithmicrequire}{\textsc{Input:}}
\renewcommand{\algorithmicensure}{\textsc{Output:}}
\REQUIRE{HIN $G$, Network Schema $\Theta_G$, Source Object Type $T_s$, Decaying Parameter $\lambda\in(0,1)$, weight\_type.}
\ENSURE{Similarity Matrix RMSS}
\STATE \textbf{Offline Computing:}
\STATE $DSL\leftarrow\text{DecompRMS}(\Theta_G,T_s)$;
\FOR{$dsl\in DSL$}
    \STATE Compute $\mathcal{M}_{dsl}$ via Formulas \ref{normalized_CommutingMatrix_source} or \ref{normalized_CommutingMatrix_notSource};
\ENDFOR
\STATE $dmp\_weight\_dict\leftarrow\left\{(dsl,\emptyset): dsl\in DSL\right\}$;
\IF{weight\_type=``global"}
    \STATE Compute $dmp\_weight\_dict$ via the global weighting strategy;
\ELSE
    \STATE Compute $dmp\_weight\_dict$ via the local weighting strategy;
\ENDIF
\STATE \textbf{Online Computing:}
\STATE Compute $RMSS(o_s,o_t)$ for any two objects $o_s$ and $o_t$ via Formula \ref{DMSS_def}.
\RETURN RMSS.
\end{algorithmic}
\end{algorithm}

In this section, we present the pseudo-code of the algorithm for computing RMSS, see algorithm \ref{DMSSALG}.
The algorithm includes two parts: offline part and online part. The offline part (from line 2 to line 11 in algorithm \ref{DMSSALG}) takes responsibility for
1) periodically computing the commuting matrices of the recurrent meta-paths and recurrent meta-trees;
2) computing the weights of the recurrent meta-paths and recurrent meta-trees. The online part (line 13 in algorithm \ref{DMSSALG}) takes responsibility for computing the similarity matrix RMSS.

Algorithm \ref{DMSSALG} spends most of the time on the offline part. This part spends most of the time on lines 3-5 involving a number of matrix operations, e.g. matrix multiplication, matrix inverse.
For a $s\times s$ matrix, the time complexity of calculating its inverse is $O(s^3)$. For a $r\times s$ matrix and a $s\times t$ matrix, the time complexity of multiplying them is $O(r\times s\times t)$.
For the recurrent meta-path shown in Formula \ref{DMP_source}, it takes
\begin{displaymath}
O\left((|T_s|+|T_c|)\cdot|T_c|^2\right)
\end{displaymath}
to compute its commuting matrix.
For the recurrent meta-tree shown in Formula \ref{DMP_notSource},
it takes
\begin{displaymath}
O\left(|T_s|\cdot|T_1|\cdot|T_2|+\sum_{i=1}^{n-2}|T_i|\cdot|T_{i+1}|\cdot|T_{i+2}|+|T_{n-1}|\cdot|T_n|\cdot|T_p|+|T_p|^3\right)
\end{displaymath}
to compute its commuting matrix.
As a result, the time complexity of algorithm \ref{DMSSALG} is the maximum of the above two terms.

Below, we take the HIN shown in Fig. \ref{example_hins} as an example to present the procedure of computing RMSS.
As shown in Fig. \ref{dblp_decomposition}, its RecurMS
can be decomposed into a recurrent meta-path (see \ref{dblp_decomposition}(b)) and a recurrent path-star meta-structure (see \ref{dblp_decomposition}(c)).
The recurrent path-star meta-structure can be further decomposed two recurrent meta-trees, see Fig. \ref{dblp_decomposition}(d,e).
The commuting matrix of the recurrent meta-path shown in Fig. \ref{dblp_decomposition}(b) is a diagonal one because each paper can only be published in a single venue.
Therefore, The commuting matrix should be ignored because it can not provide any useful information for the similarity measure.
The commuting matrix of the recurrent meta-path shown in Fig. \ref{dblp_decomposition}(d) is
\begin{displaymath}
\mathcal{M}_A=
\begin{bmatrix}
1.26397 & 0.17306 & 0.02962 & 0.53333 \\
0.17306 & 1.26397 & 0.02962 & 0.53333 \\
0.05925 & 0.05925 & 1.24814 & 0.63333 \\
0.15555 & 0.15555 & 0.08888 & 1.60000 \\
\end{bmatrix}
.
\end{displaymath}
Similarly, the commuting matrix of the recurrent meta-path shown in Fig. \ref{dblp_decomposition}(e) is
\begin{displaymath}
\mathcal{M}_T=
\begin{bmatrix}
1.28688 & 0.04076 & 0.21267 & 0.45967 \\
0.02912 & 1.22955 & 0.16789 & 0.57342 \\
0.09666 & 0.10684 & 1.24700 & 0.54948 \\
0.07520 & 0.14452 & 0.20369 & 1.57658 \\
\end{bmatrix}
.
\end{displaymath}
For the global weighting strategy, the corresponding similarity matrix is
\begin{displaymath}
RMSS\_global=
\begin{bmatrix}
1.00000 & 0.08382 & 0.09498 & 0.38928 \\
0.08108 & 1.00000 & 0.07921 & 0.44385 \\
0.06249 & 0.06657 & 1.00000 & 0.47404 \\
0.07264 & 0.09446 & 0.09210 & 1.00000 \\
\end{bmatrix}
.
\end{displaymath}
For the local weighting strategy, the corrsponding similarity matrix is
\begin{displaymath}
RMSS\_local=
\begin{bmatrix}
1.00000 & 0.07575 & 0.10586 & 0.38431 \\
0.07236 & 1.00000 & 0.08792 & 0.44727 \\
0.06480 & 0.06950 & 1.00000 & 0.46891 \\
0.06883 & 0.09403 & 0.09776 & 1.00000 \\
\end{bmatrix}
.
\end{displaymath}
The rows/columns of RMSS respectively represent `AAAI', `KDD', `TKDE' and `VLDB'.

\section{Experimental Evaluations}\label{sec:expriment}

In this section, we compare RMSS with the state-of-the-art metrics on three real datasets.
As similarly as the literatures \cite{SHYYW:2011,HZCSML:2016}, we also employ the ranking quality and the clustering quality to evaluate the goodness of metrics.
The configuration of my PC is Intel(R) Core(TM) i5-4570 CPU @ 3.20GHz and RAM 12GB.

\subsection{Evaluation Metrics}\label{subsec:comparison_metrics}

For the ranking task, we choose a popular comparison metric which is also used in papers \cite{SHYYW:2011,HZCSML:2016},
called Normalized Discounted Cumulative Gain ($nDCG$, the bigger its value, the better the ranking) \cite{nDCG:2013}, to evaluate the quality of ranking.
$nDCG$ is defined in Formula \ref{def_ndcg}.
\begin{equation}\label{def_ndcg}
nDCG=\frac{DCG}{iDCG},
\end{equation}
where
\begin{displaymath}
DCG=\sum^n_{j=1}\frac{2^{r(j)-1}}{\log(1+j)}.
\end{displaymath}
and $iDCG$ is the ideal $DCG$. Note that $iDCG$ is calculated according to the ideal ranking result.

For the clustering task, we also choose a popular comparison metric which is used in papers \cite{SHYYW:2011,HZCSML:2016},
called Normalized Mutual Information ($NMI$, the bigger its value, the better the clustering) \cite{SHYYW:2011}, to evaluate the quality of clustering.
$NMI$ is defined in Formula \ref{def_nmi}.
\begin{equation}\label{def_nmi}
NMI=\frac{2\times I(\Omega,C)}{H(\Omega)+H(C)},
\end{equation}
where
\begin{displaymath}
I(\Omega,C)=\sum_{k}\sum_{j}\frac{|\omega_k\cap c_j|}{N}\log\frac{N|\omega_k\cap c_j|}{|\omega_k||c_j|},
\end{displaymath}
and
\begin{displaymath}
H(\Omega)=-\sum_{k}\frac{|\omega_k|}{N}\log\frac{|\omega_k|}{N}.
\end{displaymath}
$\Omega=\left\{\omega_1,\cdots,\omega_K\right\}$ is the set of clusters, and $C=\left\{c_1,\cdots,c_J\right\}$ is the set of classes.

\subsection{Datasets}\label{subsec:dataset}

Three real datasets, respectively called \textbf{DBLPc}, \textbf{DBLPr} and \textbf{BioIN}, are used here.
The first two are extracted from \textbf{DBLP}\footnote{http://dblp.uni-trier.de/db/}.
The last is extracted from \textbf{Chem2Bio2RDF} \cite{CDJWZ:2010,FDSCSB:2016}.
They are summarized in table \ref{datasetSummarization}
DBLPc includes 21 venues coming from four areas: database, data mining, information retrieval and machine learning, 25858 papers, 25780 authors and 10436 terms.
DBLPr includes 20 venues, 23906 papers, 24078 authors and 9862 terms.
Their network schema is shown in Fig. \ref{example_network_schema}(a).
BioIN includes 2018 genes, 300 tissues, 4331 gene ontology instances, 224 substructures, 712 side effects and 18097 chemical compounds.
Its network schema is shown in Fig. \ref{example_network_schema}(b).
Note in particular that we only consider the genes assigned to a single cluster here because we use k-means algorithm to cluster the genes.
The RecurMS for DBLPc and DBLPr is shown in Fig. \ref{dblp_decomposition}(a),
and for BioIN is shown in Fig.  \ref{slap_decomposition}(a).

\begin{table}\footnotesize
  \centering
  \caption{Dataset Summarization.}
  \begin{tabular}{c|c|c}
    \hline
     \textbf{Dataset}                & \textbf{Object Type} & \textbf{Num} \\
    \hline
     \multirow{4}{*}{DBLPr} & $Paper$              & 23906 \\ \cline{2-3}
                                     & $Venue$              & 20 \\ \cline{2-3}
                                     & $Term$               & 9862 \\ \cline{2-3}
                                     & $Author$               & 24078 \\
    \hline
    \hline
     \multirow{4}{*}{DBLPc} & $Paper$              & 25858 \\ \cline{2-3}
                                     & $Venue$              & 21 \\ \cline{2-3}
                                     & $Term$               & 10436 \\ \cline{2-3}
                                     & $Author$               & 25780 \\
    \hline
    \hline
     \multirow{6}{*}{BioIN} & $GeneOntology$       & 4331 \\ \cline{2-3}
                                     & $Tissue$             & 300 \\ \cline{2-3}
                                     & $Gene$               & 2018 \\ \cline{2-3}
                                     & $ChemicalCompound$   & 18097 \\ \cline{2-3}
                                     & $SideEffect$         & 712 \\ \cline{2-3}
                                     & $Substructure$       & 224 \\ \cline{2-3}
    \hline
  \end{tabular}
  \label{datasetSummarization}
\end{table}

\subsection{Baselines}\label{subsec:baseline}

In this paper, RMSS is compared with three state-of-the-art similarity metrics: BSCSE \cite{HZCSML:2016}, BPCRW \cite{LC:2010a,LC:2010b}, PathSim \cite{SHYYW:2011}.
Let $\mathcal{P}$ and $\mathcal{S}$ respectively denote a meta-path and a meta-structure. For a given source-target object pair $(o_s,o_t)$, they are defined as follows.
\begin{displaymath}
BSCSE(g,i|\mathcal{S},o_t)=\frac{\textstyle\sum_{g'\in\sigma(g,i|\mathcal{S},G)}BSCSE(g',i+1|\mathcal{S},o_t)}{|\sigma(g,i|\mathcal{S},G)|^{\alpha}}.
\end{displaymath}
\begin{displaymath}
BPCRW(o,o_t|\mathcal{P})=\frac{\sum_{o'\in N_{\mathcal{P}}(o)}BPCRW(o',o_t|\mathcal{P})}{|N_{\mathcal{P}}(o)|^{\alpha}}.
\end{displaymath}
\begin{displaymath}
PathSim(o_s,o_t|\mathcal{P})=\frac{2\times\mathcal{M}_{\mathcal{P}}(o_s,o_t)}{\mathcal{M}_{\mathcal{P}}(o_s,o_s)+\mathcal{M}_{\mathcal{P}}(o_t,o_t)}.
\end{displaymath}

In these definitions, $\alpha$ is a biased parameter.
For BSCSE, $\sigma(g,i|\mathcal{S},G)$ denotes the ($i+1$)-th layer's instances expanded from $g\in\mathcal{S}[1:i]$ on $G$ \cite{HZCSML:2016}.
For BPCRW, $N_{\mathcal{P}}(o)$ denotes the neighbors of $o$ along meta-path $\mathcal{P}$ \cite{LC:2010b,LC:2010a}.
For PathSim, $\mathcal{M}_{\mathcal{P}}$ denotes the commuting matrix of the meta-path $\mathcal{P}$ \cite{SHYYW:2011}.

BSCSE and BPCRW involve a biased parameter $\alpha$. In this paper, $\alpha$ is respectively set to 0.1, 0.3, 0.5, 0.7 and 0.9.
For the proposed RMSS, its decaying parameter $\lambda$ is set to 0.1, 0.2, 0.3, 0.4, 0.5, 0.6, 0.7, 0.8, 0.9.
Below we respectively evaluate the values of $NMI$ and $nDCG$ yielded by RMSS, PathSim, $BPACRW$ and BSCSE under these parameter settings.

%

\subsection{Sensitivity Analysis}\label{subsec:sensitivity}

As stated previously, the proposed RMSS is robust (insensitive) to schematic structures, because it combines all the possible schematic structures in the form of recurrent meta-paths and recurrent meta-trees.
Before proceeding, we first evaluate the sensitivity of the state-of-the-art metrics (PathSim, BPCRW and BSCSE) in terms of clustering task and ranking task.
Specifically, PathSim and BPCRW take a meta-path as input, and BSCSE take a meta-structure as input.
When they are feeded different schematic structures, i.e. different meta-paths or meta-structures, we will show the fluctuation of the clustering quality ($NMI$) and the ranking quality ($nDCG$).

\subsubsection{Sensitivity in terms of clustering quality}\label{subsec:insensitivity_nmi}

Two meta-paths $(V,P,A,P,V)$ and $(V,P,T,P,V)$ on DBLPc are selected for PathSim and BPCRW.
Then, we compare the $NMI$ values for these two meta-paths under different biased parameters $\alpha$.
On BioIN, four meta-paths $(G,GO,G)$, $(G,T,G)$, $(G,CC,Si,CC,G)$ and $(G,CC,Sub,CC,G)$ are selected for PathSim and BPCRW.
Two meta-structures $(G,(GO,T),G)$ and $(G,CC,(Si,Sub),CC,G)$ are selected for BSCSE.
Then, we compare the $NMI$ values for these schematic structures under different biased parameters $\alpha$.
Note in particular PathSim does not depend on any parameters. Therefore, its lines for different meta-paths are always parallel to x-axis.

\begin{figure}[htb]
  \centering
  \includegraphics[width=0.6\textwidth]{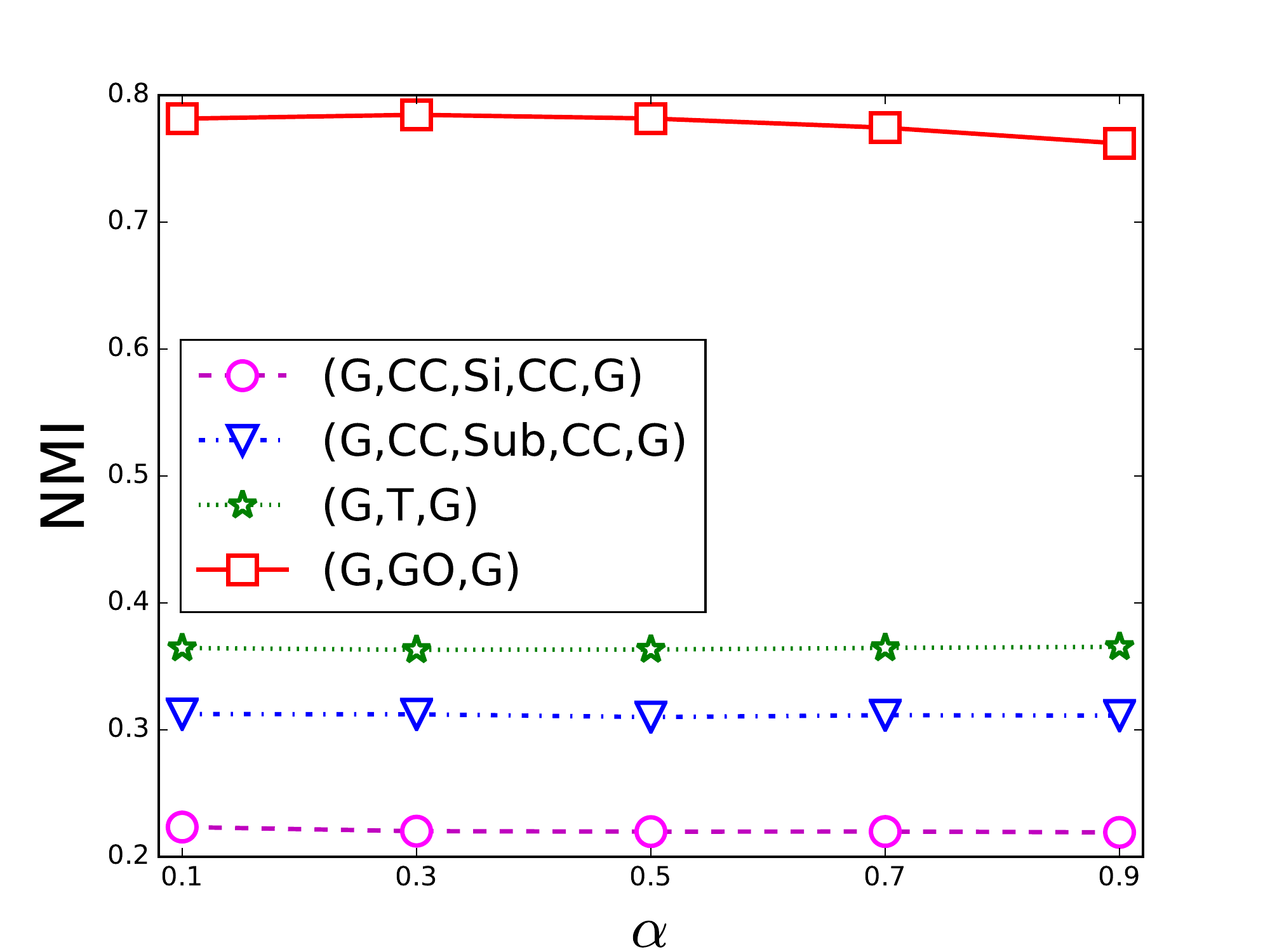}
  \caption{Sensitivity of BPCRW to different schematic structures on BioIN in terms of clustering.}
  \label{insensitivity_slap_bpcrw}
\end{figure}

\begin{figure}[htb]
  \centering
  \includegraphics[width=0.6\textwidth]{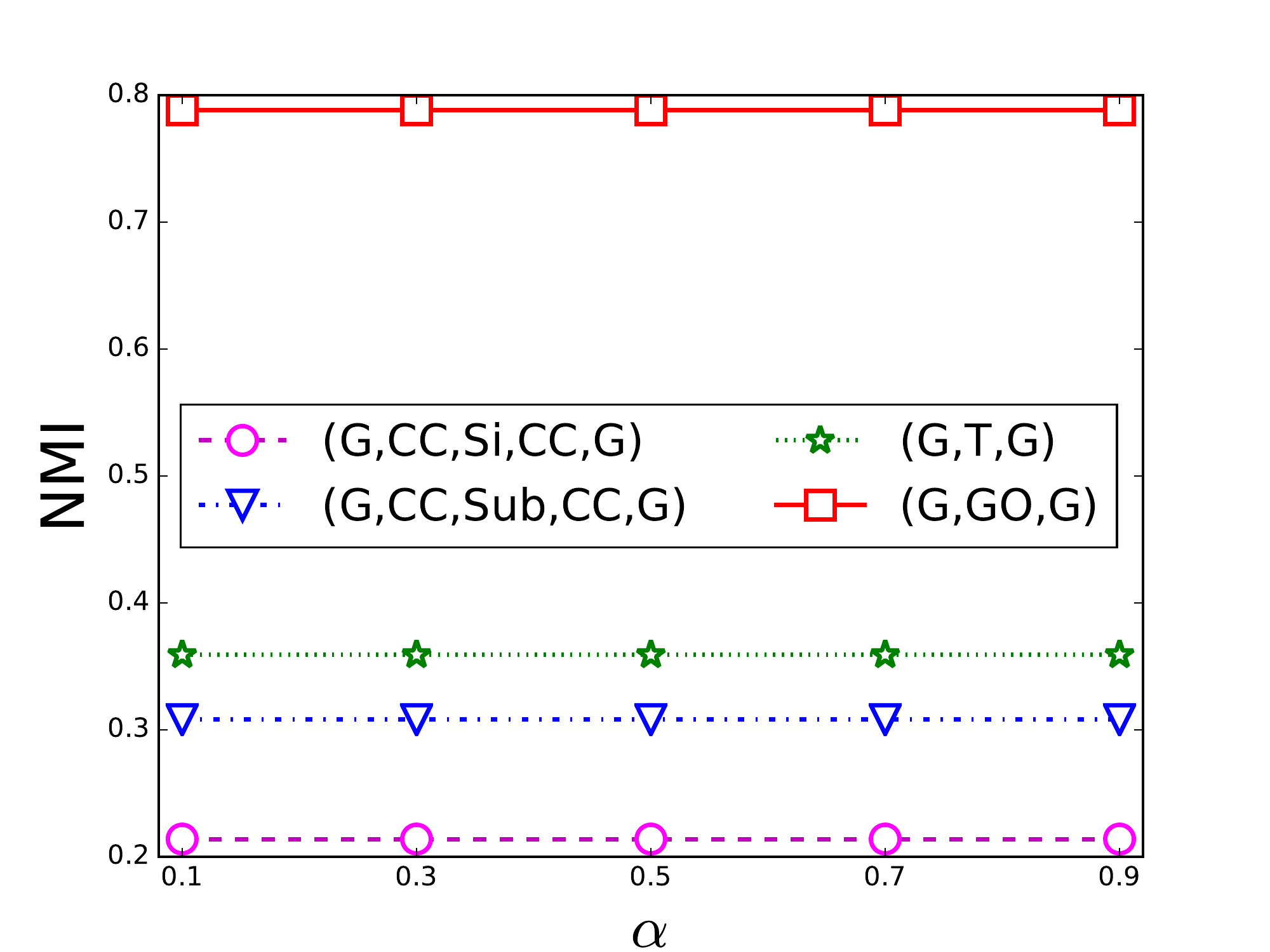}
  \caption{Sensitivity of PathSim to different schematic structures on BioIN in terms of clustering.}
  \label{insensitivity_slap_pathsim}
\end{figure}

\begin{figure}[htb]
  \centering
  \includegraphics[width=0.6\textwidth]{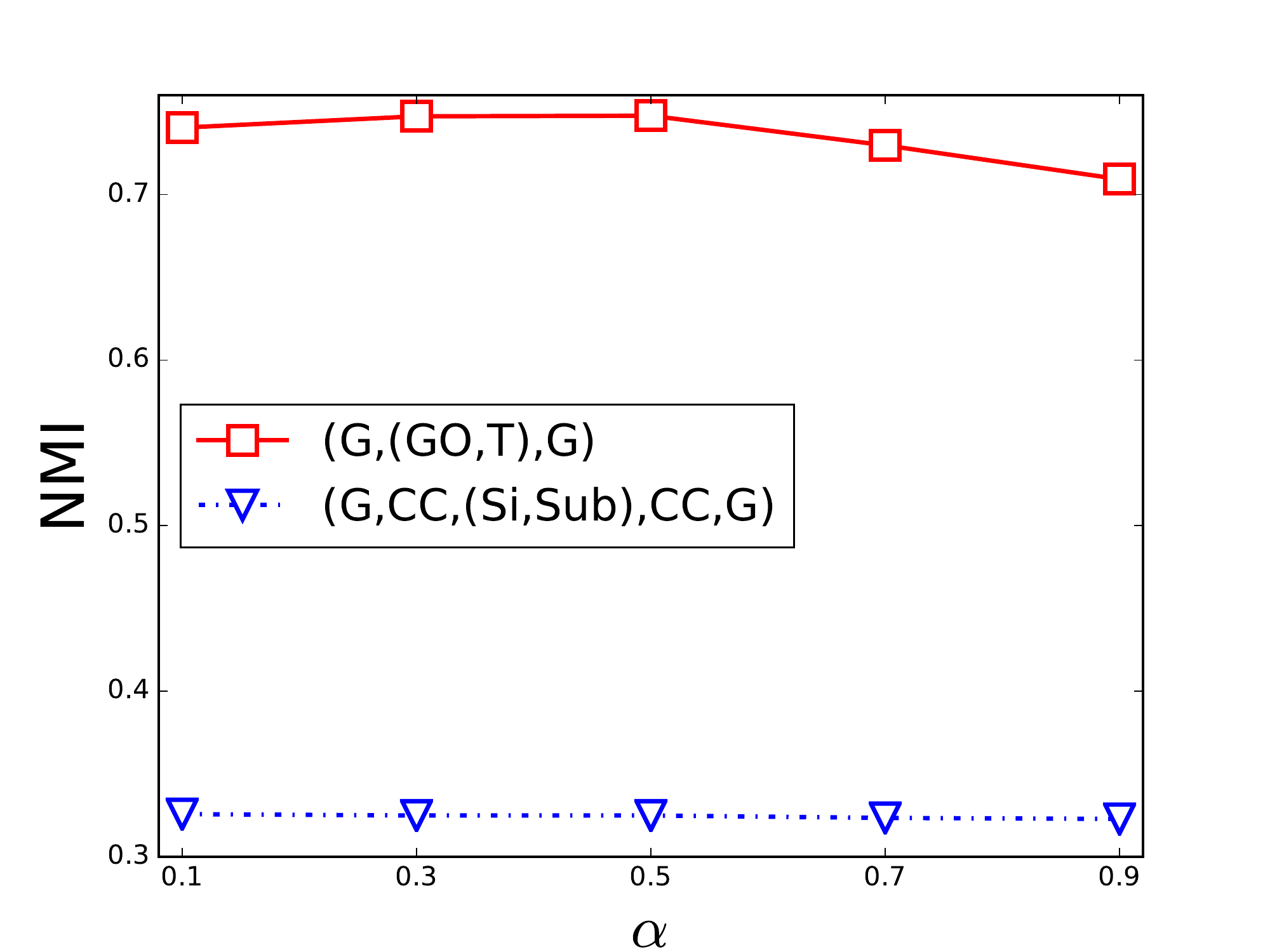}
  \caption{Sensitivity of BSCSE to different schematic structures on BioIN in terms of clustering.}
  \label{insensitivity_slap_bscse}
\end{figure}

Fig. \ref{insensitivity_slap_bpcrw} and Fig. \ref{insensitivity_slap_pathsim} respectively show the $NMI$ values under different $\alpha$ for BPCRW and PathSim respectively with different meta-paths on BioIN.
Fig. \ref{insensitivity_slap_bscse} shows the $NMI$ values under different $\alpha$ for BSCSE with different meta-structures on BioIN.
For PathSim and BPCRW, the $NMI$ values with $(G,GO,G)$ are much larger than those with the other meta-paths.
For BSCSE, the $NMI$ values with $(G,(GO,T),G)$ are much larger than
that with $(G,CC,(Si,Sub),CC,G)$. This reveals PathSim and BPCRW are sensitive to meta-paths, and BSCSE is sensitive to meta-structures.

\begin{figure}[htb]
  \centering
  \includegraphics[width=0.6\textwidth]{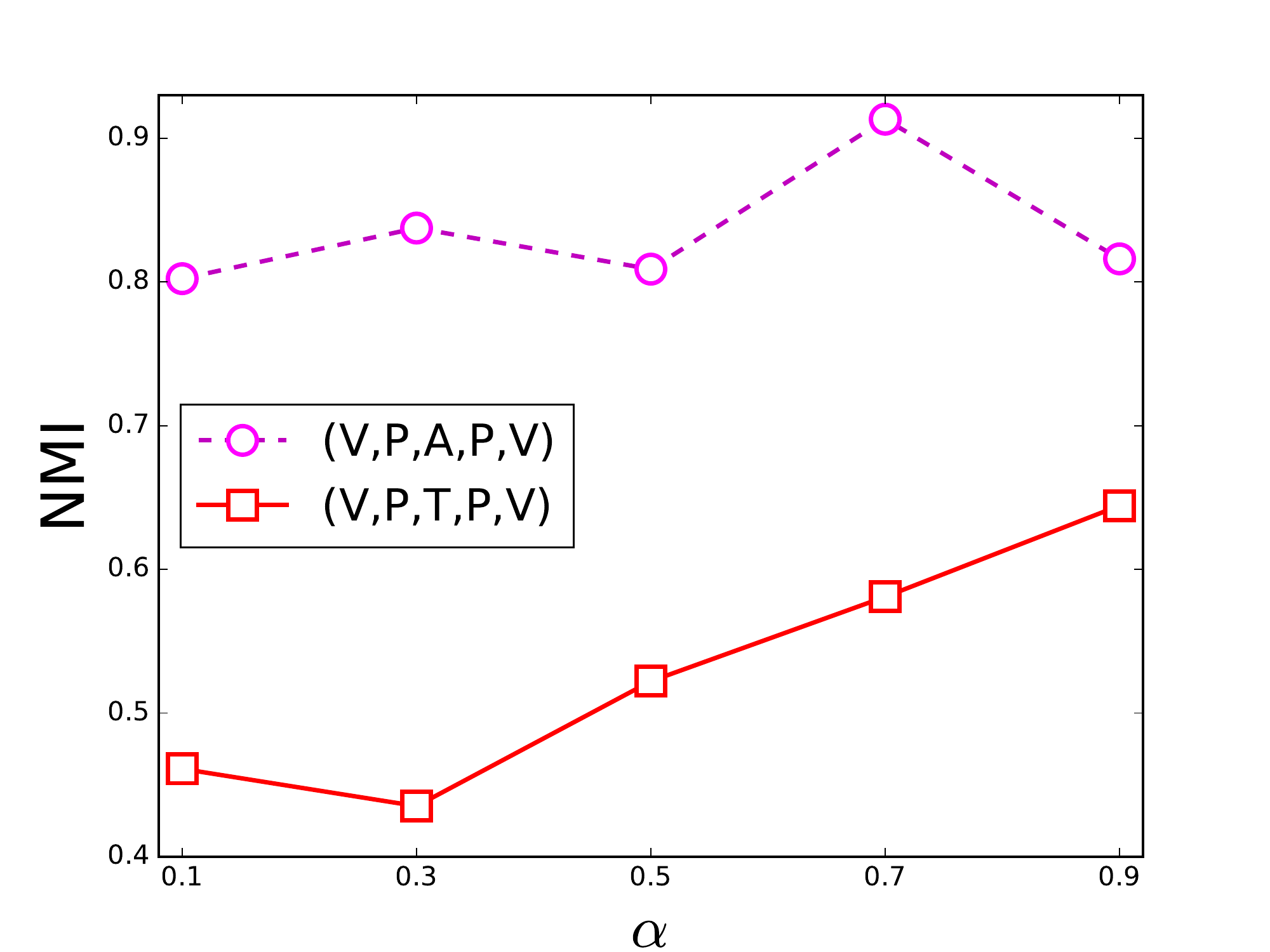}
  \caption{Sensitivity of BPCRW to different meta-paths DBLPc in terms of clustering.}
  \label{insensitivity_dblp_bpcrw}
\end{figure}

\begin{figure}[htb]
  \centering
  \includegraphics[width=0.6\textwidth]{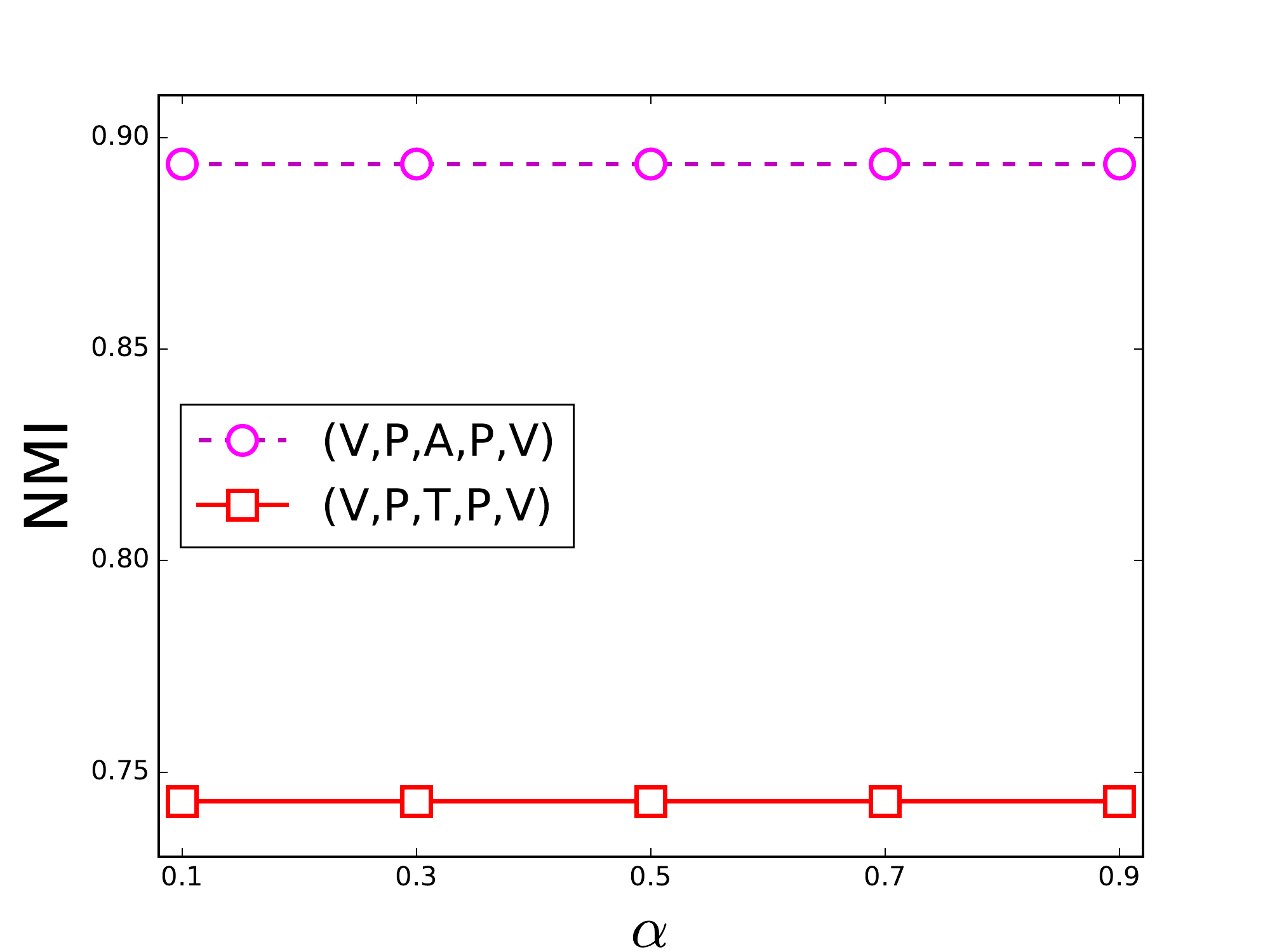}
  \caption{Sensitivity of PathSim to different meta-paths on DBLPc in terms of clustering.}
  \label{insensitivity_dblp_pathsim}
\end{figure}

Fig. \ref{insensitivity_dblp_bpcrw} and \ref{insensitivity_dblp_pathsim} respectively show the $NMI$ values
under different $\alpha$ for PathSim and BPCRW with different meta-paths on DBLPc.
Note in particular that we do not consider the sensitivity of BSCSE to different meta-structures because the meta-structure $(V,P,(A,T),P,V)$ is most frequently used on DBLPc.
According to these figures, we know that BPCRW and PathSim with the meta-path $(V,P,A,P,V)$ achieve significantly better clustering quality that those with the meta-path $(V,P,T,P,V)$.
That is to say, BPCRW and $PahtSim$ are also sensitive to different meta-paths on DBLPr.

To be summarized, all of the baselines are sensitive to different schematic structures on BioIN and DBLPc.
This suggests that it is important for the baselines to select an appropriate schematic structures.
The proposed metric RMSS does not depend on any schematic structures. This is the biggest advantage of RMSS relative to the baselines.

\subsubsection{Sensitivity in terms of ranking quality}\label{subsec:insensitivity_ndcg}

\begin{figure}[htb]
  \centering
  \includegraphics[width=0.6\textwidth]{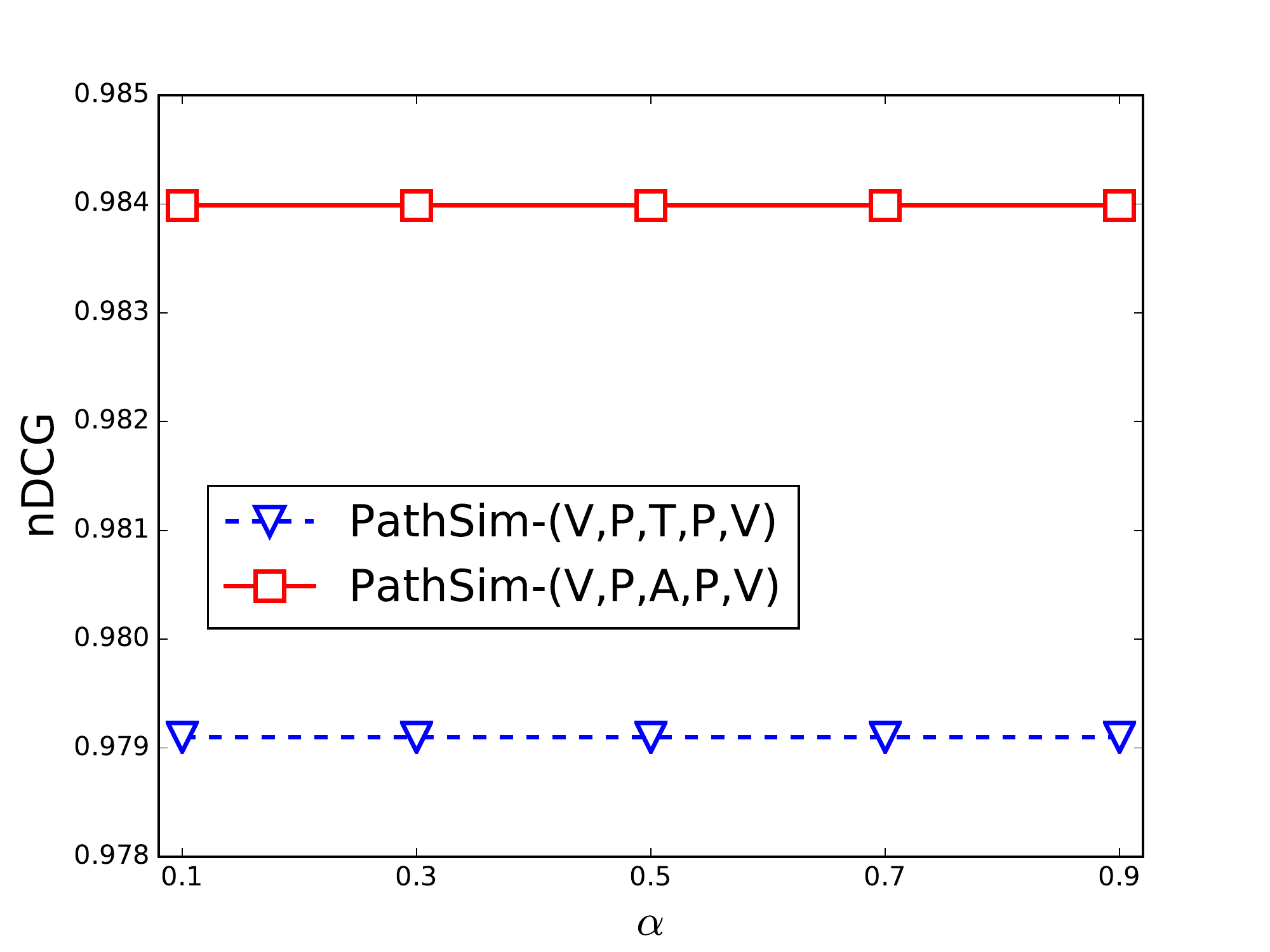}
  \caption{Sensitivity of PathSim to different meta-paths on DBLPr in terms of ranking with CIKM as the source object.}
  \label{insensitivity_pathsim_ndcg_kdd}
\end{figure}

\begin{figure}[htb]
  \centering
  \includegraphics[width=0.6\textwidth]{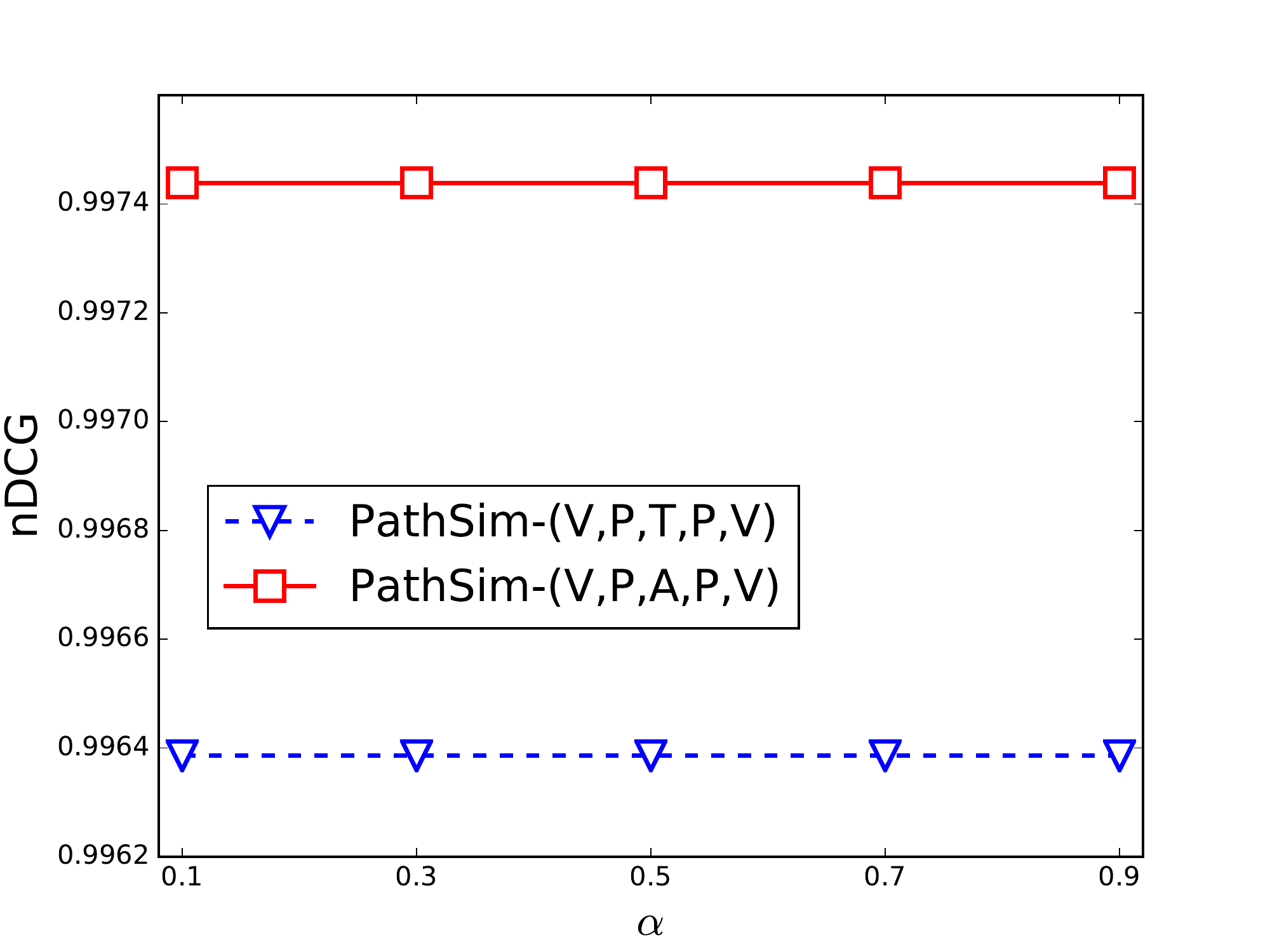}
  \caption{Sensitivity of PathSim to different meta-paths on DBLPr in terms of ranking with SIGMOD as the source object.}
  \label{insensitivity_pathsim_ndcg_sigmod}
\end{figure}

\begin{figure}[htb]
  \centering
  \includegraphics[width=0.6\textwidth]{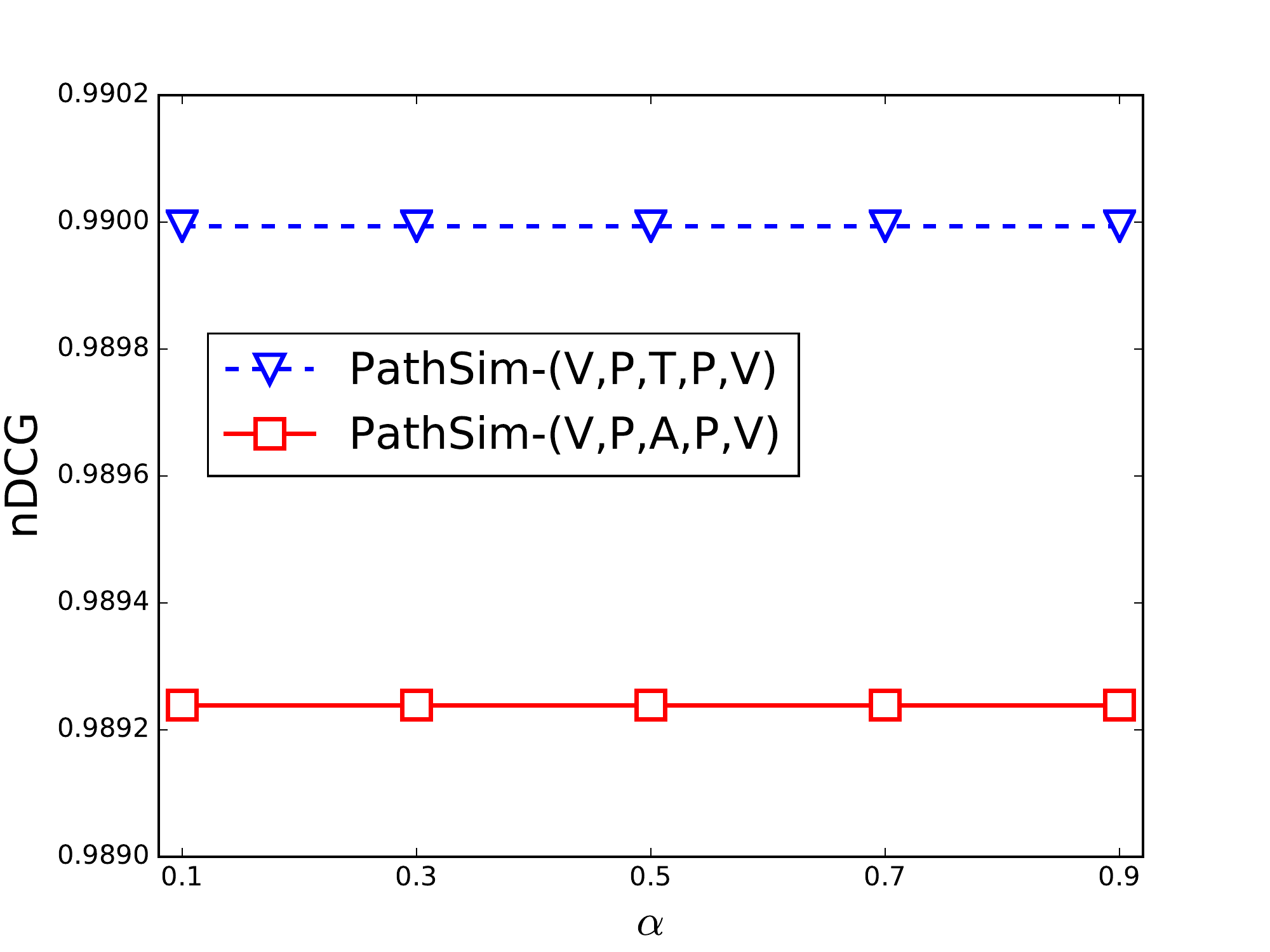}
  \caption{Sensitivity of PathSim to different meta-paths on DBLPr in terms of ranking with TKDE as the source object.}
  \label{insensitivity_pathsim_ndcg_tkde}
\end{figure}

//
Here, we investigate whether $Pathim$ and BPCRW are sensitive to different meta-paths in terms of ranking task as similarly
as the section \ref{subsec:insensitivity_nmi}.
Fig. \ref{insensitivity_pathsim_ndcg_kdd}, Fig. \ref{insensitivity_pathsim_ndcg_sigmod} and Fig. \ref{insensitivity_pathsim_ndcg_tkde}
respectively show
the $nDCG$ values of PathSim with different meta-paths when respectively selecting CIKM, SIGMOD and TKDE  as the source objects.
According to these figures, we know that 1) the $nDCG$ values for $(V,P,A,P,V)$ are a little larger than that for $(V,P,T,P,V)$ respectively with CIKM  and SIGMOD be the source objects;
2) the $nDCG$ values for $(V,P,T,P,V)$ are a little larger than that for $(V,P,A,P,V)$ respectively with TKDE be the source object.
That is to say, different meta-paths for PathSim yield different $nDCG$ values.

\begin{figure}[htb]
  \centering
  \includegraphics[width=0.6\textwidth]{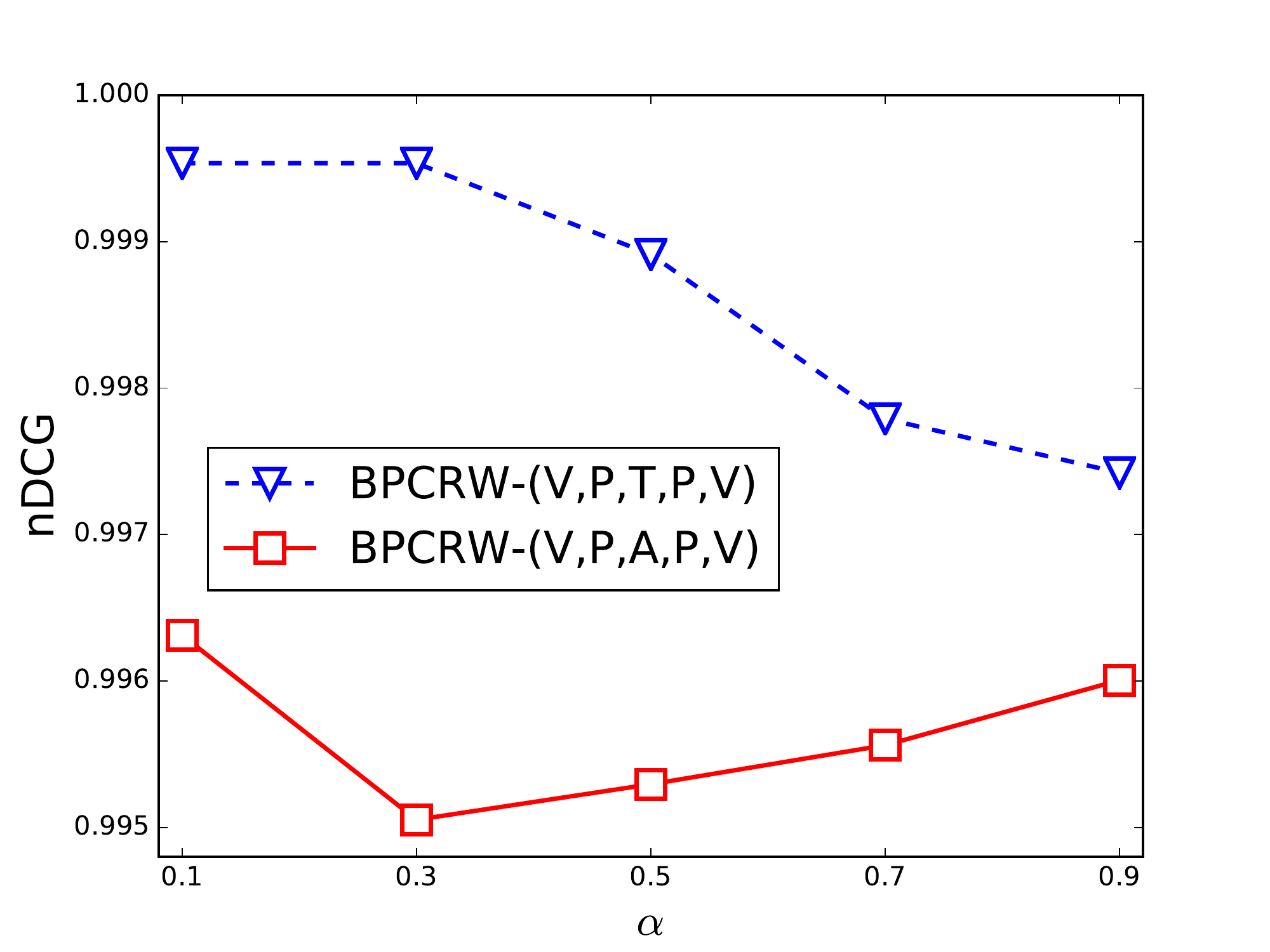}
  \caption{Sensitivity of BPCRW to different meta-paths on DBLPr in terms of ranking with CIKM as the source object.}
  \label{insensitivity_bpcrw_ndcg_kdd}
\end{figure}

\begin{figure}[htb]
  \centering
  \includegraphics[width=0.6\textwidth]{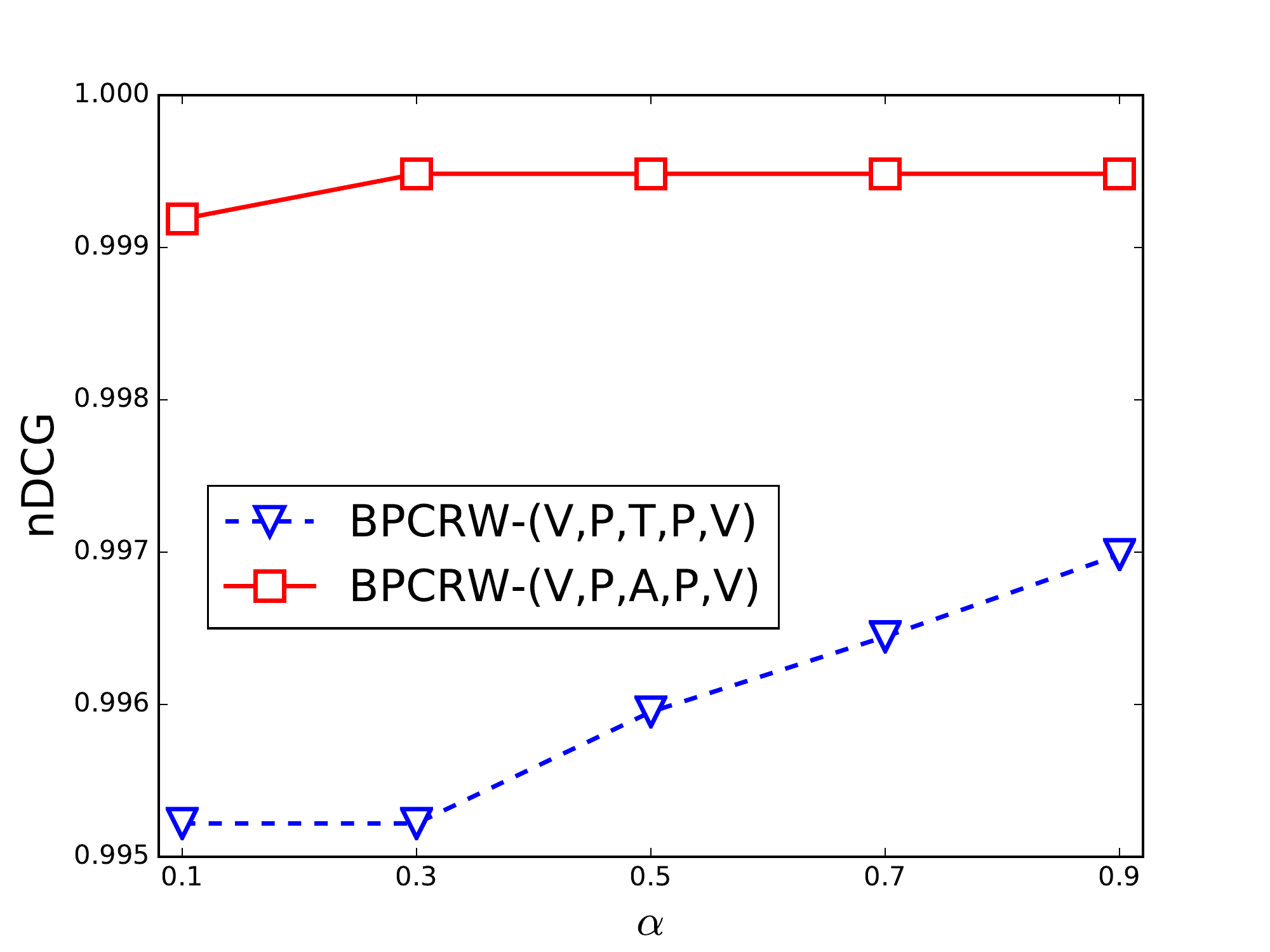}
  \caption{Sensitivity of BPCRW to different meta-paths on DBLPr in terms of ranking with SIGMOD as the source object.}
  \label{insensitivity_bpcrw_ndcg_sigmod}
\end{figure}

\begin{figure}[htb]
  \centering
  \includegraphics[width=0.6\textwidth]{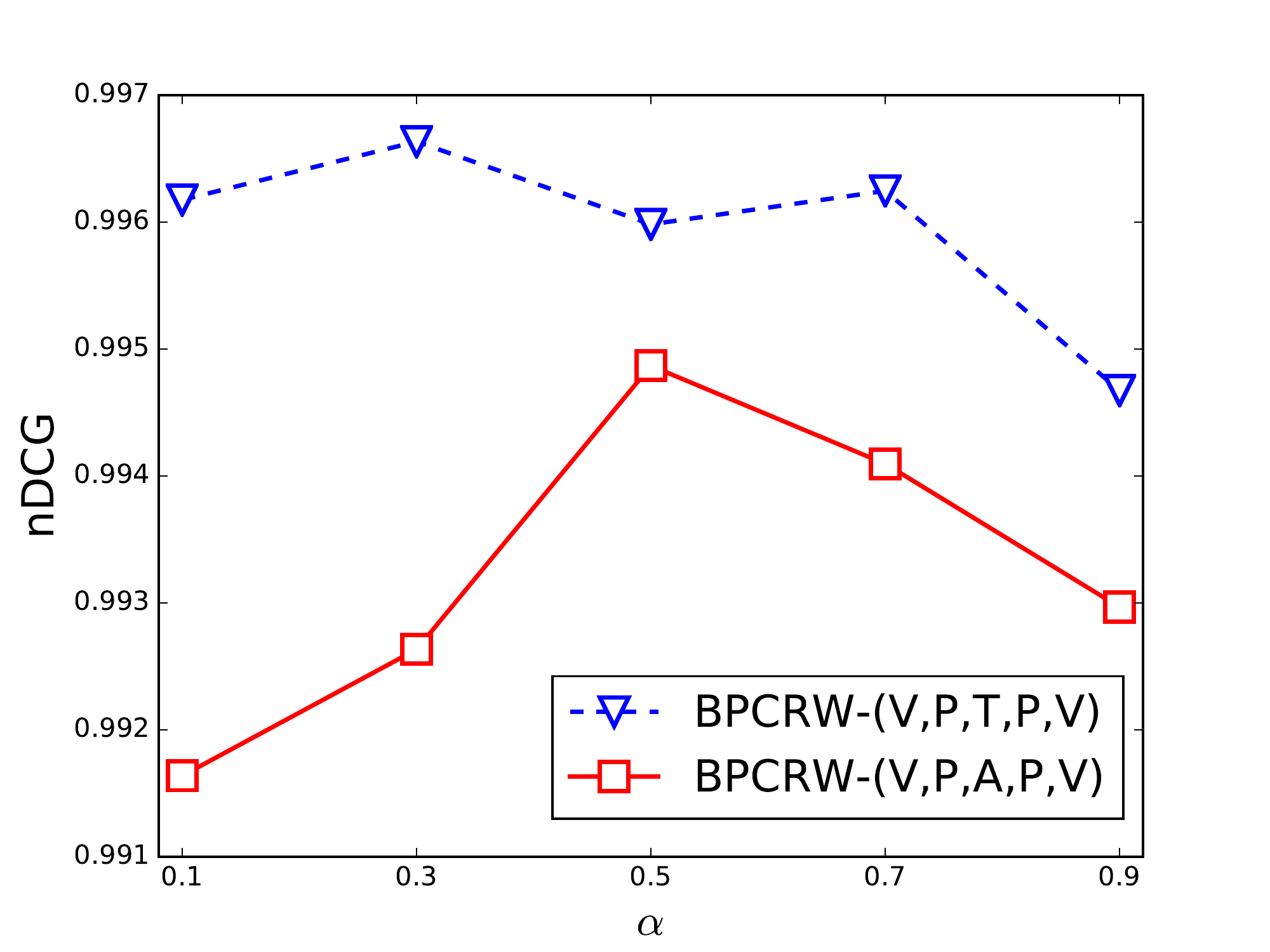}
  \caption{Sensitivity of BPCRW to different meta-paths on DBLPr in terms of ranking with TKDE as the source object.}
  \label{insensitivity_bpcrw_ndcg_tkde}
\end{figure}

Fig. \ref{insensitivity_bpcrw_ndcg_kdd}, Fig. \ref{insensitivity_bpcrw_ndcg_sigmod} and Fig. \ref{insensitivity_bpcrw_ndcg_tkde} respectively
shows the $nDCG$ values of BPCRW under different meta-paths when respectively selecting CIKM, SIGMOD and TKDE as the source objects.
According to these figures, we know that 1) the $nDCG$ values for the meta-path $(V,P,T,P,V)$ are a little larger than
that for the meta-path $(V,P,A,P,V)$ with CIKM and TKDE be the source objects;
2) the $nDCG$ values for the meta-path $(V,P,A,P,V)$ are a little larger than
that for the meta-path $(V,P,T,P,V)$ with SIGMOD be the source objects.
That is to say, different meta-paths for BPCRW yield different $nDCG$ values even though their gap is small.

In conclusion, both $Pathim$ and BPCRW are a little sensitive to different meta-paths in terms of ranking task.

\subsection{Comparison in terms of Clustering Quality}\label{subsec:clustering}

Now, we compare RMSS with local and global weighting strategies against the baselines in terms of clustering quality ($NMI$ \cite{SHYYW:2011}, the bigger, the better) on DBLPc and BioIN.
First, we compute the similarities between two objects respectively using these metrics.
That means we obtain a feature vector for each object. Then, we employ $k$-means algorithm to cluster these feature vectors (i.e. the objects).
For DBLPc, $Venue$ is selected as the source and target object type. Its benchmark is given according to the field of the venues.
For BioIN, $Gene$ is selected to the source and target object type. Its benchmark is extracted from the one used in paper \cite{benchmark}.
$k$ is set to the number of clusters in the benchmark.

\subsubsection{On BioIN}\label{subsubsec:clustering_slap}

//
Here, we compare the $NMI$ values yielded by RMSS with local and global weighting strategies against those yielded by the baselines under different decaying parameters $\lambda$ on BioIN.
For BSCSE and BPCRW, all the $NMI$ values under different biased parameters $\alpha$ are considered here.

\begin{figure}[htb]
  \centering
  \includegraphics[width=0.6\textwidth]{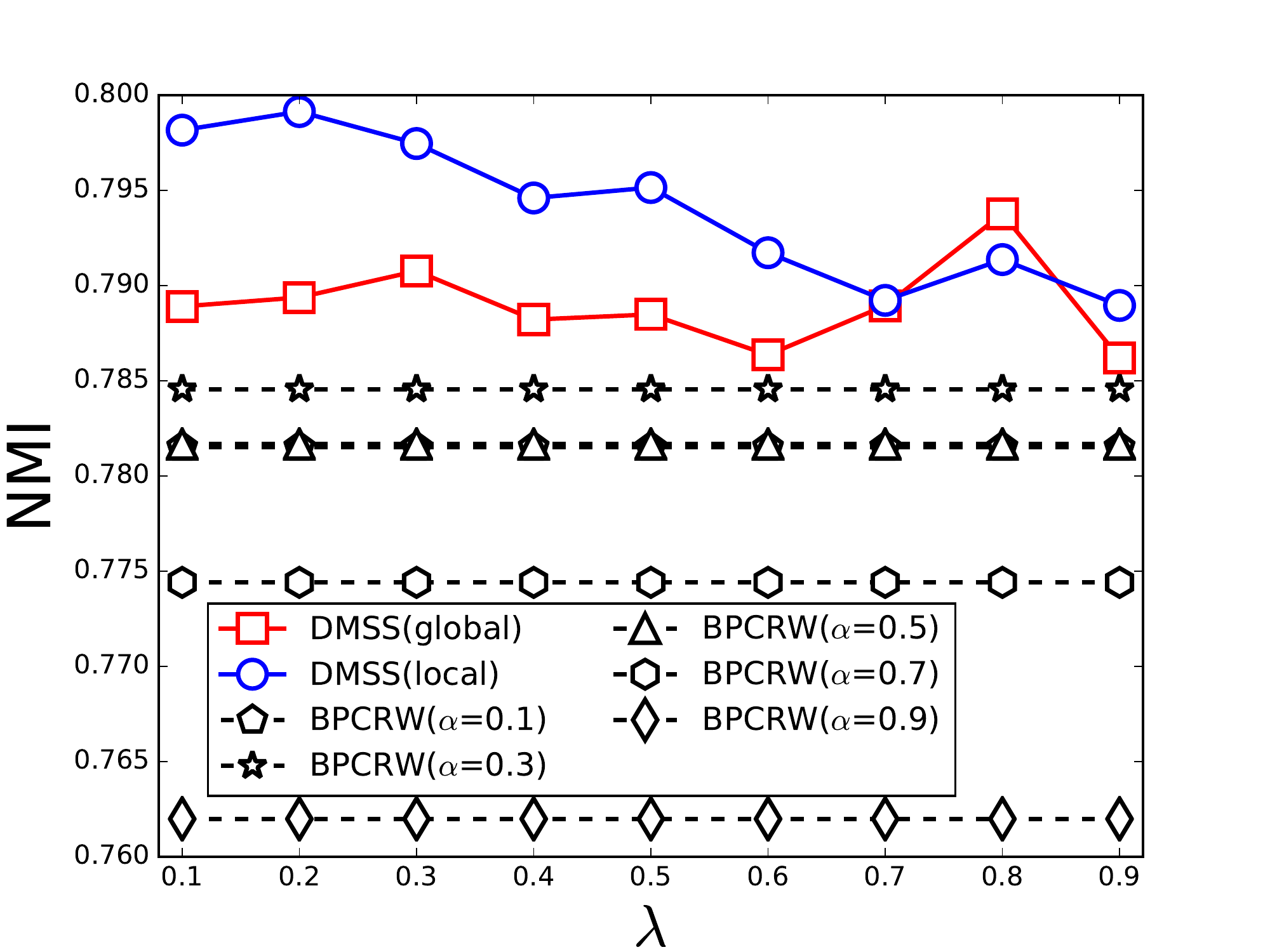}
  \caption{Comparison of $NMI$ for RMSS under different $\lambda$ with $NMI$ for BPCRW under different $\alpha$ on BioIN.}
  \label{slap_cluster_lambda_bpcrw}
\end{figure}

\begin{figure}[htb]
  \centering
  \includegraphics[width=0.6\textwidth]{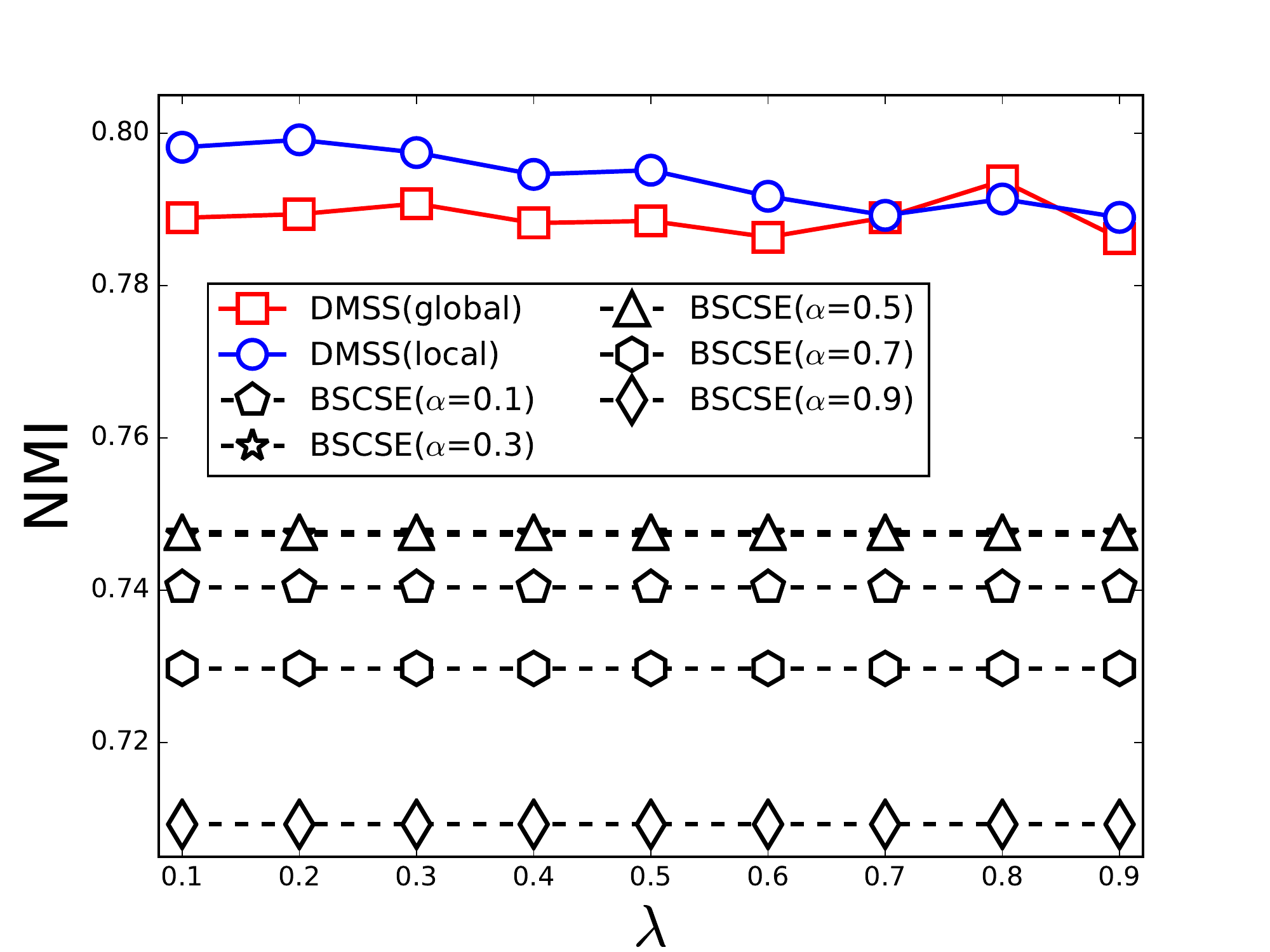}
  \caption{Comparison of $NMI$ for RMSS under different $\lambda$ with optimal $NMI$ for BSCSE under different $\alpha$ on BioIN.}
  \label{slap_cluster_lambda_bscse}
\end{figure}

\begin{figure}[htb]
  \centering
  \includegraphics[width=0.6\textwidth]{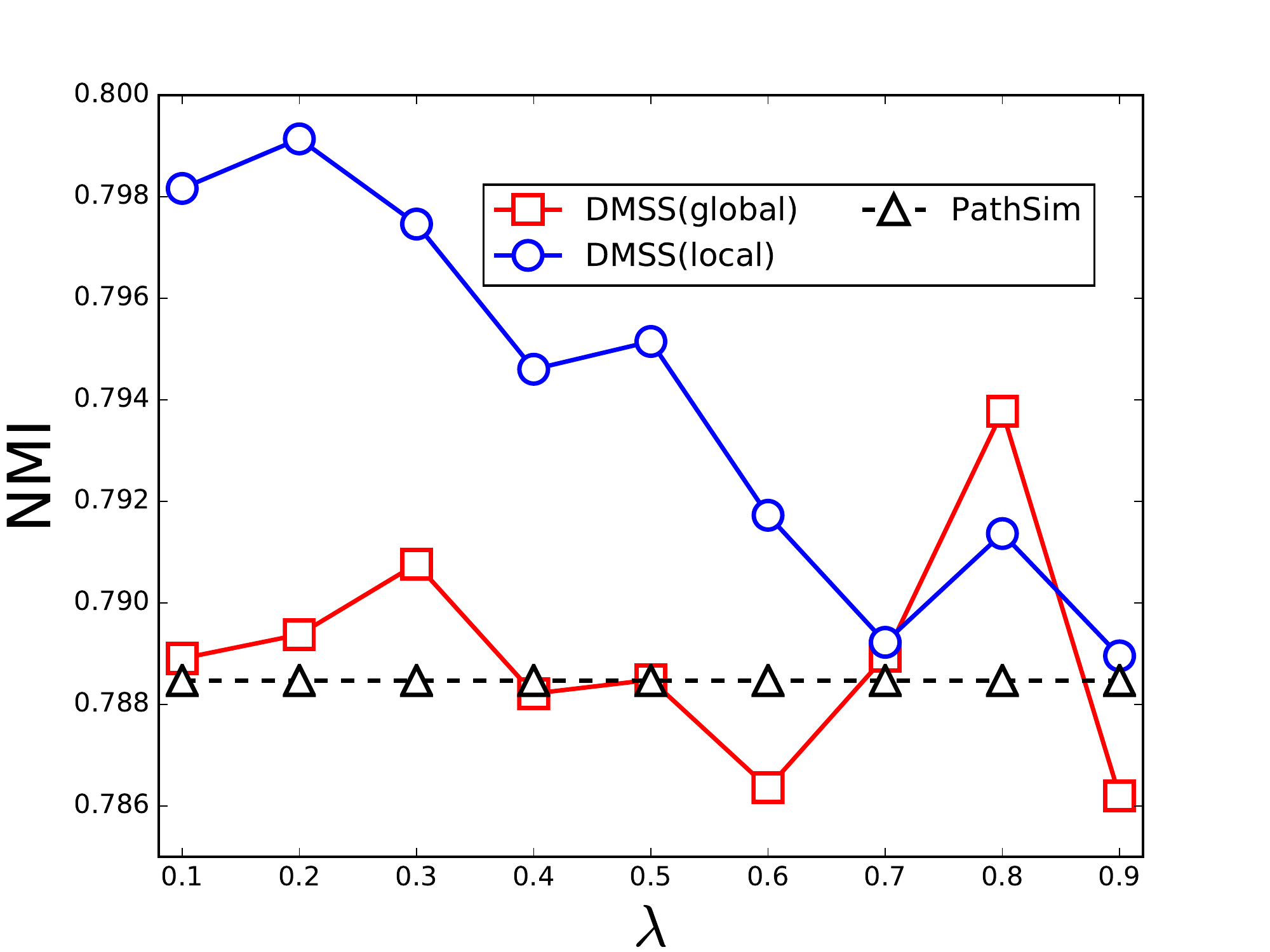}
  \caption{Comparison of $NMI$ for RMSS under different $\lambda$ with optimal $NMI$ for PathSim under different $\alpha$ on BioIN.}
  \label{slap_cluster_lambda_pathsim}
\end{figure}

Fig. \ref{slap_cluster_lambda_bpcrw}, Fig. \ref{slap_cluster_lambda_bscse} and Fig. \ref{slap_cluster_lambda_pathsim} present
the comparisons of $NMI$ respectively yielded by BPCRW, BSCSE and PathSim against that yielded by RMSS with local and global weighting strategies.
According to these figures, 1) the $NMI$ values yielded by RMSS with local weighting strategy are always larger than those yielded by the baselines (with different $\alpha$);
2) the $NMI$ values yielded by RMSS with global weighting strategy are larger than those yielded by BPCRW and BSCSE. For PathSim, its $NMI$ values may be larger than that yielded
by RMSS with global weighting strategy when $\lambda=0.4,0.6,0.9$. However, the difference is small.

In conclusion, RMSS with local weighting strategy outperforms the baselines, and RMSS with global weighting strategy is comparable to the baselines.

\subsubsection{On DBLPc}\label{subsubsec:clustering_dblp}

//
Here, we compare the $NMI$ values yielded by RMSS with local and global weighting strategies against those yielded by the baselines under different decaying parameters $\lambda$ on DBLPc.
For BSCSE and BPCRW, all the $NMI$ values under different biased parameters $\alpha$ are considered here.

\begin{figure}[htb]
  \centering
  \includegraphics[width=0.6\textwidth]{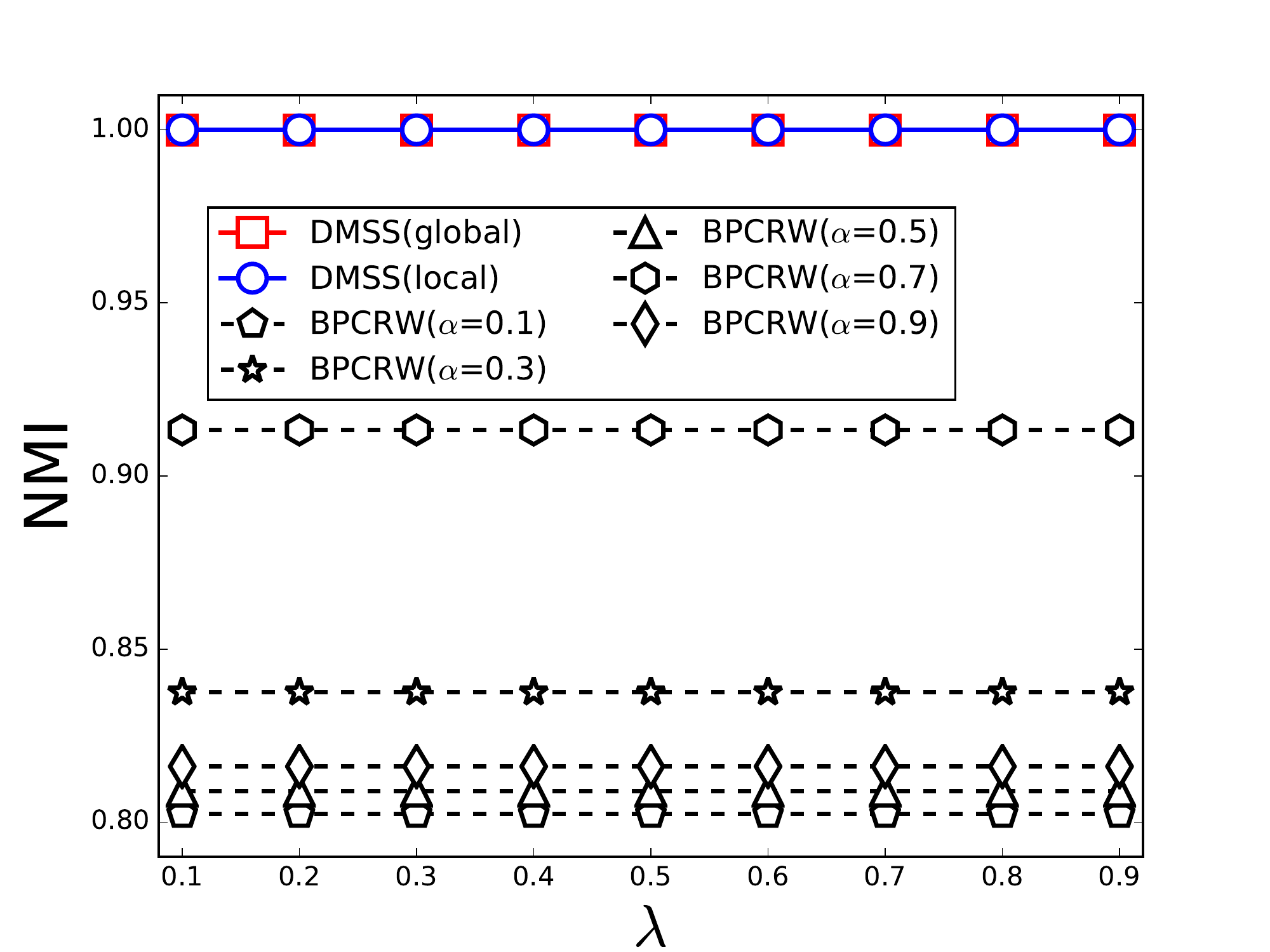}
  \caption{Comparison of $NMI$ for RMSS under different $\lambda$ with $NMI$ for BPCRW under different $\alpha$ on DBLPc.}
  \label{dblp_cluster_lambda_bpcrw}
\end{figure}

\begin{figure}[htb]
  \centering
  \includegraphics[width=0.6\textwidth]{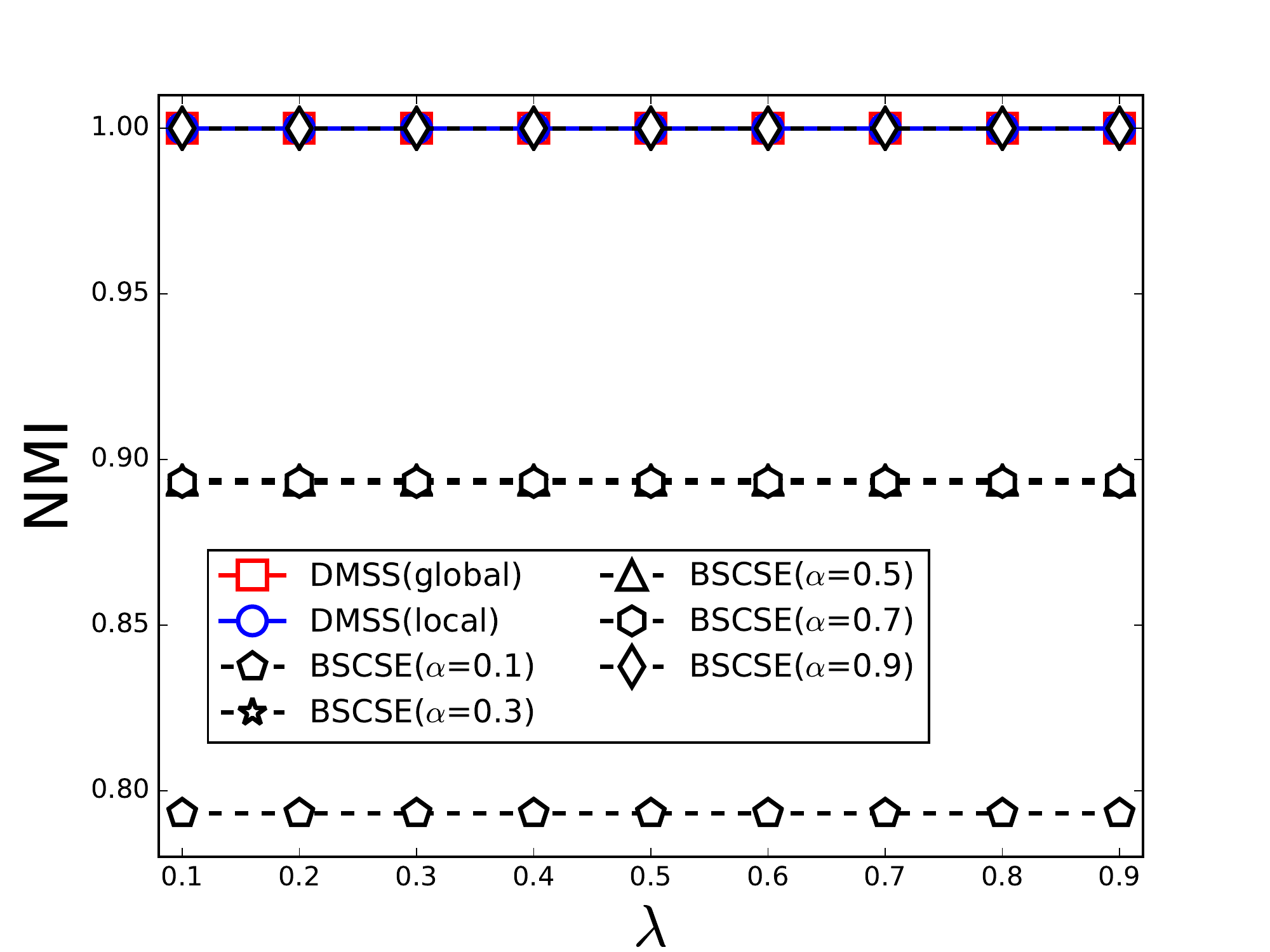}
  \caption{Comparison of $NMI$ for RMSS under different $\lambda$ with optimal $NMI$ for BSCSE under different $\alpha$ on DBLPc.}
  \label{dblp_cluster_lambda_bscse}
\end{figure}

\begin{figure}[htb]
  \centering
  \includegraphics[width=0.6\textwidth]{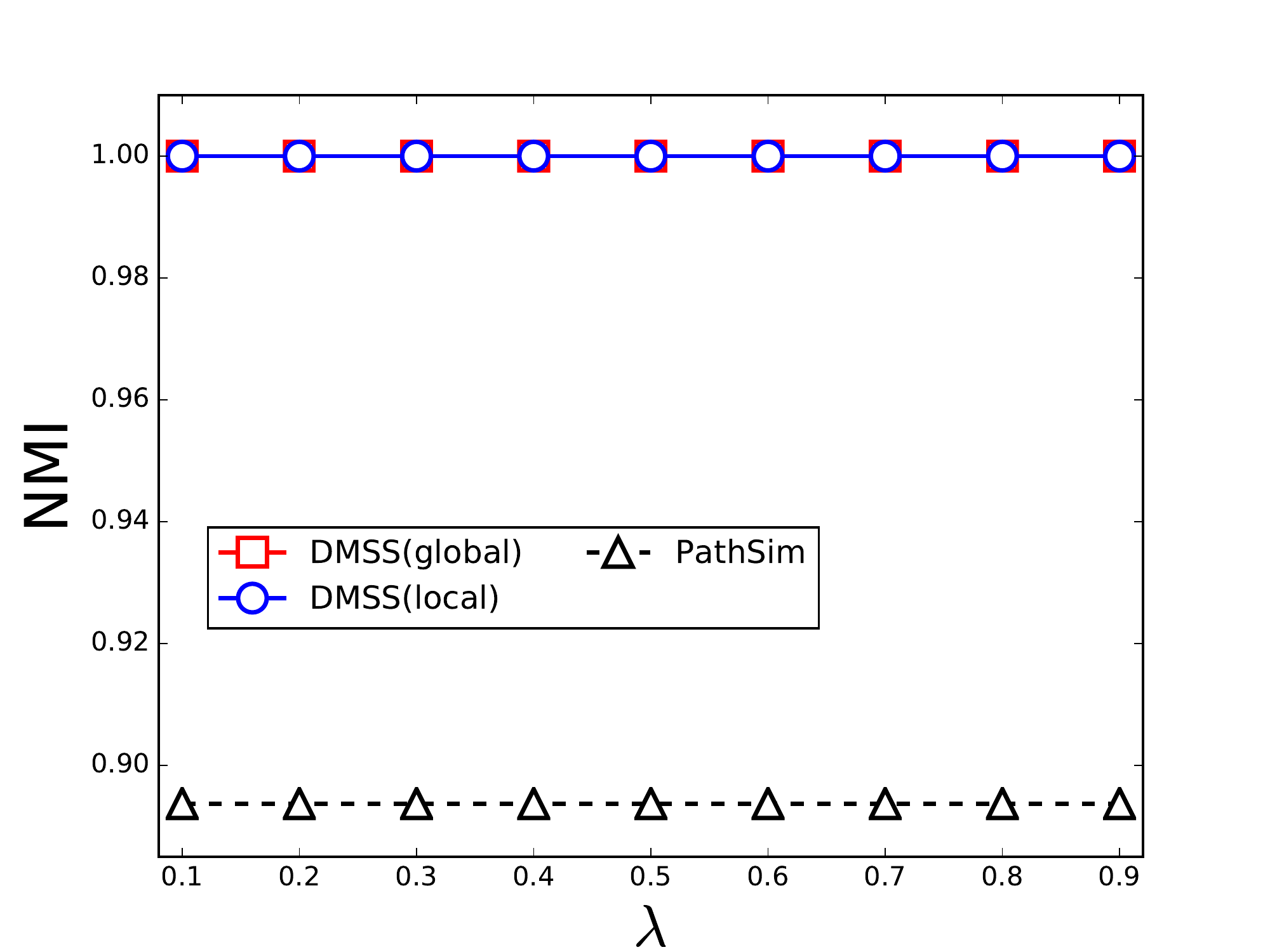}
  \caption{Comparison of $NMI$ for RMSS under different $\lambda$ with optimal $NMI$ for PathSim under different $\alpha$ on DBLPc.}
  \label{dblp_cluster_lambda_pathsim}
\end{figure}

Fig. \ref{dblp_cluster_lambda_bpcrw}, Fig. \ref{dblp_cluster_lambda_bscse} and Fig. \ref{dblp_cluster_lambda_pathsim} present
the comparisons of $NMI$ respectively yielded by BPCRW, BSCSE and PathSim against that yielded by RMSS with local and global weighting strategies.
As shown in Fig. \ref{dblp_cluster_lambda_bpcrw}, the $NMI$ values (no matter what $\lambda$ takes) yielded by RMSS with local and global strategies are always larger than those yielded by BPCRW;
As shown in Fig. \ref{dblp_cluster_lambda_bscse}, the $NMI$ values for BSCSE only under $\alpha=0.9$ is equal 1.0. And the $NMI$ yielded by RMSS with local and global strategies are always equal to 1.0.
This is larger than those yielded by BSCSE with different $\alpha$ except $\alpha=0.9$;
As shown in Fig. \ref{dblp_cluster_lambda_pathsim}, the $NMI$ values (1.0 no matter what $\lambda$ takes) yielded by RMSS with local and global strategies are always larger than thos yielded by PathSim.

In conclusion, RMSS with local and global strategies outperforms the baselines.

\subsection{Ranking Quality of RMSS}\label{subsec:ranking}

Now, we compare RMSS with local and global weighting strategies against the baselines in terms of ranking quality ($nDCG$ \cite{SHYYW:2011}, the higher, the better) on DBLPr.
First, we select three venues `CIKM', `SIGMOD' and `TKDE' as the source objects.
All the venues can be ranked as 0 (unrelated), 1 (slightly related), 2 (fairly related), 3 (highly related)
according to their similarities to the source object.
Then, we employ RMSS and the other baselines to compute the similarities between the source objects and the other venues.
As a result, we obtain $nDCG$ values respectively for the source objects.

\subsubsection{Ranking Quality}\label{subsubsec:rankingquality}

//
Now, we compare the $nDCG$ values yielded by RMSS with local and global weighting strategies and the baselines when selecting three source venues CIKM, SIGMOD and TKDE.
For each baseline, we respectively compare its $nDCG$ with different biased parameters $\alpha$ against that yielded by RMSS with local and global weighting strategy under different decaying parameters $\lambda$.

\begin{figure}[htb]
  \centering
  \includegraphics[width=0.6\textwidth]{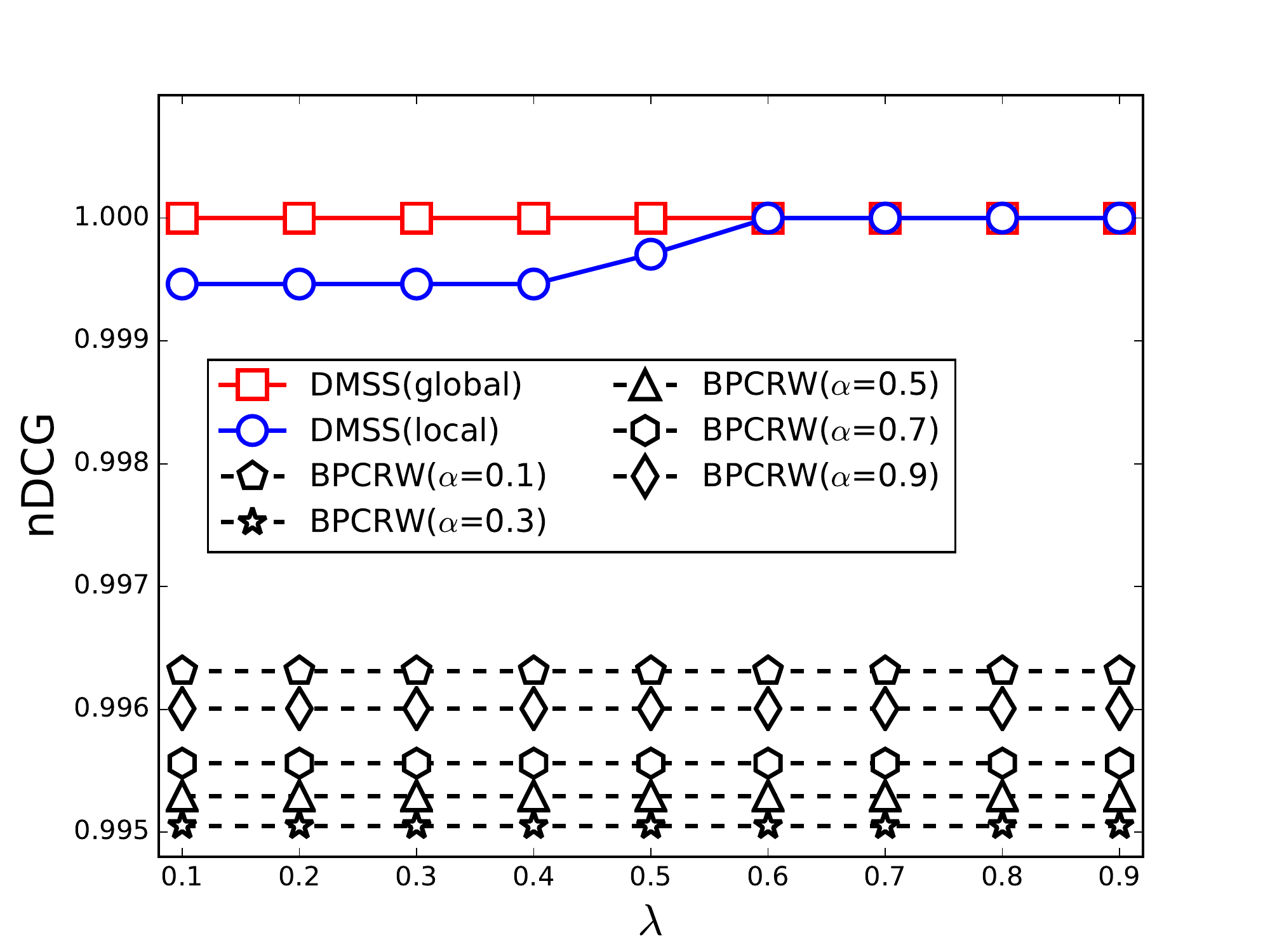}
  \caption{The comparison of BPCRW with different $\alpha$ against RMSS under the different decaying parameters $\lambda$. CIKM is the source object.}
  \label{bpcrw_cikm}
\end{figure}

\begin{figure}[htb]
  \centering
  \includegraphics[width=0.6\textwidth]{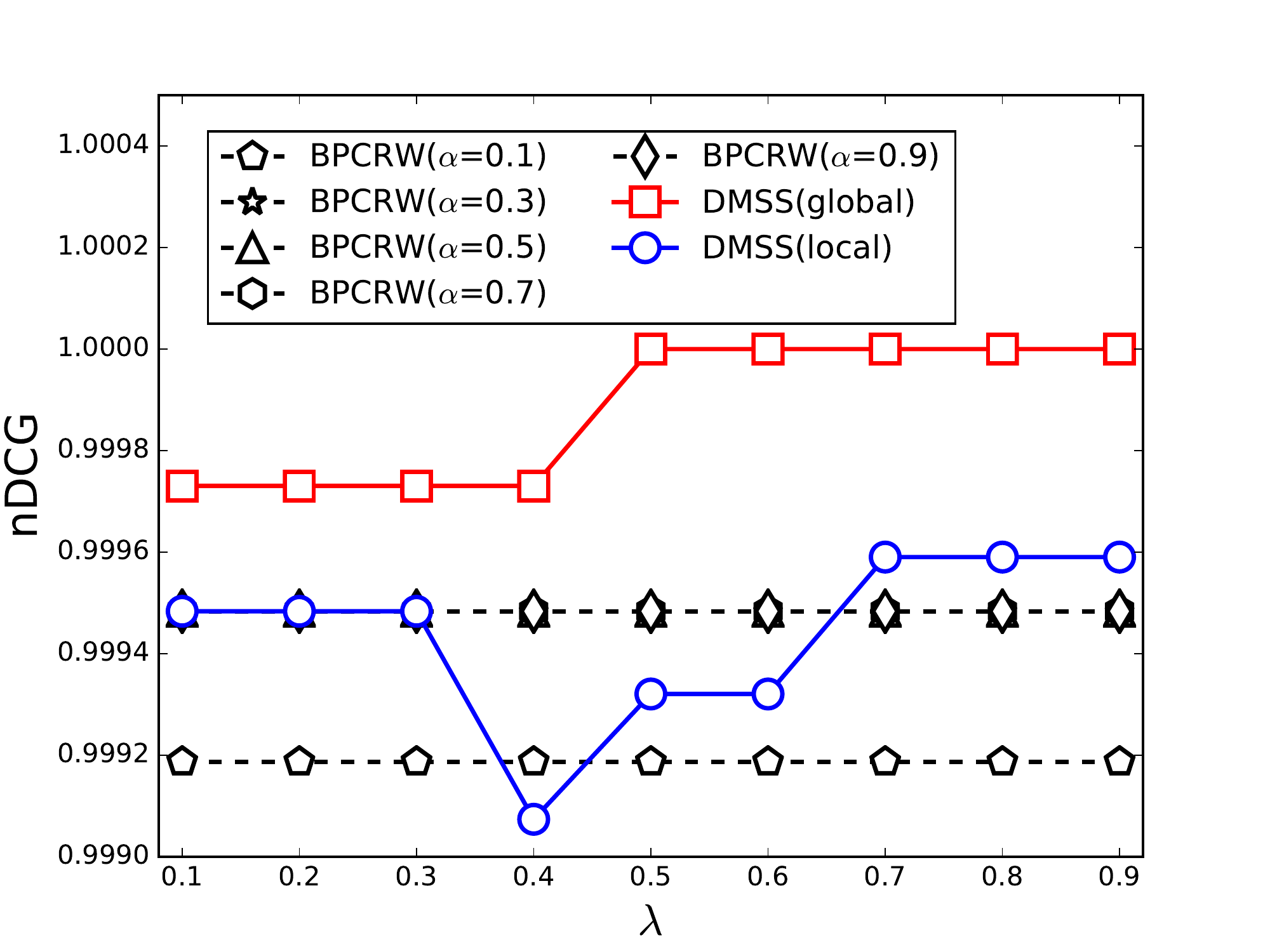}
  \caption{The comparison of BPCRW with different $\alpha$ against RMSS under the different decaying parameters $\lambda$. SIGMOD is the source object.}
  \label{bpcrw_sigmod}
\end{figure}

\begin{figure}[htb]
  \centering
  \includegraphics[width=0.6\textwidth]{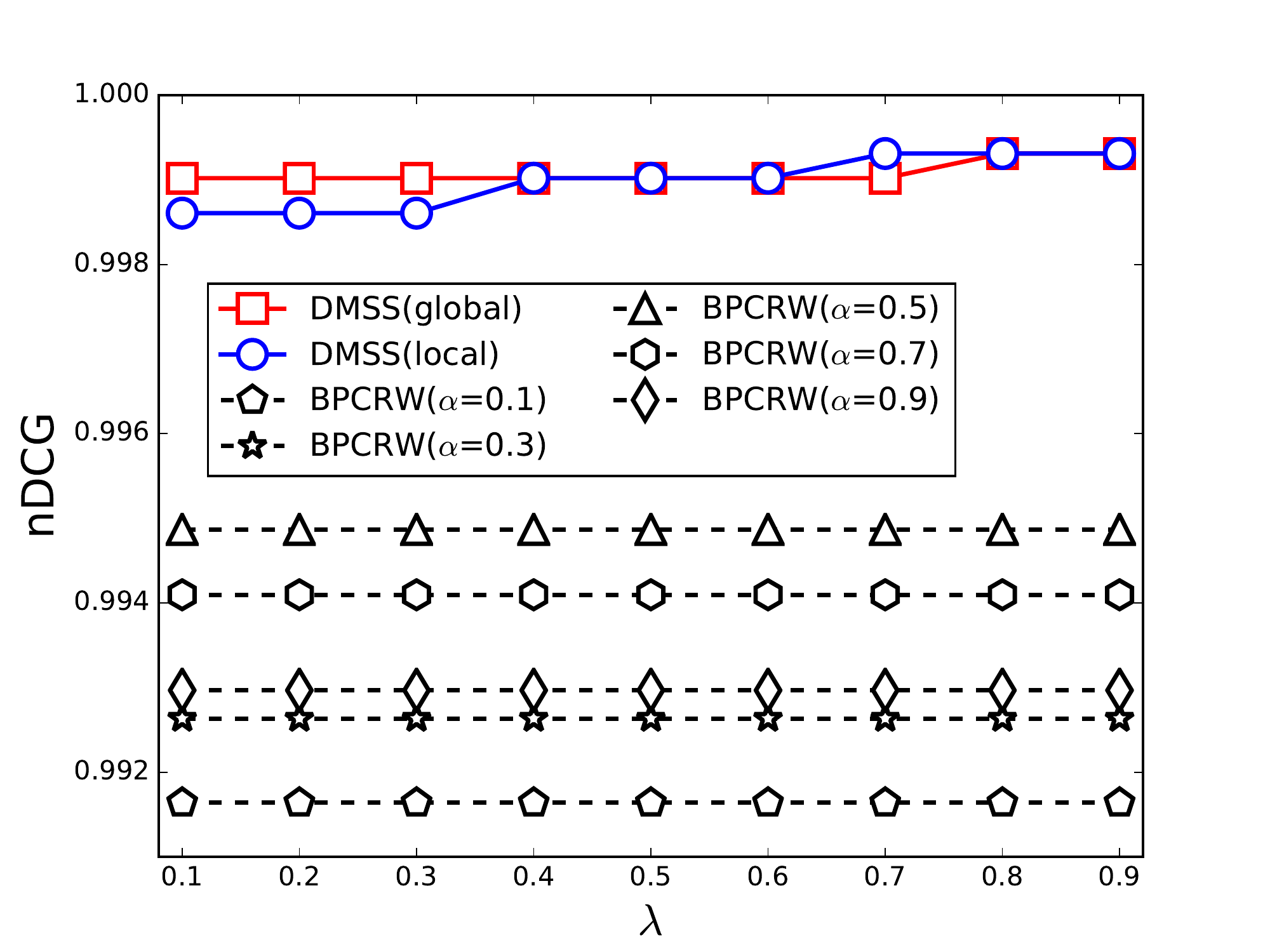}
  \caption{The comparison of BPCRW with different $\alpha$ against RMSS under the different decaying parameters $\lambda$. TKDE is the source object.}
  \label{bpcrw_tkde}
\end{figure}

\begin{figure}[htb]
  \centering
  \includegraphics[width=0.6\textwidth]{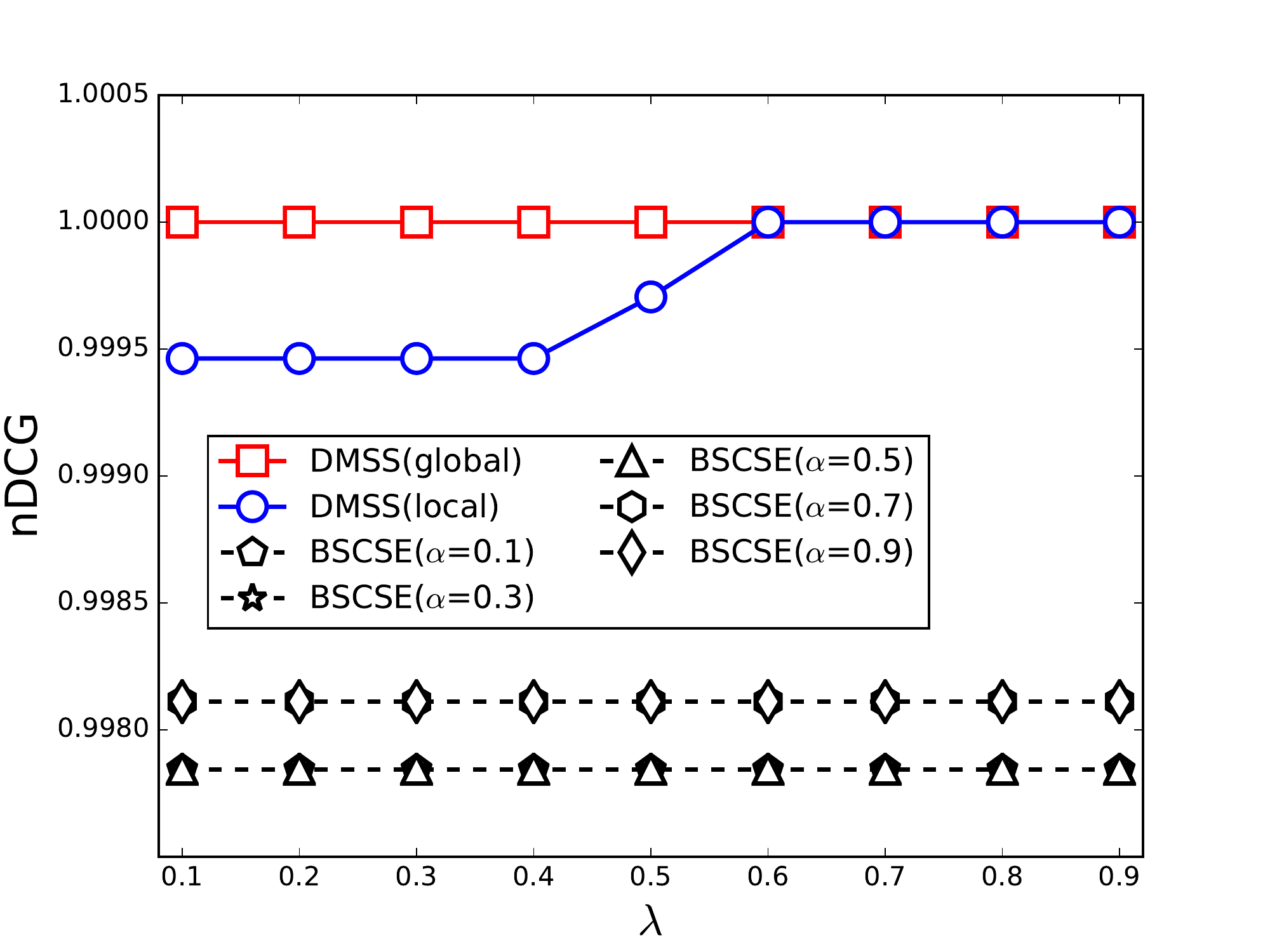}
  \caption{The comparison of BSCSE with different $\alpha$ against RMSS under the different decaying parameters $\lambda$. CIKM is the source object.}
  \label{bscse_cikm}
\end{figure}

\begin{figure}[htb]
  \centering
  \includegraphics[width=0.6\textwidth]{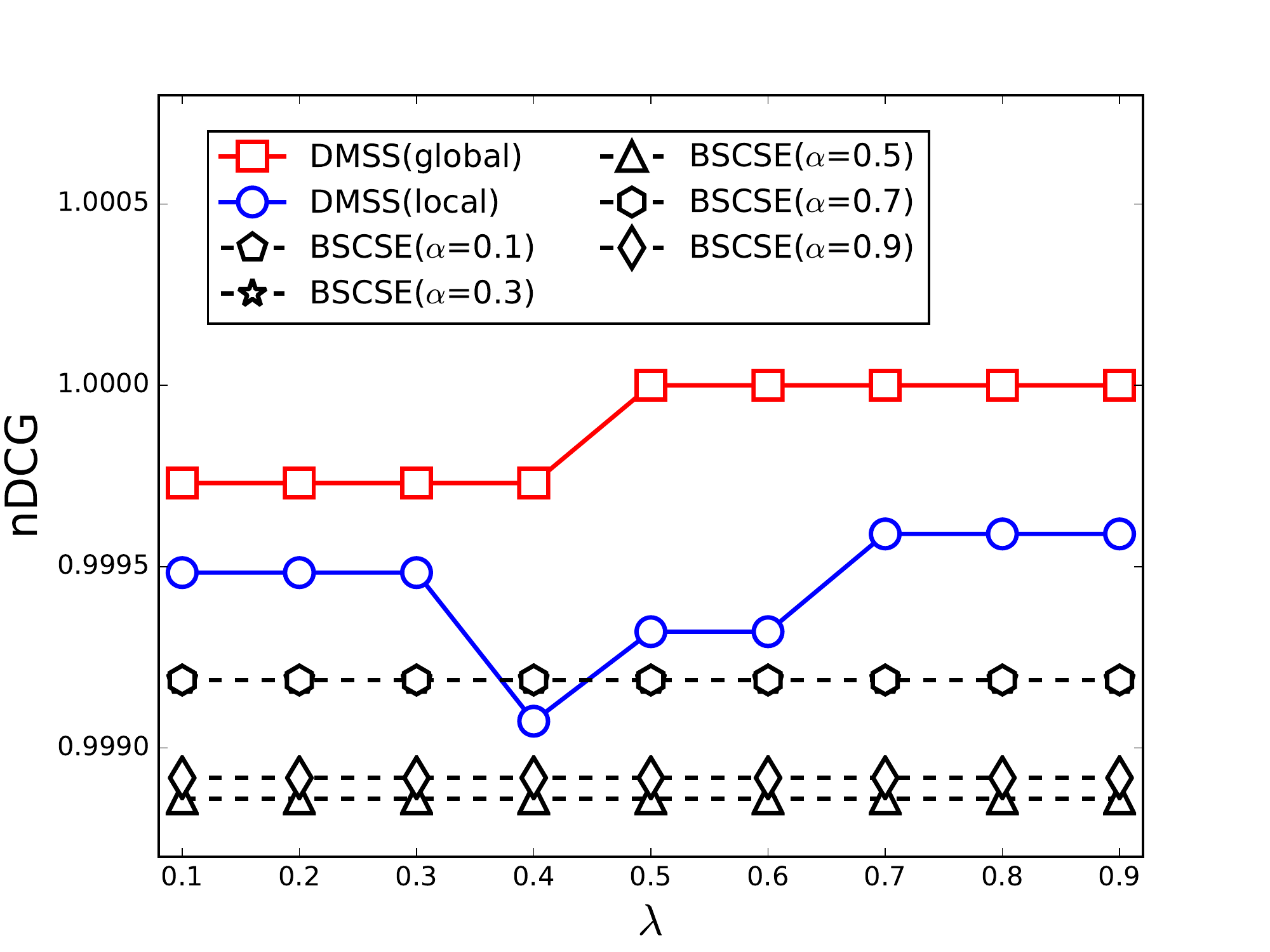}
  \caption{The comparison of BSCSE with different $\alpha$ against RMSS under the different decaying parameters $\lambda$. SIGMOD is the source object.}
  \label{bscse_sigmod}
\end{figure}

\begin{figure}[htb]
  \centering
  \includegraphics[width=0.6\textwidth]{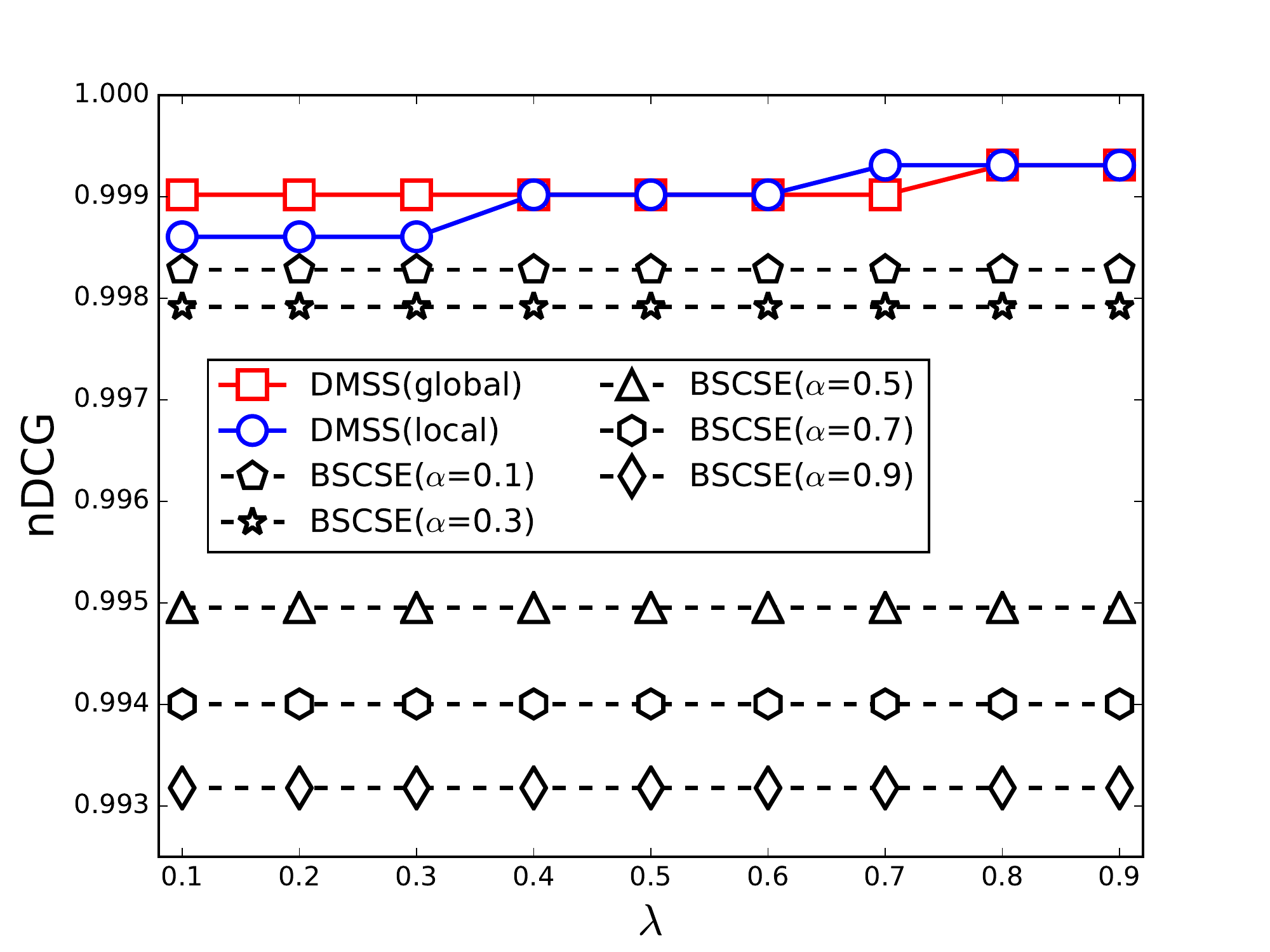}
  \caption{The comparison of BSCSE with different $\alpha$ against RMSS under the different decaying parameters $\lambda$. TKDE is the source object.}
  \label{bscse_tkde}
\end{figure}

\begin{figure}[htb]
  \centering
  \includegraphics[width=0.6\textwidth]{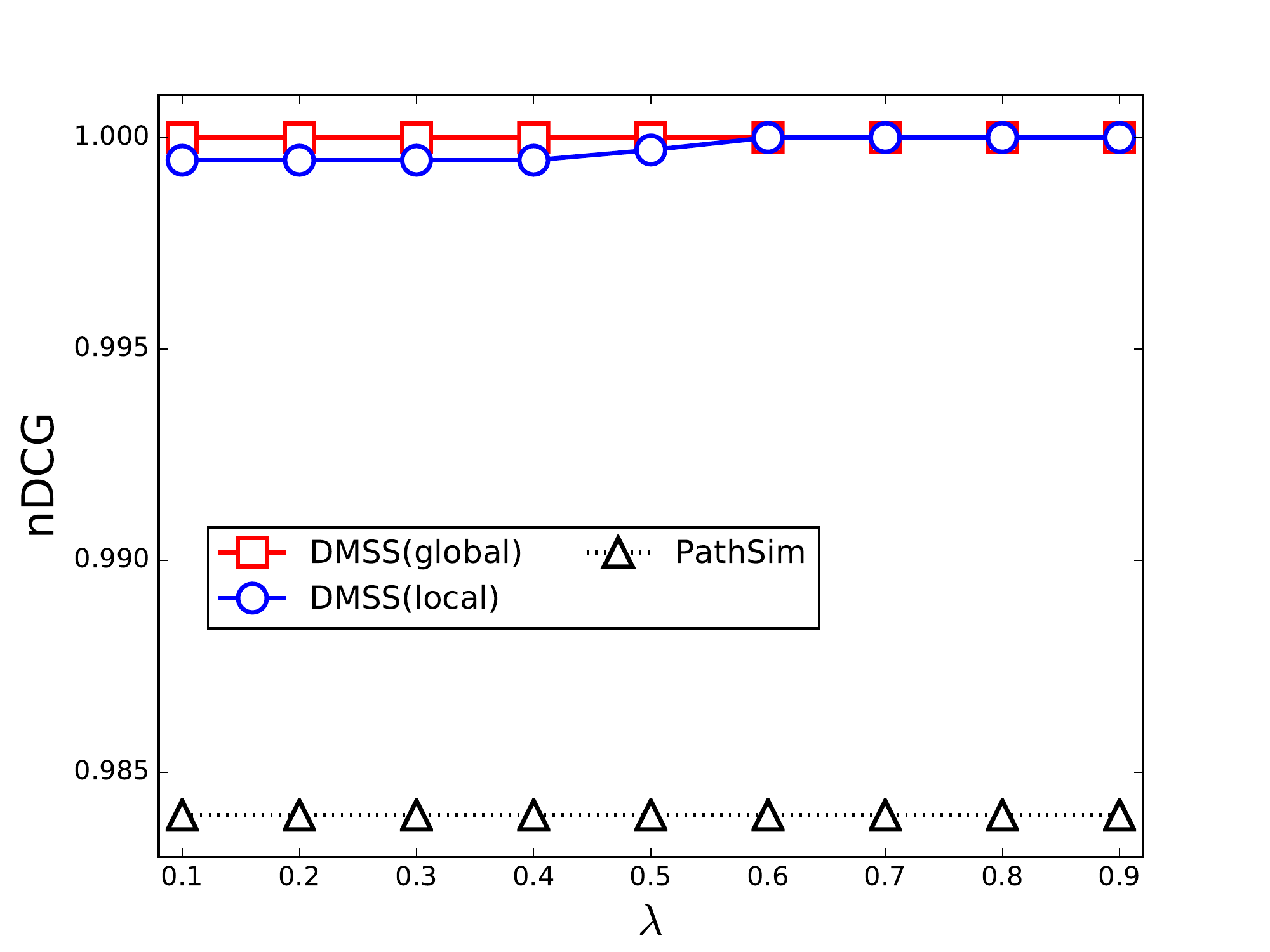}
  \caption{The comparison of PathSim with different $\alpha$ against RMSS under the different decaying parameters $\lambda$. CIKM is the source object.}
  \label{pathsim_cikm}
\end{figure}

\begin{figure}[htb]
  \centering
  \includegraphics[width=0.6\textwidth]{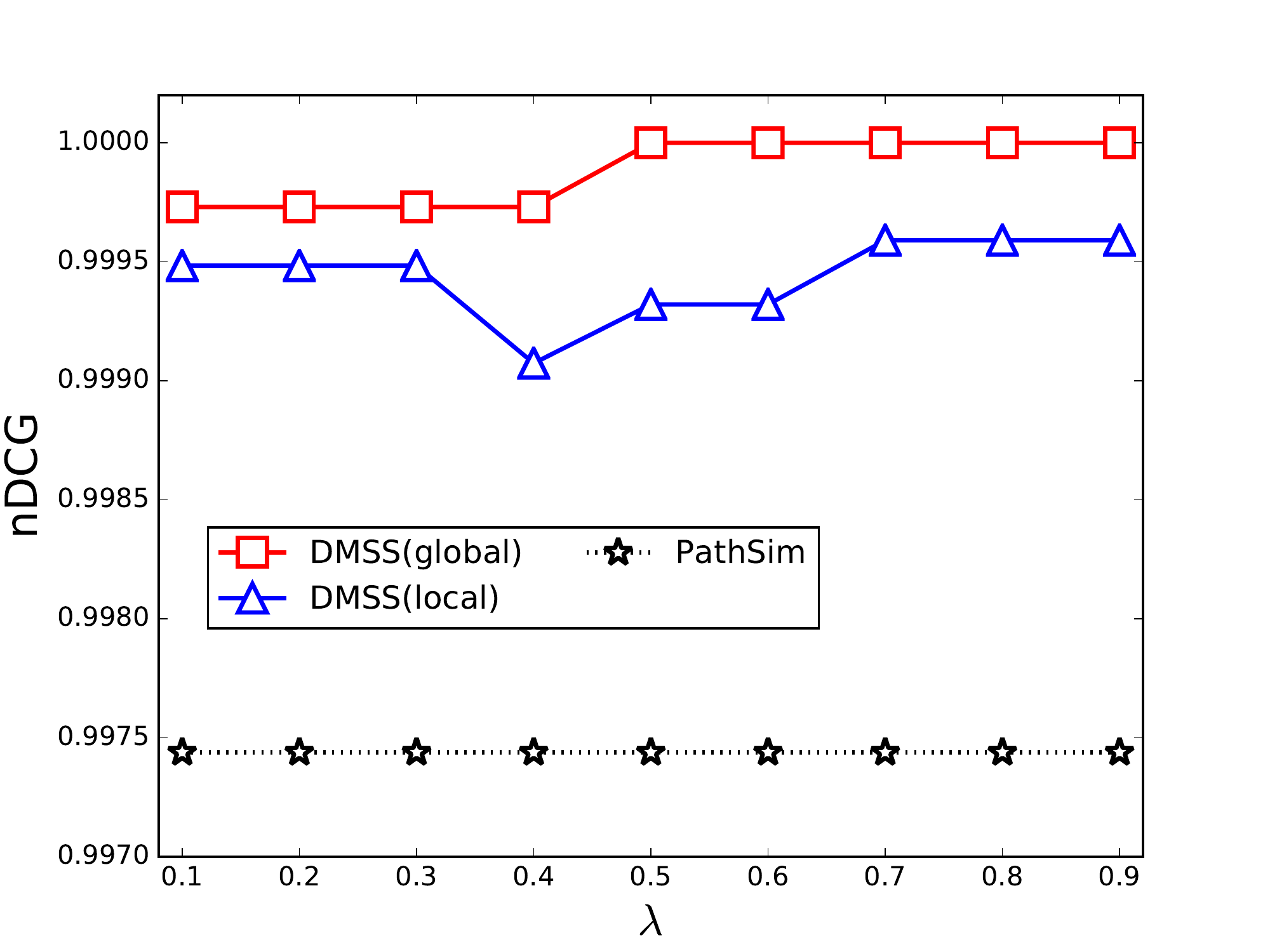}
  \caption{The comparison of PathSim with different $\alpha$ against RMSS under the different decaying parameters $\lambda$. SIGMOD is the source object.}
  \label{pathsim_sigmod}
\end{figure}

\begin{figure}[htb]
  \centering
  \includegraphics[width=0.6\textwidth]{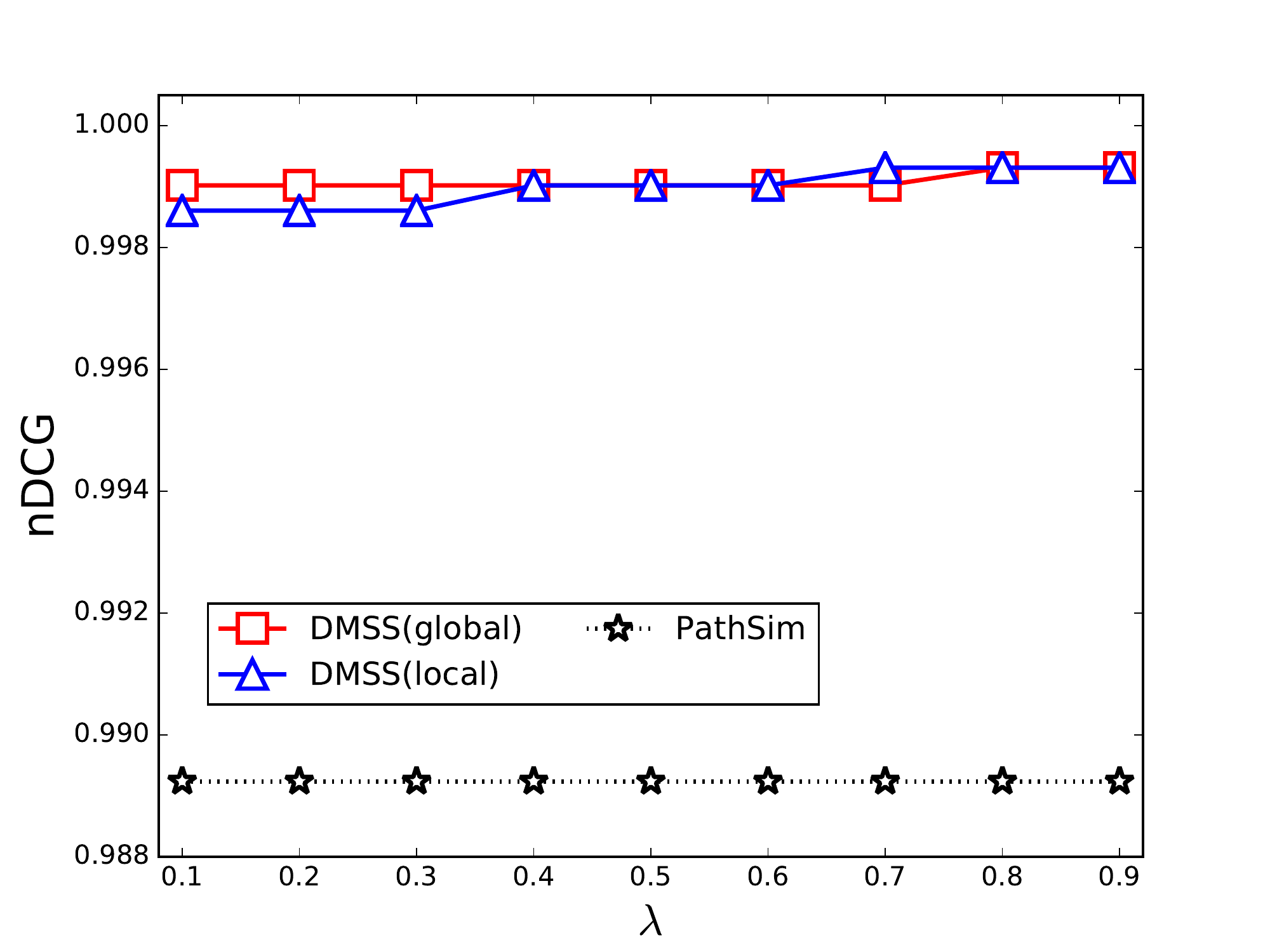}
  \caption{The comparison of PathSim with different $\alpha$ against RMSS under the different decaying parameters $\lambda$. TKDE is the source object.}
  \label{pathsim_tkde}
\end{figure}

For BPCRW, Fig. \ref{bpcrw_cikm}, Fig. \ref{bpcrw_sigmod} and Fig. \ref{bpcrw_tkde} illustrate the comparisons of BPCRW with different $\alpha$ against RMSS under different $\lambda$.
According to these figures, we know that 1) RMSS with global weighting strategy performs slightly better than BPCRW with different $\alpha$;
2) RMSS with local weighting strategy slightly outperforms BPCRW with different $\alpha$ when selecting CIKM and TKDE as the source objects.
However, the $nDCG$ yielded by RMSS with local weighting strategy is a little smaller than that yielded by BPCRW when the source object is SIGMOD and $\lambda=0.4,0.5,0.6$.

For BSCSE, Fig. \ref{bscse_cikm}, Fig. \ref{bscse_sigmod} and Fig. \ref{bscse_tkde} illustrate the comparisons of BSCSE with different $\alpha$ against RMSS under different $\lambda$.
According to these figures, we know that 1) RMSS with global weighting strategy slightly performs better than BSCSE with different $\alpha$;
2) RMSS with local weighting strategy slightly outperforms BSCSE with different $\alpha$ when selecting CIKM and TKDE as the source objects.
However, the $nDCG$ yielded by RMSS with local weighting strategy is a little smaller than that yielded by $BSCSE$ when the source object is SIGMOD and $\lambda=0.4$.

For PathSim, Fig. \ref{pathsim_cikm}, Fig. \ref{pathsim_sigmod} and Fig. \ref{pathsim_tkde} illustrate the comparisons of PathSim with different $\alpha$ against RMSS under different $\lambda$.
According to these figures, we know that RMSS with local and global weighting strategies slightly performs better than PathSim.

In summary, RMSS on the whole performs better than the baselines in terms of the ranking task.

\subsection{Time Efficiency}\label{subsec:timeefficiency}

Here, we evaluate the time efficiency of RMSS, BPCRW, PathSim and BSCSE on BioIN, DBLPc and DBLPr.
Table \ref{time} shows the running time of computing the similarities among all objects using these metrics.
According to this table, we know that BPCRW performs much better than BSCSE and PathSim in terms of time efficiency.
In addition, we discover that RMSS spends most of its time on the offline part, and spends a little time (even lower than BPCRW) on the online part.
In practice, we can use distributed computing environment equipped with GPU to periodically compute the offline part, and then use the online part to provide real-time service for users.

Furthermore, we discover that the running time of BSCSE on DBLPc and DBLPr is much larger than that on BioIN.
This is because the nature of DBLPc and DBLPr is different from that of BioIN.
On DBLPc or DBLPr, we only consider the meta-structure $(V,P,(A,T),P,A)$.
Each venue accepts a lot of papers. This causes too many author-term pairs. The algorithm for computing BSCSE spends too much time on traversing these pairs.
On BioIN, we consider two meta-structures $(G,CC,(Si,Sub),CC,G)$ and $(G,(GO,Ti),G)$.
For $(G,(GO,Ti),G)$, each gene links to a small number of pairs whose entries respectively belong to $GO$ and $Ti$.
For $(G,CC,(Si,Sub),CC,G)$, each gene links to a small number of chemical compounds and these chemical compounds link to a small number of pairs whose entries respectively belong to $Si$ and $Sub$.
Therefore, The algorithm for computing BSCSE spends a little time on traversing these pairs.

\begin{table}\footnotesize
  \centering
  \caption{Average Running Time (Sec) of computing similarity matrices using RMSS and the baselines}
  \begin{tabular}{|c|c|c|c|c|}
    \hline
     \multicolumn{2}{|c|}{\diagbox{Metric}{Dataset}} & DBLPc & DBLPr & BioIN \\
    \hline
      \multicolumn{2}{|c|}{BPCRW} & $10.407$ & $9.709$ & $5.320$ \\
    \hline
      \multicolumn{2}{|c|}{PathSim} & 229.642 & 162.693 & 630.083 \\
    \hline
      \multicolumn{2}{|c|}{BSCSE} & 10305.906 & 9896.015 & 9.913 \\
    \hline
      \multirow{2}{*}{RMSS (local)} & Offline & 15067.527 & 13480.6.9 & 652.957 \\
    \cline{2-5}
                                      & Online & $\mathbf{0.0077}$ & $\mathbf{0.0079}$ & $\mathbf{3.0484}$ \\
    \hline
      \multirow{2}{*}{RMSS (global)} & Offline & 14547.407 & 12517.916 & 652.733 \\
    \cline{2-5}
                                       & Online & $\mathbf{0.0237}$ & $\mathbf{0.0083}$ & $\mathbf{3.1769}$ \\
    \hline
  \end{tabular}
  \label{time}
\end{table}

\section{Conclusion}\label{sec:conclusion}

In this paper, we propose RMSS, a recurrent meta-structure based similarity metric in HINs.
The recurrent meta-structure can be constructed automatically.
To extract semantics encapsulated in the RecurMS, we first decompose it into several recurrent meta-paths and recurrent meta-trees, and
then combine the commuting matrices of the recurrent meta-paths and recurrent meta-trees according to different weights.
It is noteworthy that we propose two kinds of weighting strategies to determine the weights of different schematic structures.
As a result, RMSS is defined by the combination of these commuting matrices.
Experimental evaluations show that (1) PathSim, BPCRW and BSCSE are sensitive to meta-paths or meta-structures;
(2) RMSS with local and global weighting strategies outperforms the baselines in terms of clustering and ranking.
In conclusion, the proposed RMSS is insensitive to different schematic structures, and outperforms the state-of-the-art metrics in terms of clustering and ranking tasks.
That is to say, using RMSS can robustly and exactly evaluate the similarities between objects.

\section{Acknowledgement}\label{sec:ackknowledgement}

The work is supported in part by the National Science Foundation of China (Grant Numbers: 61472299 and 61672417).
Any opinions, findings and conclusions expressed here are those of the authors and do not necessarily reflect the views of the funding agencies.

\bibliographystyle{ACM-Reference-Format}
\bibliography{sample-bibliography}

\end{document}